%

\documentclass[11pt]{article}
\usepackage[utf8x]{inputenc}
\usepackage[OT2,T1]{fontenc}
\usepackage{enumitem}
\makeatletter
\newcommand{\mylabel}[2]{#2\def\@currentlabel{#2}\label{#1}}
\makeatother
\usepackage{braket}
\usepackage{amsfonts,amssymb}
\usepackage{color}
\usepackage{amsthm, amssymb, latexsym, amsmath, dsfont
}
\usepackage[margin=1.2 in]{geometry}
\usepackage[ngerman,english]{babel}
\usepackage[utf8x]{inputenc}
\usepackage[OT2,T1]{fontenc}
\usepackage{mathabx}
\usepackage{booktabs}
\usepackage{siunitx}

\usepackage{hyperref}
\makeatletter
\def\eqref{\@ifstar\@eqref\@@eqref}
\def\@eqref#1{\textup{\tagform@{\ref*{#1}}}}
\def\@@eqref#1{\textup{\tagform@{\ref{#1}}}}
\makeatother

\addtolength{\textwidth}{0.8cm}
\addtolength{\oddsidemargin}{-0.4cm}
\addtolength{\evensidemargin}{-0.4cm}


\newtheorem{theorem}{Theorem}[section]

\newtheorem{lemma}[theorem]{Lemma}

\newtheorem{proposition}[theorem]{Proposition}
\newtheorem{corollary}[theorem]{Corollary}

\theoremstyle{definition}

\newtheorem{remark}{Remark}

\newcommand{\DETAILS}[1]{}

\renewcommand{\Im}{\mathrm{Im}}

\newcommand{\E}{E}

\newcommand{\C}{\mathbb{C}}
\newcommand{\R}{\mathbb{R}}

\newcommand{\N}{\mathbb{N}}

\newcommand{\D}{{\cal D}}
\newcommand{\cA}{{\cal{A}}}

\newcommand{\cD}{{\cal{D}}}

\newcommand{\cH}{{\cal{H}}}

\newcommand{\cO}{{\cal{O}}}

%




\renewcommand{\i}{\mathrm{i}}

\newcommand{\vphi}{{\varphi}}
\newcommand{\e}{{\mathrm{e}}}
\newcommand{\vep}{{{\varepsilon}}}           



\newcommand{\scp}[2]{\left\langle #1\text{,}\, #2\right\rangle}
\newcommand{\tr}[1]{\mathrm{Tr}#1}





\begin{document}
	
	\title{Convergence rate towards the fractional Hartree-equation with singular potentials in higher Sobolev norms}
	
	\author{Michael Hott\thanks{ Department of Mathematics,
			University of Texas at Austin, 2515 Speedway,
			Austin, TX 78712, USA.}}
	
	\bigskip
	
	\bigskip
	
	\bigskip

	\bigskip
	\bigskip

	\maketitle
	
	\begin{abstract}
	   This is a work extending the results of \cite{AH} and \cite{AHH}. We want to show convergence of the Schr\"odinger equation towards the Hartree equation with more natural assumptions. We first consider both the defocusing and the focusing semi-relativistic Hartree equation. We show that the tools of \cite{P} are essentially sufficient for deriving the Hartree equation in those cases. Next, we extend this result to the case of fractional Hartree equations with possibly more singular potentials than the Coulomb potential. Finally, we show that, in the non-relativistic case, one can derive the Hartree equation assuming only $L^2$-data in the case of potentials that are more than or as regular as the Coulomb potential. We also derive the Hartree equation for more singular potentials in this case. This work is inspired by talks given at the conference 'MCQM 2018' at Sapienza/Rome.
	\end{abstract}
	\tableofcontents
\section{Introduction}

In 1924/25, Bose \cite{B} and Einstein \cite{E} predicted that if one adiabatically cools down a bosonic gas below a certain threshold, its ground state will become gradually more populated. 
This in turn has significant consequences on the partition function of the gas, which itself determines the macroscopic state of the gas. This effect is nowadays known as \textit{Bose-Einstein condensation}. Only in 1995 the group of Cornell and Wieman \cite{CW} and independently the group of Ketterle \cite{K} could experimentally verify this phenomenon, which then resulted in a Physics Nobel Prize in 2001. The first time this problem was mathematically rigorously treated was in 1980 in a work of H. Spohn \cite{S}, where persistence of the condensate was shown. Ever since the mathematical community became very interested in the problem and many improvements and extensions have been studied. Different techniques have been applied, ranging from coherent state analysis, see \cite{GV}, \cite{Hep}, to the BBGKY hierarchy, see \cite{S}, \cite{EY}, to a Egorov type theorem, see \cite{FGS}, \cite{FKP}, \cite{FKS}, Wigner measure approach, see \cite{AN}, \cite{AFP}, second quantization formulation, see \cite{RS}, \cite{MS}, \cite{Lu}, and deviation estimates \cite{KP}, \cite{P}. Whereas all the previous deal with the case of a dense gas of very weakly interacting particles, the somewhat dual and physically more important case of dilute gases with strong, delta type interactions has been treated in \cite{CH1}, \cite{CH2}, \cite{CP1}, \cite{CP2}, \cite{ESY1}, \cite{ESY2}, \cite{KM}, and references therein. In addition to all these results showing persistence of the condensate, there are also work establishing the formation of a condensate, see \cite{DSS}, \cite{LSY}, \cite{LSSY} and references therein. In the present work, we want to show persistence of the condensate in the mean-field limit, corresponding to the case of dense weakly acting gases.
\par Let us give some physical justification for the study of the fractional Schr\"odinger equation. In the case of the half-wave equation with kinetic energy operator $|\nabla|$, the equation describes a Boson star like system. A blow-up of the equation corresponds to a collapse of the Boson star. The blow-up in the effective equation has been shown \cite{Le}. In order to show that the whole system indeed exhibits the blow-up behavior, one has to compare the full Schr\"odinger dynamics with that of the effective equation. That was first successfully shown  in \cite{MS} and \cite{Lee}. More generally, fractional Quantum mechanics arises as a generalization of Brownian motion of Quantum particles as seen, e.g., in the Feynman path integral formalism as first studied by Laskin, see \cite{La1}, \cite{La2}. This more general class of (stochastic) jump processes is known as L\'evy processes. Several mathematical aspects of the corresponding fractional NLS, describing the condensate, have been studied in various scenarios, see \cite{CHHO}, \cite{CHKL}, \cite{D}, \cite{Hw}, \cite{HS}, \cite{PS}, \cite{Zh}, \cite{ZZ}.
\paragraph{The Setup} The Hamiltonian for a weakly interacting system of $N$ bosons is given by
\begin{equation}\label{manybodyham}
\begin{split}
H_{N}\;:=\;&\sum_{i=1}^{N}S_i + \frac{\mu\lambda}{N-1}\sum_{i<j}\frac1{|x_i-x_j|^\gamma}
\end{split}
\end{equation}
acting on $\cH_N\;:=\;L^2(\R^3)^{\otimes_S N}$. In here, $\mu=\pm1$ distinguishes the focusing/defocusing case and $\lambda>0$ is some coupling strength. The kinetic energy operator is given by
\begin{equation*}
S_i=(-\Delta_{x_i})^\sigma.
\end{equation*}
Physically relevant cases, that we want to analyze here, are the non-relativistic case, $(\gamma,\sigma)=(1,1)$, and the semi-relativistic case, $(\gamma,\sigma)=(1,1/2)$. In addition to these, we also want to consider the cases $\gamma\in(0,3/2)$, $\sigma\in[\gamma/2,1]\setminus\{1/2\}$. In the following, we consider a solution $\Psi_{N,t}$ of the Cauchy problem
\begin{equation}\label{schroedinger}
\begin{cases}
\i\partial_t\Psi_{N,t}\;&=\;H_N\Psi_{N,t}\\
\Psi_{N,t}\;&=\;\vphi_0^{\otimes N}
\end{cases}	
\end{equation}
for given $\vphi_0\in H^s(\R^3)$ for some given $s\geq0$ with $\|\vphi_0\|_2=1$, where as usual we set $H^0(\R^3):=L^2(\R^3)$. Moreover, let $\vphi_t$ be a solution of the Hartree equation given by
\begin{equation}\label{hartree}
\begin{cases}
\i\partial_t\vphi_t\;&=\;S \vphi_t+\frac{\mu\lambda}{|\cdot|^\gamma}*|\vphi_t|^2\vphi_t\\
\vphi_{t=0}\;&=\;\vphi_0,
\end{cases}
\end{equation}
with $S$ chosen accordingly to $S_i$. Since, in the focusing case, $\mu=-1$, $H_N$ is a self-adjoint and semi-bounded operator only for certain regions of the parameter $(\lambda,\gamma,\sigma)$, as we will see below, let us introduce a regularized Hamiltonian 
\begin{equation}\label{regham}
H_{N}^{(\alpha)}\;=\;\sum_{i=1}^{N}S_i + \frac{\mu\lambda}{N-1}\sum_{i<j}\frac1{|x_i-x_j|^\gamma+\alpha},\quad\alpha>0.
\end{equation}
Note that $H_N^{(\alpha)}$ is both self-adjoint and bounded from below. In fact, we have
\begin{equation}\label{reghamboundskin}
H_N^{(\alpha)}\geq\sum_{i=1}^{N}S_i +\frac{N\mu\lambda}{2\alpha}\geq\frac{N\mu\lambda}{2\alpha}.
\end{equation}
Thus we may write down the formal solution 
\begin{equation}\label{regschroedinger}
\Psi_{N,t}^{(\alpha)}\;:=\;\e^{-\i H_N^{(\alpha)}t}\vphi_0^{\otimes N}
\end{equation}
to the respective Cauchy problem. Let us also introduce the regularized Hartree equation given by
\begin{equation}\label{reghartree}
\begin{cases}
i\partial_t\vphi_t^{(\alpha)}\;&=\;S \vphi_t^{(\alpha)}+\frac{\mu\lambda}{|\cdot|^\gamma+\alpha}*|\vphi_t^{(\alpha)}|^2\vphi_t^{(\alpha)}\\
\vphi_{t=0}^{(\alpha)}\;&=\;\vphi_0.
\end{cases}
\end{equation}
\par In the following, our goal is to show convergence of \eqref{schroedinger} towards \eqref{hartree} in the sense of expectation values, i.e., we show
\begin{equation}\label{convtype}
|\scp{\Psi_{N,t}}{\mathcal{A}\otimes I_{N-k}\Psi_{N,t}}-\scp{\vphi_t^{\otimes k}}{\mathcal{A}\vphi_t^{\otimes k}}|\lesssim_{t,k,\mathcal A, \vphi} o(1)
\end{equation}
for a class of operators $\mathcal{A}:L^2(\R^{3k})\circlearrowleft$ such that the involved bounding constant is finite.
\paragraph{Assumptions:} For the subsequent analysis, we will distinguish the following cases:
\begin{itemize}
	\item[\mylabel{itm:sr}{{\rm (}SR{\rm )}}] $\gamma=1$, $\sigma=1/2$.
	\item[\mylabel{itm:fs}{{\rm (}FS{\rm )}}] $\gamma\in(0,3/2)$, $\sigma\in[\gamma/2,1)\setminus\{1/2\}$.
	\item[\mylabel{itm:nr}{{\rm (}NR{\rm )}}] $\gamma\in(0,3/2)$, $\sigma=1$.
\end{itemize}
If not specified, we will generally assume that $\gamma\in(0,3/2)$, $\sigma\in[\gamma/2,1]$.

\paragraph{The heuristics.} Before turning to the actual derivation of \eqref{hartree} from \eqref{schroedinger}, let us give us give a heuristic argument. Assuming full condensation at positive times $\Psi_{N,t}=\vphi_t^{\otimes N}$, we obtain, using \eqref{schroedinger},
\begin{align*}
\scp{\vphi_t}{\i\partial_t \vphi_t}&=\frac1N\scp{\Psi_{N,t}}{\i\partial_t\Psi_{N,t}}=\frac1N\scp{\Psi_{N,t}}{H_N\Psi_{N,t}}\\
&=\scp{\vphi_t}{S\vphi_t}+\frac{\mu\lambda}{2}\scp{\vphi_t^{\otimes 2}}{\frac1{|x_1-x_2|^\gamma}\vphi_t^{\otimes 2}}\\	
&=: h(\vphi_t,\overline{\vphi_t}).
\end{align*}
$h(\vphi_t,\overline{\vphi_t})$ can be understood as a Hamiltonian function from which we obtain the dynamics
\begin{equation}\label{hartreederived}
\begin{cases}
\i\partial_t\vphi_t\;&=\;\partial_{\overline{\vphi_t} }h(\vphi_t,\overline{\vphi_t})=S\vphi_t+\frac{\mu\lambda}{|\cdot|^\gamma}*|\vphi_t|^2\vphi_t\\
\vphi_t\big|_{t=0}\;&=\;\vphi_0.
\end{cases}
\end{equation}
\paragraph{The Hamiltonian structure.} The Hamiltonian dynamics in \eqref{schroedinger} is induced by the Hamiltonian $H_N(\Psi):=\scp{\Psi}{H_N\Psi}$ and the symplectic form $$\omega_N:\cH_N\times\cH_N\to\R,(\Psi,\Phi)\mapsto-2\Im\scp{\Psi}{\Phi}.$$Indeed, if $\Psi_N$ satisfies \eqref{schroedinger}, we have
$$\omega_N(\Phi,\partial_t\Psi_{N,t})=dH_N(\Psi_{N,t})[\Phi]\quad\forall \Phi\in\cH_N,$$ 
where $dH_N$ denotes the first variation of $H_N$, i.e., 
$$dH_N(\Psi_{N,t})[\Phi]=\partial_\varepsilon\big|_{\varepsilon=0}H_N(\Psi_{N,t}+\varepsilon\Phi).$$
We consider the restriction of $\omega_N$ to the submanifold 
$$M:=\{\vphi^{\otimes N}\mid \|\vphi\|_2=1\}\subseteq \cH_N$$
Its tangent space is given by
$$T_\vphi M=\sum_{j=1}^N\vphi^{\otimes (j-1)}\otimes \{\vphi\}^{\perp}\otimes \vphi^{\otimes(N-j)}.$$
Then the restriction of $\omega_N$ to $T_\vphi M$ is given by
$$\omega_N\big|_{T_\vphi M\times T_\vphi M}\;=\;-2N\Im\scp{\cdot}{\cdot}\big|_{L^2(\R^3;S^1)\times L^2(\R^3;S^1)}\;=:\;N\omega,$$
where $S^1\subset \C$ is the unit circle. Then $\omega$ together with $h(\overline{\vphi},\vphi)$ induces the dynamics \eqref{hartree}. Indeed, one can check that a solution $\vphi_t$ of \eqref{hartree} satisfies
$$\omega(\psi,\partial_t\vphi_t)=dh(\vphi_t)[\psi]\quad \forall \psi\in \{\vphi_t\}^\perp.$$
\paragraph{On the scaling.} Before we turn to discussions related to the scaling of \eqref{hartree}, let us first see how \eqref{hartree} is related to its usually considered form 
\begin{equation}\label{hartreerescaled}
\begin{cases}
\i \partial_t u_t\;&=\;S u_t+\frac{\mu}{|\cdot|^\gamma}*|u_t|^2u_t\\
u_t\big|_{t=0}\;&=\;u_0.
\end{cases}	
\end{equation}
The relation between \eqref{hartreerescaled} and \eqref{hartree} is given by $u=\sqrt{\lambda}\vphi$. In particular, we have due to mass conservation, see Lemma \ref{fracconservation} below, and normalization of $\vphi_0$
\begin{equation}\label{lambdascale}
\lambda=\|u\|_2^2.
\end{equation}
This will become important when we discuss well-posedness of \eqref{hartree} below. We will give the discussion depending on $\lambda$ which by the previous identity can be understood as size of initial datum for \eqref{hartreerescaled}.
\par The scaling, under which \eqref{hartree} remains invariant, is given by
$$\vphi(x,t)\rightarrow \ell^{\frac{d-\gamma}{2}+\sigma}\vphi(\ell x,\ell^{2\sigma}t).$$
This scaling leaves the $\dot{H}^{s_c}(\R^3)$-norm invariant where $s_c=\frac{\gamma}{2}-\sigma$ is the critical exponent. Note that $s_c<0$ corresponds to $\sigma>\gamma/2$ with mass criticality, $s_c=0$, at $\sigma=\gamma/2$. In the present work, we will only work with the mass subcritical regime $\sigma>\gamma/2$ and the mass critical case $\sigma=\gamma/2$. Also, instead of the usual form \eqref{hartreerescaled} of the Hartree equation, we will see that for the sake of its derivation it is more convenient to work with the form \eqref{hartree}.

\paragraph{Acknowledgment} The author is grateful to numerous helpful discussions with and suggestions from T. Chen and N. Pavlovi\'{c}. The author also wants to thank M. Rosenzweig for helpful discussions. This work was funded through the 'University of Texas at Austin Provost Graduate Excellence Fellowship'.

\section{Results}
Results of the form \eqref{convtype} go back to the 80's where H. Spohn \cite{S} showed this result in the case of bounded operators $\mathcal{A}$ with the potential $|\cdot|^{-\gamma}$ replaced by a bounded potential. After that, several parameters have been optimized, from considering the Coulomb case \cite{EY} to obtaining explicit rates \cite{RS} to even considering unbounded operators $\mathcal{A}$ \cite{MS}. The methods in these were quite involved, usually using either hierarchies or second quantization methods from QFT. Then, P. Pickl (\cite{P}, \cite{KP}) suggested to consider the projection of the solution $\Psi_{N,t}$ of \eqref{schroedinger} onto the orthogonal complement of the solution $\vphi_t$ of \eqref{hartree}. Using simple algebra with projectors onto the span of $\vphi_t$, he was able to show \eqref{convtype} for bounded operators $\cA$. In \cite{AH} and \cite{AHH}, it was then shown how to extend this result to unbounded operators $\cA$. In the present work, we will see that the methods of \cite{P} and \cite{AH} are sufficient to both prove most of the previous with a much simpler proof and to even improve rates of convergence as well as decrease required regularities. In order to give an overview of the previous results and in order to state our main result, let us introduce some parameters. Let $s$ denote the regularity on the initial data $\vphi_0\in H^s(\R^3)$, $\beta\geq0$ be the rate of convergence $N^{-\beta}$ in \eqref{convtype}, where $\beta=0$ stands for $o(1)$-convergence. Next, let $k\in\N$ be the number of particles on which $\cA$ acts and $\theta\geq0$ be its "degree of unboundedness", i.e., let $k$ and $\theta$ be such $S_{k,1}^{-\theta/2}\cA S_{k,1}^{-\theta/2}$ can be extended to a bounded linear operator $L^2\to L^2$. In here, $S_{k,r}:=\sum_{i=1}^k (1+S_i)^r$ abbreviates a differential operator. Let $f(t)$ denote the growth of the error-bound in \eqref{convtype} as a function of $t$. With $-$ we denote that only a uniform bound on a compact time interval $[0,T]$ has been derived. The appearing constants $C$ generally depend on all the parameters $\sigma,\mu,\gamma,s,\beta$ as well on the size $\|\vphi_0\|_{H^s}$ of the initial datum. The subscript $k$ denotes that there is an implicit dependence on the involved particle number $k$. We remark that \cite{P} gives an explicit dependence of $C_k, D_k$ on $k$. Then we can summarize some of the important results in table \ref{previousresultstab}. 
\par Instead of a bound of the type \eqref{convtype}, all of the listed results rather bound the Sobolev trace norm
$$\tr{\left|S_{k,r}^{\frac12}\left(\gamma_{N,t}^{(k)} -P^{(k)}_t\right)S_{k,r}^{\frac12}\right|}.$$
Then, using the definition of the reduced density matrix and the cyclicity of the trace, see \cite{AHH}, we have  
\begin{equation}\label{tracecyclicity}
\begin{split}
\text{l.h.s of } \eqref{convtype} &= \left|\tr\Big( \cA\big(\gamma_{N,t}^{(k)} -P^{(k)}_t\big) \Big) \right|
= \left|\tr\Big( S_{k,r}^{-\frac12}\cA S_{k,r}^{-\frac12} S_{k,r}^{\frac12}\big(\gamma_{N,t}^{(k)} -P^{(k)}_t\big) S_{k,r}^{\frac12}\Big) \right| \\
&\le \left\|S_{k,r}^{-\frac12}\cA S_{k,r}^{-\frac12}\right\|_{L^2\to L^2} 
\tr{\left|S_{k,r}^{\frac12}\left(\gamma_{N,t}^{(k)} -P^{(k)}_t\right)S_{k,r}^{\frac12}\right|}
\end{split}
\end{equation}
since the space of bounded operators is the dual of the trace class operators, see \cite[Theorem VI.26]{RS1}. Thus, regarding optimal regularity, the two quantities to compare are the imposed regularity $s$ on $\vphi_0$ and the needed regularity $\theta\sigma$ in order to formulate the problem respectively bounding the r.h.s. of  \eqref{tracecyclicity}. In the optimal case, these coincide.

\begin{table}
	\centering
	\begin{tabular}{lcccccccc}\toprule
		& $\sigma$ & $\mu$ & $\gamma$ & $s$ & $\beta$ & $k$ & $\theta$ & $f(t)$ \\
		\hline
		\cite{AHH} & $1/2$ & $1$ & 1 & 1 & $(1-\theta)/2$ & $\N$ & $[0,1)$ & $C\e^{C\e^{Ct}}$\\ 
		\cite{Lee} & $1/2$ & $\pm1$ & 1 & 1 & 1 & 1 & 0 & $f(t)$\\
		\cite{AHH} + \cite{Lee} & $1/2$ & $1$ & 1 & 1 & $\min\{1/2;1-\theta\}$ & 1 & [0,1)$^{(*)}$ & $f(t)$\\
		\cite{MS} & $1/2$ & $-1$ & 1 & 2 & 1/2 & 1 & 0 & $-$\\
		& $1/2$ & $-1$ & 1 & 2 & 1/4 & 1 & [0,1] & $-$\\ \midrule
		\cite{EY} & $1$  & $\pm1$ & $1$ & $2$ & $0$ & $\N$ & $0$ & $-$\\
		\cite{RS}, \cite{P} & 1 & $\pm1$ & 1 & 1 & 1/2 & 1 & 0 & $C_k\e^{D_kt}$\\
		\cite{KP} & $1$ & $\pm1$ & $(0,\frac32)$ & $1$ & $1/2$ & $\N$ & 0 & $C\e^{Ct}$\\
		\cite{CLS} & 1 & $\pm1$ & 1 & 1 & 1 & $\N$ & 0 & $C_k\e^{D_kt}$\\
		\cite{FKS} & 1 & $\pm1$ & $(0,1]$ & 0 & $>0^{(\dagger)}$ & $\N$ & 0 & $f(t)$\\
		\cite{Lu} & 1 & $\pm1$ & 1 & 1 & $1/2$ & $\N$ & 0 & $C\e^{Ct}$\\  
		 & 1 & $\pm1$ & 1 & 3 & 1/4 & 1 & [0,1] & $f(t)$\\
		\cite{AH} & 1 & $\pm1$ & 1 & 1 & $\min\{1/2;1-\theta\}$ & $\N$ & [0,1) & $C_k\e^{D_kt}$\\
		& 1 & $\pm1$ & 1 & 1 & 0 & $\N$ & [0,1] & $-$\\
		\cite{CLL} & 1 & $\pm1$ & $(0,\frac32)$ & 1 & 1 & 1 &0 & $C\e^{Ct^{3/2}}$\\ \bottomrule
	\end{tabular}
	\caption{Previous results. $(*)$ has not been showed explicitly in \cite{AHH} but can be easily obtained as a corollary together with \cite{Lee}. See also Proposition \ref{srlee}. $(\dagger)$ means that there is a rate $N^{-\beta}$ for some $\beta>0$. See also the remark \ref{remarkonresults}.}
	\label{previousresultstab}
\end{table}

An important remark on both \cite{Lee} and \cite{MS} is that they consider both defocusing ($\mu=1$) and the focusing case ($\mu=-1$). 
For a result in the focusing case, we need to replace $\Psi_{N,t}$ in \eqref{convtype} by $\exp(-\i H_N^{(\alpha)}t)\Psi_0$, see \eqref{regham}. 
\par Note, that there are results including further parameters like a magnetic potential $A$, see \cite{Lu}, or the number of species $r$ involved in the condensate, see \cite{AHH}, \cite{Hei}, and \cite{MO}. 
As we will also see below, whenever $k=1$ and $\theta=0$, one can show that one can obtain a result for any $k\in\N$ if one sacrifices half of the rate of convergence. We are able to state our main result.

\begin{theorem}\label{main}
	Let $(\sigma,\mu,\gamma,s,k,\theta)$ be given as in a line of table \ref{resulttable}. Assume $\vphi_0\in H^s(\R^3)$ and $N\in\N^{\geq k}$. Suppose $\Psi_{N,t}$ is a solution of \eqref{schroedinger}, $\vphi_t$ is a solution of \eqref{hartree}, and $\Psi_{N,t}^{(\alpha)}=\e^{-\i H_N^{(\alpha)}t}\vphi_0^{\otimes N}$. Let $\cA$ be a self-adjoint operator acting on $L^2(\R^3)^{\otimes_S k}$. Assume that $S_{k,1}^{-\theta/2} \cA S_{k,1}^{-\theta/2}$ can be extended to a bounded operator on $L^2(\R^3)^{\otimes_S k}$ with operator norm $\|S_{k,1}^{-\theta/2} \cA S_{k,1}^{-\theta/2}\|$. Then we have the following.
	\begin{enumerate}
		\item In the defocusing case, $\mu=1$, there is a constant $C=C_{\|\vphi_0\|_{H^{s}}}$ and a function $f:\R\to\R$ such that for any $t\geq0$ we have
		$$\left|\scp{\Psi_{N,t}}{\cA\otimes I_{N-k} \Psi_{N,t}}-\scp{\vphi_t^{\otimes k}}{\cA\vphi_t^{\otimes k}}\right| \leq C\|S_{k,1}^{-\theta/2} \cA S_{k,1}^{-\theta/2}\|\frac{k^{\frac{3-\theta}2}f(t)}{N^\beta},$$
		where $\beta>0$ and $f(t)$ are the values in the line in table \ref{resulttable} corresponding to the chosen parameter $(\sigma,\mu,\gamma,s,k,\theta)$.
		\item In the focusing case, $\mu=-1$, fix $T\in(0,\infty)$ be such that 
		$$\nu\;=\;\sup_{|\tau|\leq T} \|\vphi_\tau\|_{H^{\frac\gamma2}}<\infty.$$ Then there is a constant $C=C_{\nu,\|\vphi_0\|_{H^s},T}$ and a sequence $(\alpha_n)_{n\in\N}\in(0,\infty)^\N$ such that for any $t\in[0,T)$ we have
		$$\left|\scp{\Psi_{N,t}^{(\alpha_N)}}{\cA\otimes I_{N-k} \Psi_{N,t}^{(\alpha_N)}}-\scp{\vphi_t^{\otimes k}}{\cA\vphi_t^{\otimes k}}\right|\leq C\|S_{k,1}^{-\theta/2} \cA S_{k,1}^{-\theta/2}\|\frac{k^{\frac{3-\theta}2}}{N^{\beta}},$$	
		where $\beta>0$ is the value in the line in table \ref{resulttable} corresponding to the chosen parameter $(\sigma,\mu,\gamma,s,k,\theta)$.
	\end{enumerate}
	
	
\end{theorem}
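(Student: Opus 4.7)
My plan is to adapt the Pickl projection method, as extended in \cite{AH,AHH}, to the fractional and possibly more singular setting. Introduce the one-particle projectors $p_t := |\vphi_t\rangle\langle\vphi_t|$ and $q_t := I - p_t$, with lifts $p_t^{(i)}$, $q_t^{(i)}$ acting on the $i$-th factor of $\cH_N$. The central object is a weighted counting functional of the form
$$\alpha_N^{(\theta)}(t) \;:=\; \bigl\langle \Psi_{N,t},\, \hat n^{(\theta)}_t\, \Psi_{N,t}\bigr\rangle, \qquad \hat n^{(\theta)}_t \;:=\; \frac{1}{N}\sum_{i=1}^N q_t^{(i)}(1+S_i)^{\theta},$$
(with a symmetric variant to keep the operator self-adjoint). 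When $\theta=0$ this reduces to the average fraction of particles outside the condensate. The essential idea is that smallness of $\alpha_N^{(\theta)}(t)$ controls the Sobolev-weighted trace distance $\mathrm{Tr}\,|S_{k,1}^{1/2}(\gamma_{N,t}^{(k)}-P_t^{(k)})S_{k,1}^{1/2}|$, after which \eqref{tracecyclicity} converts this into the advertised bound on $\scp{\Psi_{N,t}}{\cA\otimes I_{N-k}\Psi_{N,t}} - \scp{\vphi_t^{\otimes k}}{\cA \vphi_t^{\otimes k}}$.

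The main step is to differentiate $\alpha_N^{(\theta)}(t)$ in $t$, using \eqref{schroedinger} and \eqref{hartree}. The kinetic terms $S_i$ cancel against $S$ acting on $\vphi_t$ up to commutators with $(1+S_i)^\theta$, which are handled by spectral calculus for the self-adjoint operator $(-\Delta)^\sigma$. The contribution of the interaction splits, after inserting $I = p_t^{(j)} + q_t^{(j)}$ on each factor, into a finite number of pieces classified by how many projectors $q_t$ act on particles $i, j$. The $(p,p)$-piece cancels against the Hartree nonlinearity; the $(q,p)$ and $(p,q)$-pieces are absorbed into $\alpha_N^{(\theta)}$ using the symmetrization under $\Psi_{N,t}\in\cH_N$; the $(q,q)$-piece produces the combinatorial $1/N$ factor. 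The key analytic input is a fractional Hardy/Sobolev estimate of the form
$$\Bigl\|\tfrac{1}{|x-y|^\gamma}\, f(x,y)\Bigr\|_{L^2_{x,y}} \;\lesssim\; \|(1+S_x)^{\gamma/(4\sigma)}(1+S_y)^{\gamma/(4\sigma)} f\|_{L^2_{x,y}},$$
which is available throughout the range \ref{itm:sr}, \ref{itm:fs}, \ref{itm:nr} once $\gamma\leq 2\sigma$, together with propagation of $\|\vphi_t\|_{H^s}$ for the Hartree equation \eqref{hartree} (standard in the mass-subcritical range and obtainable locally in the mass-critical range). These reduce the estimate to
$$\bigl|\dot\alpha_N^{(\theta)}(t)\bigr| \;\lesssim\; g(t)\,\bigl(\alpha_N^{(\theta)}(t) + N^{-2\beta}\bigr),$$
with $g(t)$ a polynomial in $\|\vphi_t\|_{H^s}$, whence Gronwall produces the growth function $f(t)$ and the rate $N^{-\beta}$ claimed in Theorem \ref{main}.

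For the focusing case $\mu=-1$ the operator $H_N$ in \eqref{schroedinger} is not in general semi-bounded, so I would repeat the entire argument with $H_N^{(\alpha)}$ from \eqref{regham} in place of $H_N$ and with the corresponding regularized Hartree equation \eqref{reghartree} in place of \eqref{hartree}. The two new tasks are: control the difference between $\vphi_t$ and $\vphi_t^{(\alpha)}$ on $[0,T]$, exploiting the a priori bound $\nu = \sup_{|\tau|\le T}\|\vphi_\tau\|_{H^{\gamma/2}} < \infty$ together with Gronwall on $\|\vphi_t-\vphi_t^{(\alpha)}\|_{H^s}$; and show that the additional error terms introduced by the regularization $|x|^{-\gamma}\!\to(|x|^{\gamma}+\alpha)^{-1}$ vanish as $\alpha\to 0$ with rates compatible with \eqref{reghamboundskin}. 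Choosing then $\alpha_N\to 0$ slowly enough so that the regularization error is dominated by $N^{-\beta}$ yields the second assertion of the theorem.

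The step I expect to be the real obstacle is obtaining the Sobolev-weighted version of the $(q,p)$ and $(p,q)$ estimates in the regime where $\cA$ is unbounded ($\theta>0$) and the potential is strongly singular ($\gamma$ close to $2\sigma$ or $3/2$). There the weight $(1+S_i)^\theta$ no longer commutes nicely with $|x_i-x_j|^{-\gamma}$, and one must use a careful interpolation between the fractional Hardy estimate above and Kato-type commutator estimates, taking advantage of the symmetrization of $\Psi_{N,t}$ to redistribute derivatives across factors. In the focusing case the additional bookkeeping of the $\alpha$-dependence through these commutators, so as to be able to take $\alpha_N\to 0$ while still closing Gronwall, is the other place where the argument is delicate.
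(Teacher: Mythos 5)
Your proposal takes a genuinely different route from the paper, and the step you yourself flag as delicate is precisely where it breaks. The paper keeps the Pickl functional \emph{unweighted}: it bounds $a_{N,t}=\scp{\Psi_{N,t}}{q_t^{(1)}\Psi_{N,t}}$ by the Gronwall argument of \cite{P}, \cite{KP} (Proposition~\ref{firstprevious}), and then passes to the Sobolev-weighted trace norm via the interpolation inequality of \cite{AH} (Proposition~\ref{oldmainthm}), which requires only a separate a priori bound on $\|S_{1,1}^{1/2}\Psi_{N,t}\|_2+\|S^{1/2}\vphi_t\|_2$ coming from energy conservation and \eqref{hamsupercrit}. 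This avoids the commutator $[(1+S_i)^\theta,|x_i-x_j|^{-\gamma}]$ entirely. Your plan to differentiate the weighted functional $\alpha_N^{(\theta)}$ directly forces you to control exactly that commutator, uniformly in $N$, for potentials near $|x|^{-3/2}$ and fractional dispersion, and the phrase about ``careful interpolation\dots and Kato-type commutator estimates'' is a placeholder rather than an argument. In the parameter ranges of table~\ref{resulttable} this is not a routine fractional Hardy estimate, and the weighted-Pickl route has not been carried out in that generality; absent a concrete commutator bound the Gronwall inequality for $\alpha_N^{(\theta)}$ does not close when $\theta>0$.

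There is a second, independent gap: the regularity thresholds. Several lines of table~\ref{resulttable} impose $s$ strictly below the energy space, for instance $s=2/3$ in \ref{itm:sr} and $s=0$ in \ref{itm:nr} with $\gamma\leq1$. The mechanism by which the paper achieves this is to rewrite the exponent in Proposition~\ref{firstprevious} as $\|\vphi_\tau\|_{L^1_\tau L^{6/(3-2\gamma)}_x}$ using $|\cdot|^{-\gamma}\in L^{3/\gamma}_w$ (see \eqref{potentialstrichartz}), and then bound this mixed norm by the Strichartz estimates of Lemma~\ref{srstrichartz}, Lemma~\ref{fsstrichartz}, and Lemma~\ref{nrstrichartz} rather than by Sobolev embedding of a fixed-time norm. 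Persistence of regularity alone, which is all your ``propagation of $\|\vphi_t\|_{H^s}$'' gives, would force $s\geq\sigma$ or worse and cannot reach the low-regularity rows of the table. Finally, in the focusing case the paper does not re-run Gronwall on a regularized weighted functional; it splits $\gamma_{N,t}^{(k,\alpha)}-P_t^{(k)}$ into $(\gamma_{N,t}^{(k,\alpha)}-P_t^{(k,\alpha)})+(P_t^{(k,\alpha)}-P_t^{(k)})$, treats the first by the unweighted regularized Pickl functional plus Proposition~\ref{oldmainthm}, the second by Lemma~\ref{fracregapprox}, and then optimizes $\alpha_N$ against the $\alpha^{-1}$-dependence in \eqref{srregnorm}. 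That explicit balance is what produces the nonstandard exponents $\beta$ in the focusing rows, and it is not recoverable from the regularized-Gronwall bookkeeping you sketch.
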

	\begin{table}[htp]
	\centering
	\begin{tabular}{cccccccc}\toprule
		$\sigma(\geq\gamma/2)$ & $\mu$ & $\gamma$ & $s$ & $\beta$ & $k$ & $\theta$ & $f(t)$\\
		\hline
		$1/2$ & $1$ & $1$ & $2/3$ & $1/2$ & $\N$ & $0$ & $\e^{C\e^{Ct}}$\\
		$1/2$& $1$ &  $1$ & $2/3$ & $(1-\theta)/2$ & $\N$ & $[0,1)$ & $\e^{C\e^{Ct}}$\\
		$1/2$ & $1$ & $1$ & $1$ & $\min\{1/2;1-\theta\}$ & $1$ & $[0,1)$ & $f(t)$\\
		$1/2$& $-1$ & $1$ & $2/3$ & $1/2$ & $\N$ & $0$ & $-$\\
		$1/2$& $-1$ & $1$ & $\frac{(1+\vep)}2^{(*)}$ & $\frac{(1-\theta)(1-\theta+\varepsilon(1+\theta)-\max\{1;2\theta\})}{2[1-\theta+\varepsilon(1+\theta)]}$ & $\N$ & $[0,\min\{\frac{\vep}{1-\vep};1\})$ & $-$\\
		$1/2$& $-1$ & $1$ & $1$& $\min\{1/2;1-\theta\}^2$ & $1$ & $[0,1)$ & $-$\\ \midrule
		$[\frac\gamma2,1)\setminus\{\frac12\}$ & $1$ & $(0,1)$ & $(1-\sigma)\gamma$ & $1/2$ & $\N$ & $0$ & $\e^{C\e^{Ct}}$\\
		$[\frac\gamma2,1)\setminus\{\frac12\}$& $1$ & $(0,1)$ & $(1-\sigma)\gamma$ & $(1-\theta)/2$ & $\N$ & $[0,1)$ & $C\e^{C\e^{Ct}}$\\
		$[\frac\gamma{\gamma+1},1)\setminus\{\frac12\}$& $1$ & $(0,\frac32)$ & $\sigma$ & $1/2$ & $\N$ & $0$ & $\e^{C(\sqrt{t}+t^2)}$\\
		$[\frac\gamma{\gamma+1},1)\setminus\{\frac12\}$& $1$ & $(0,\frac32)$ & $\sigma$ & $(1-\theta)/2$ & $\N$ & $[0,1)$ & $C\e^{C(\sqrt{t}+t^2)}$\\
		$\gamma/2$& $-1$ & $(0,1)$ & $(1-\gamma/2)\gamma$ & $1/2$ & $\N$ & $0$ & $-$\\
		$\gamma/2$ & $-1$ & $(0,1)$ & $\frac{(1+\vep)\gamma}{2}^{(\dagger)}$ & $\frac{(1-\theta)(1-\theta+\varepsilon(1+\theta)-\max\{1;2\theta\})}{2[1-\theta+\varepsilon(1+\theta)]}$ & $\N$ & $[0,\min\{\frac{\vep}{1-\vep};1\})$ & $-$\\
		$\gamma/2$& $-1$ & $(1,\frac32)$ & $\gamma/2$ & $1/2$ & $\N$ & $0$ & $-$\\
		$\gamma/2$& $-1$ & $(1,\frac32)$ & $\frac{(1+\vep)\gamma}{2}^{(\#)}$ & $\frac{(1-\theta)(1-\theta+\varepsilon(1+\theta)-\max\{1;2\theta\})}{2[1-\theta+\varepsilon(1+\theta)]}$ & $\N$ & $[0,\min\{\frac{\vep}{1-\vep};1\})$ & $-$\\ \midrule
		$1$& $\pm1$ & $(0,1]$ & $0$ & $1/2$ & $\N$ & $0$ & $\e^{C(\sqrt{t}+t^6)}$\\
		$1$& $\pm1$ & $(0,1]$ & $(0,1]$ & $(s-\theta)/2$ & $\N$ & $[0,s)$ & $C\e^{C(\sqrt{t}+t^6)}$\\
		$1$& $\pm1$ & $(1,\frac32)$ & $\gamma-1$ & $1/2$ & $\N$ & $0$ & $\e^{C(\sqrt{t}+t^6)}$\\
		$1$& $\pm1$ & $(1,\frac32)$ & $[\gamma-1,1]$ & $(s-\theta)/2$ & $\N$ & $[0,s)$ & $C\e^{C(\sqrt{t}+t^6)}$\\
		$1$& $\pm1$ & $(0,\frac32)$ & $1$ & $\min\{1/2;1-\theta\}$ & $1$ & $[0,1)$ & $C\e^{C(\sqrt{t}+t^6)}$\\ \bottomrule
	\end{tabular}	
	\caption{Present results. $(*):$ $\vep\geq1/3$. $(\dagger):$ $\varepsilon\geq (1-\gamma)$. $(\#):$ $\vep>0$.}
	\label{resulttable}
\end{table}
\begin{remark}
	None of these convergence rates are optimal. The optimal rate $\cO(N^{-1})$ was obtained, e.g., in \cite{CLS}, \cite{CLL}, \cite{ES}. To see why this is optimal, \cite{ES} shows that for $A$ and $B$ acting on different particles, then $[A,\e^{\i H_Nt}B\e^{-\i H_N t}]$ remains of order $1/N$ at positive times $t>0$. This is then used to show that the difference of reduced density matrices $\gamma_{N,t}^{(m+n)}-\gamma_{N,t}^{(m)}\otimes\gamma_{N,t}^{(n)}$, see below, tested against products of an $m$-particle operator and an $n$-particle operator remains of order $1/N$ at positive times $t>0$. Moreover, as pointed out to the author by M. Machedon at the conference \textit{'TexAMP 2017'}, in order to obtain optimal time dependency of the error, it is essential to include two-particle correlations in the effective field. For details, we refer to \cite{GM1}, \cite{GM2}, \cite{M}, and \cite{MPP}. 
\end{remark}
\begin{remark}\label{remarkonresults}
	We want to comment on the present results. Notice that, in the defocusing semi-relativistic case, we obtain convergence in all Sobolev trace norms below the energy trace norm only assuming $\vphi_0\in H^{3/4}(\R^3)$. This drastically improves the $H^2$-assumption on $\vphi_0$ given in \cite{MS}, while we though do not cover the boundary case of the energy trace norm, see remark \ref{optimalregrem} below. In addition, we prove convergence for every $k$-marginal, $k\in\N$, with explicit dependence on $k$. Another remarkable result is that in the non-relativistic case with $\gamma\leq1$, we only need to assume $L^2$-initial data to obtain convergence with rate $N^{-1/2}$ and explicit dependency of the error bound on the parameters $k$, $\lambda$, and on the time $t$. This improves the result given in \cite{FKS} in that they only obtain some rate $N^{-\beta}$ with $\beta>0$ with some error bound depending on $k$, $\lambda$, and on the time $t$. In both cases, we can simplify the arguments a lot.
\end{remark}
\par A comment on the subsequently used constants: Whenever a constant only depends on the parameters $\gamma,\sigma,\mu,\lambda, s,\theta$, we will call that constant \textit{universal}, for these constants remain finite regardless of the chosen parameters. Unless mentioned otherwise, all occurring constants will depend on these. Moreover, by abuse of notation, we will use the same notation for a constant that possibly changes its value along proofs. This will help us reduce notation and make the arguments clear to the reader.
\section{Preliminary tools and notations \label{notations}}
We introduce 
\begin{align*}
\E^{(\gamma,\sigma)}[u]\;&:=\;\frac12\|(-\Delta)^{\frac\sigma2}u\|_2^2+\frac14\scp{u}{\frac{\mu\lambda}{|\cdot|^\gamma}*|u|^2u},\\
T^{(\sigma)}[u]\;&:=\;\frac12\|(-\Delta)^{\frac\sigma2} u\|_2^2,\\
V^{(\gamma)}[u]\;&:=\;\frac14\scp{u}{\frac{\mu\lambda}{|\cdot|^\gamma}*|u|^2u}.
\end{align*}
For a proof of the subsequent statements in this section, see appendix \ref{fracpersproof}.
\begin{lemma}[Conservation laws]\label{fracconservation}
	Suppose $\gamma\in(0,3/2)$, $\sigma\in[\gamma/2,1]$, and $s\geq\sigma$. Let $\vphi$ be a solution of \eqref{hartree} in $H^s(\R^3)$ with initial value $\vphi_0\in H^s(\R^3) $. Then both, energy and $L^2$-mass of $\vphi$ are conserved, i.e., we have
	$$\E^{(\gamma,\sigma)}[\vphi_t]\;=\;\E^{(\gamma,\sigma)}[\vphi_0]\quad\mbox{and}\quad \|\vphi_t\|_2=\|\vphi_0\|_2.$$
\end{lemma}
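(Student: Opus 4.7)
The plan is to establish both conservation laws by formally differentiating the conserved quantities along the Hartree flow, using the self-adjointness of $S = (-\Delta)^\sigma$ and of multiplication by the real-valued potential $W_t := \mu\lambda|\cdot|^{-\gamma}*|\vphi_t|^2$, and then justify the formal manipulation by an approximation argument at the $H^s$-level.

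For $L^2$-mass, I would compute
$$\tfrac{d}{dt}\|\vphi_t\|_2^2 \;=\; 2\,\Re\scp{\vphi_t}{\partial_t\vphi_t} \;=\; -2\,\Im\scp{\vphi_t}{S\vphi_t + W_t\vphi_t}.$$
Since $S$ and the multiplication operator $W_t$ are both symmetric, each bracket is real and the right side vanishes. This is already rigorous at the $H^s$-level with $s\geq\sigma$, since the equation holds in $H^{-\sigma}$ and $\vphi_t\in H^\sigma$.

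For energy, I split $\E^{(\gamma,\sigma)} = T^{(\sigma)} + V^{(\gamma)}$. For the kinetic part,
$$\partial_t T^{(\sigma)}[\vphi_t] \;=\; \Re\scp{S\vphi_t}{\partial_t\vphi_t} \;=\; \Im\scp{S\vphi_t}{W_t\vphi_t},$$
since the $\scp{S\vphi_t}{S\vphi_t}$-contribution is real and picks up a factor of $-\i$. For the potential part, symmetry in $x,y$ gives
$$\partial_t V^{(\gamma)}[\vphi_t] \;=\; \tfrac{1}{2}\iint \frac{\mu\lambda}{|x-y|^\gamma}\,\partial_t|\vphi_t(x)|^2\,|\vphi_t(y)|^2\,dx\,dy,$$
and using the equation together with the fact that $W_t$ is real one obtains
$$\partial_t|\vphi_t|^2 \;=\; -2\,\Im\bigl[\overline{\vphi_t}\,S\vphi_t\bigr],$$
so that $\partial_t V^{(\gamma)}[\vphi_t] = -\Im\scp{S\vphi_t}{W_t\vphi_t}$ cancels $\partial_t T^{(\sigma)}[\vphi_t]$.

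The main obstacle is justifying these pointwise time-derivatives when the solution only lives in $H^s$ with $s\geq\sigma$; in particular $\scp{S\vphi_t}{\partial_t\vphi_t}$ is not a priori a classical pairing when $s<2\sigma$. My plan is the standard regularization: pick $\vphi_0^{(n)}\in\mathcal{S}(\R^3)$ with $\vphi_0^{(n)}\to\vphi_0$ in $H^s$, solve \eqref{hartree} with these smoother data (well-posedness having been set up in the appendix), carry out the above computation classically for the smooth solutions $\vphi_t^{(n)}$, and pass to the limit. To close this, I would invoke Hardy--Littlewood--Sobolev together with the Sobolev embedding $H^{\gamma/2}(\R^3)\hookrightarrow L^{6/(3-\gamma)}(\R^3)$ to see that $V^{(\gamma)}:H^{\gamma/2}\to\R$ is locally Lipschitz; since $s\geq\sigma\geq\gamma/2$, the continuous dependence of the flow in $H^\sigma$ then yields $T^{(\sigma)}[\vphi_t^{(n)}]\to T^{(\sigma)}[\vphi_t]$ and $V^{(\gamma)}[\vphi_t^{(n)}]\to V^{(\gamma)}[\vphi_t]$, and likewise for the mass, completing the proof.
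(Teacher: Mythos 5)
Your proposal is correct in substance but takes a genuinely different route from the paper's proof. Where you propose to justify the formal differentiation $\partial_t T^{(\sigma)}[\vphi_t]=\Re\scp{S\vphi_t}{\partial_t\vphi_t}$ by a density argument (smooth out the data, compute classically, pass to the limit using well-posedness, persistence of regularity, and continuous dependence), the paper instead uses the \emph{interaction picture}: writing $\tilde\vphi_t:=\e^{\i(-\Delta)^\sigma t}\vphi_t$, the isometry of the free propagator on homogeneous Sobolev spaces gives $T^{(\sigma)}[\vphi_t]=T^{(\sigma)}[\tilde\vphi_t]$, and the key point is that
$$\partial_t\tilde\vphi_t=-\i\,\e^{\i(-\Delta)^\sigma t}J_\gamma(\vphi_t)\vphi_t$$
lives in $H^\sigma$ (by the locally-Lipschitz mapping property of the nonlinearity, Lemma~\ref{fraccontract}) rather than only in $H^{-\sigma}$. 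So $\Re\scp{(-\Delta)^{\sigma/2}\tilde\vphi_t}{(-\Delta)^{\sigma/2}\partial_t\tilde\vphi_t}$ is already an honest $L^2$-pairing and no approximation is needed. The two methods buy different things: the paper's is slicker and proceeds entirely at the $H^\sigma$-level of the given solution, while yours is a more standard scheme but requires stitching together the full well-posedness/persistence/continuous-dependence package; if you go that route you should also note that the smooth approximants must exist on a common time interval containing $t$, which follows from continuous dependence in $H^s$ plus the persistence-of-regularity bound of Lemma~\ref{fracpersreg}.

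One small sign slip: with $\partial_t\vphi_t=-\i(S+W_t)\vphi_t$ and $W_t$ real one finds
$$\partial_t|\vphi_t|^2\;=\;2\,\Re\bigl[\overline{\vphi_t}\,\partial_t\vphi_t\bigr]\;=\;+\,2\,\Im\bigl[\overline{\vphi_t}\,S\vphi_t\bigr],$$
not $-2\,\Im[\overline{\vphi_t}\,S\vphi_t]$. Your claimed conclusion $\partial_t V^{(\gamma)}[\vphi_t]=-\Im\scp{S\vphi_t}{W_t\vphi_t}$ is nevertheless correct and does cancel $\partial_t T^{(\sigma)}[\vphi_t]=\Im\scp{S\vphi_t}{W_t\vphi_t}$, so the error is in the displayed intermediate line only.
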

\begin{lemma}[Positivity of energy]\label{energypos}
	Suppose $\gamma\in(0,3/2)$, $\sigma\in[\gamma/2,1]$, and let $u\in H^\sigma(\R^3)$. Then there is a constant $C=C(\gamma,\sigma)$ such that we have
	\begin{equation}\label{venergyestimate}
	|V^{(\gamma)}[u]|\;\leq\;\lambda CT^{(\sigma)}[u]^{\frac{2\gamma}{\sigma}}\|u\|_2^{4-\frac{\gamma}{\sigma}}.
	\end{equation}
	\par In particular, if $\|u\|_2=1$ and either $\sigma>\gamma/2$, or $\sigma=\gamma/2$ and $\mu\lambda C>-1$, we have $$T^{(\sigma)}[u]+1\lesssim\E^{(\gamma,\sigma)}[u]+1\lesssim T^{(\sigma)}[u]+1.$$
\end{lemma}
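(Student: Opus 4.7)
The plan is to obtain the inequality \eqref{venergyestimate} by chaining together three standard tools: Hardy--Littlewood--Sobolev, Sobolev embedding, and $L^2$/$\dot H^\sigma$ interpolation. The whole argument is essentially algebra once the right indices are lined up; nothing deep is needed.

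First I would write
$$|V^{(\gamma)}[u]| \;\leq\; \frac{\lambda}{4}\int\!\!\int \frac{|u(x)|^2\,|u(y)|^2}{|x-y|^\gamma}\,dx\,dy$$
and apply Hardy--Littlewood--Sobolev with $f=g=|u|^2$. Writing $\tfrac{2}{p}+\tfrac{\gamma}{3}=2$ gives $p=\tfrac{6}{6-\gamma}$, so the double integral is controlled by $\||u|^2\|_p^2=\|u\|_{q}^{4}$ with $q=\tfrac{12}{6-\gamma}$. Note that for $\gamma\in(0,3/2)$ this exponent $q$ lies in $(2,4)$, which is well within the Sobolev range for $\R^3$.

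Next I would invoke the (homogeneous) Sobolev embedding $\dot H^{\gamma/4}(\R^3)\hookrightarrow L^q(\R^3)$, noting that the scaling identity $\tfrac{1}{q}=\tfrac{1}{2}-\tfrac{s}{3}$ is solved precisely by $s=\gamma/4$, which is $\leq\sigma$ under the standing assumption $\sigma\geq\gamma/2$. Then interpolating $\dot H^{\gamma/4}$ between $L^2$ and $\dot H^{\sigma}$, with interpolation parameter $\alpha=\tfrac{\gamma}{4\sigma}\in(0,1/2]$, yields
$$\|u\|_{q}^{4} \;\leq\; C\,\|(-\Delta)^{\sigma/2}u\|_2^{\,\gamma/\sigma}\,\|u\|_2^{\,4-\gamma/\sigma},$$
which, after rewriting $\|(-\Delta)^{\sigma/2}u\|_2^2 = 2T^{(\sigma)}[u]$, gives the claimed estimate with the exponent $\gamma/(2\sigma)$ on $T^{(\sigma)}[u]$ (I suspect the $\tfrac{2\gamma}{\sigma}$ in the statement is a typographical slip for $\tfrac{\gamma}{2\sigma}$; the rest of the lemma only makes sense with this exponent).

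For the second half, normalize $\|u\|_2=1$ so $|V^{(\gamma)}[u]|\leq \lambda C\, T^{(\sigma)}[u]^{\gamma/(2\sigma)}$. In the mass subcritical regime $\sigma>\gamma/2$ the exponent $\gamma/(2\sigma)$ is strictly less than $1$, so Young's inequality $ab\leq \varepsilon a^r + C_\varepsilon b^{r'}$ converts this into $|V^{(\gamma)}[u]|\leq \tfrac{1}{2}T^{(\sigma)}[u] + C$, yielding the two-sided equivalence $T^{(\sigma)}[u]+1\simeq \E^{(\gamma,\sigma)}[u]+1$ immediately from $\E^{(\gamma,\sigma)}=T^{(\sigma)}+V^{(\gamma)}$. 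In the mass critical case $\sigma=\gamma/2$ the exponent is exactly $1$, giving the sharper statement $|V^{(\gamma)}[u]|\leq \lambda C\,T^{(\sigma)}[u]$; then $\E^{(\gamma,\sigma)}[u]\geq (1+\mu\lambda C)\,T^{(\sigma)}[u]$, and the coercivity hypothesis $\mu\lambda C>-1$ is exactly what is needed to absorb the interaction into the kinetic term.

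There is no real obstacle here; the only subtlety is verifying that the Sobolev exponent $\gamma/4$ genuinely satisfies $\gamma/4\leq \sigma$ over the whole parameter range (which follows from $\sigma\geq \gamma/2\geq \gamma/4$) and checking that the interpolation constant produced by HLS and Sobolev embedding can be absorbed into the universal constant $C=C(\gamma,\sigma)$. One should also note that the constant $C$ does \emph{not} depend on $\lambda$, so the factor of $\lambda$ appearing on the right-hand side is exactly the one coming out of the definition of $V^{(\gamma)}$, which is what makes \eqref{lambdascale} meaningful when one later reinterprets $\lambda$ as $\|u\|_2^2$.
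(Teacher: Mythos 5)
Your proof is correct and follows the same route as the paper's: Hardy--Littlewood--Sobolev to reduce to $\|u\|_{12/(6-\gamma)}^4$, then Gagliardo--Nirenberg (equivalently, Sobolev embedding into $\dot H^{\gamma/4}$ plus $L^2$--$\dot H^\sigma$ interpolation) to close. You are also right that the exponent $\tfrac{2\gamma}{\sigma}$ on $T^{(\sigma)}[u]$ is a misprint for $\tfrac{\gamma}{2\sigma}$ --- the same slip appears in the paper's own display, where $\|(-\Delta)^{\sigma/2}u\|_2^{4\gamma/\sigma}$ should read $\|(-\Delta)^{\sigma/2}u\|_2^{\gamma/\sigma}$, as both scaling and the absorption argument for the second claim require.
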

\begin{proof}
	Applying the Hardy-Littlewood-Sobolev inequality and then the Gagliardo-Nirenberg-Sobolev inequality, we find
	\begin{equation*}
	\begin{split}
	|V^{(\gamma)}[u]|&\lesssim\|u\|_{12/(6-\gamma)}^4\\
	&\lesssim\|(-\Delta)^{\sigma/2}u\|_2^{\frac{4\gamma}{\sigma}}\|u\|_2^{4-\frac{\gamma}{\sigma}}.
	\end{split}
	\end{equation*}

\end{proof}
\begin{lemma}[Self-adjointness of $H_N$]\label{hnsa}
	There is some $C=C(\gamma,\sigma)>0$, independent of $N$, such that, if either $\sigma>\gamma/2$, or $\sigma=\gamma/2$ and $\mu\lambda C>-1$, then $H_N$ is both self-adjoint and positive. In this case, we even have
	\begin{equation}\label{hamsupercrit}
	\sum_{i=1}^NS_i+N\lesssim H_N+N\lesssim\sum_{i=1}^NS_i+N
	\end{equation}
	in the sense of quadratic forms.
\end{lemma}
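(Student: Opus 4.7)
The plan is to exploit the bosonic permutation symmetry of $\cH_N$ to reduce the $N$-body form bound for $W_N := \frac{\mu\lambda}{N-1}\sum_{i<j}|x_i-x_j|^{-\gamma}$ to a two-body estimate, and then to apply the KLMN theorem to the form sum $H_N = \sum_i S_i + W_N$. For symmetric $\Psi \in \cH_N \cap H^\sigma(\R^{3N})$, all pair terms are equal, which gives
$$\scp{\Psi}{W_N\Psi} \;=\; \frac{\mu\lambda N}{2}\scp{\Psi}{|x_1-x_2|^{-\gamma}\Psi}\quad\text{and}\quad \scp{\Psi}{\sum_i S_i\Psi}\;=\;N\scp{\Psi}{S_1\Psi},$$
so the overall factors of $N$ cancel and every form bound I derive will automatically be $N$-independent, as required by the lemma.

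\par Next I would establish the two-body estimate by freezing $x_2,\ldots,x_N$ and applying the fractional Hardy inequality $|y|^{-\gamma} \leq C(-\Delta_y)^{\gamma/2}$ (valid for $0 < \gamma < 3$) in the $x_1$-variable. After integration in the remaining variables this yields
$$\scp{\Psi}{|x_1-x_2|^{-\gamma}\Psi}\;\leq\; C\scp{\Psi}{(-\Delta_{x_1})^{\gamma/2}\Psi}\;=\;C\scp{\Psi}{S_1^{\gamma/(2\sigma)}\Psi},$$
where the last equality uses $S_1 = (-\Delta_{x_1})^\sigma$ and the spectral theorem. The hypothesis $\sigma \geq \gamma/2$ ensures the exponent $\gamma/(2\sigma)$ lies in $(0,1]$. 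In the subcritical case $\sigma > \gamma/2$ the exponent is strictly less than one, and Young's inequality $t^{\gamma/(2\sigma)} \leq \eps t + C_\eps$ applied via functional calculus produces
$$\scp{\Psi}{S_1^{\gamma/(2\sigma)}\Psi}\;\leq\;\eps\scp{\Psi}{S_1\Psi}+C_\eps\|\Psi\|_2^2\qquad\forall\eps>0,$$
so $W_N$ is infinitesimally form-bounded with respect to $\sum_i S_i$ and KLMN applies unconditionally. In the critical case $\sigma = \gamma/2$ the exponent equals one and the pair bound is linear, giving a relative form bound of size $\lambda C/2$ on $W_N$; in the focusing case KLMN requires this to be strictly less than one, which is precisely the hypothesis $\mu\lambda C > -1$ once the constant is absorbed so that it matches the one appearing in Lemma \ref{energypos}.

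\par Once KLMN furnishes a self-adjoint form sum $H_N$ on the form domain of $\sum_i S_i$, the same relative bounds rearrange to $|W_N| \leq \tfrac12\sum_i S_i + CN$ as quadratic forms on $\cH_N$, whence
$$\tfrac12\sum_i S_i - CN \;\leq\; H_N \;\leq\; \tfrac32\sum_i S_i + CN,$$
which is \eqref{hamsupercrit} after absorbing the $O(N)$ shifts on both sides. The main obstacle is the bookkeeping of constants in the critical case $\sigma = \gamma/2$: the constant produced by the two-body Hardy inequality plus the symmetrization step must coincide with the Gagliardo-Nirenberg/Hardy-Littlewood-Sobolev constant of Lemma \ref{energypos} so that the single hypothesis $\mu\lambda C > -1$ really does suffice for both statements. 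In the defocusing case $\mu = +1$ the argument trivializes: $W_N \geq 0$ yields $H_N \geq \sum_i S_i \geq 0$ directly, and only the upper form bound requires the two-body calculation above.
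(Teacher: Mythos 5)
Your proof is correct and takes essentially the same route as the paper: both hinge on the fractional Hardy estimate $|x|^{-\gamma}\lesssim(-\Delta)^{\gamma/2}$ followed by a Young/interpolation step in the subcritical regime $\sigma>\gamma/2$ (the paper obtains fractional Hardy by complex interpolation between $\dot H^1\hookrightarrow L^2_{|\cdot|^{-2}}$ and $L^2\hookrightarrow L^2$, while you cite it directly; you are more explicit about the bosonic symmetrization and the KLMN step, both of which the paper leaves implicit). One small correction to your closing remark: the constant produced by the Hardy argument need not coincide with the sharp Gagliardo--Nirenberg/HLS constant of Lemma \ref{energypos} -- the paper's remark following the lemma explicitly acknowledges this, encoding it as $\lambda_{S,c}\le\lambda_{H,c}$ -- so the lemma's $C$ is simply the Hardy constant and the hypothesis $\mu\lambda C>-1$ is generally a strictly stronger restriction than the Hartree threshold, not an equivalent one.
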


\begin{remark}
	Note that one can use the estimates for the proof of Lemma \ref{hnsa} in order to establish \eqref{venergyestimate}. In particular, if we choose $C$ optimal in \eqref{venergyestimate}, it will be below the constant $C(\gamma,\sigma)$ determined in Lemma \ref{hnsa}.
\end{remark}
When minimizing the constant in \eqref{venergyestimate}, we obtain the following soliton equation
\begin{equation}\label{groundstate}
(-\Delta)^{\gamma/2} Q-\frac{1}{|\cdot|^\gamma}*|Q|^2 Q=0.
\end{equation}
Let $Q\in H^{\gamma/2}(\R^3)$ denote the ground state solution. In view of \eqref{lambdascale} and due to scaling invariance of \eqref{groundstate}, we may define
\begin{equation*}
\lambda_{H,c}:=\begin{cases}\|Q\|_2^2\quad &\mbox{\ if\ }\sigma=\gamma/2\\
\infty\quad&\mbox{\ if\ }\sigma>\gamma/2.
\end{cases}
\end{equation*}
As pointed out in the last remark, if we take $C(\gamma,\sigma)$ as in Lemma \ref{hnsa} and define
\begin{equation*}
\lambda_{S,c}:=\begin{cases}1/C(\gamma,\gamma/2)\quad &\mbox{\ if\ }\sigma=\gamma/2\\
\infty\quad&\mbox{\ if\ }\sigma>\gamma/2,
\end{cases}
\end{equation*}
we have $\lambda_{S,c}\leq\lambda_{H,c}$. Note that $\lambda_{S,c}$ is only a lower bound on the optimal constant $\lambda_c^S(N)$ above which the Hamiltonian $H_N$ ceases to be semi-bounded. $\lambda_c^S(N)$ is the unique constant such that in the focusing case, $\mu=-1$, for all $\lambda<\lambda_c^S(N)$, $H_N$ is bounded from below and for $\lambda>\lambda_c^{S}(N)$ we have
$$\inf_{\|\Phi\|_2=1}\scp{\Phi}{H_N\Phi}=-\infty.$$
In the special case $\sigma=1/2=\gamma/2$ it was shown in \cite{LY} that for some universal constant $c>0$ we have
$$\lambda_{H,c}(1-cN^{-1/3})\leq \lambda_c^S(N)\leq \lambda_{H,c}(1+cN^{-1}).$$
As far as the author is concerned, there is no such result in the case of the general fractional Schr\"odinger equation. For the present work, the optimal value of $\lambda_{S,c}$ resp. $\lambda_c^S(N)$ is irrelevant because we want to focus on the derivation of the NLS instead.
 
\begin{proposition}[Well-posedness]\label{fracwellposed}
	Suppose $\gamma\in(0,3/2)$, $\sigma\in[\gamma/2,1]$, and $s\geq\sigma$. Then \eqref{hartree} is well-posed in $H^s(\R^3)$. More precisely, there is $T_{max}\in(0,\infty]$ such that for any $0<T<T_{max}$ there is a unique solution $\vphi \in C\left([0,T); H^s(\R^3)\right) \cap  C^1\left([0,T);  H^{s-2\sigma}(\R^3)\right)$. Furthermore, the solution $\vphi$ continuously depends on the initial datum $\vphi_0$. 
	\par Moreover, if $\mu=1$ or $\lambda<\lambda_{H,c}$, $T_{max}=\infty$, i.e., \eqref{hartree} is globally well-posed. If $\lambda>\lambda_{H,c}$ and $\mu=-1$, there is a family of initial data such that the corresponding solution of \eqref{hartree} blows up in finite time.
\end{proposition}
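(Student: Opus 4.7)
The plan is to establish local well-posedness via a standard Duhamel/contraction argument using Strichartz estimates for the propagator $\e^{-\i t S}$. Writing the integral equation
\begin{equation*}
\vphi_t \;=\; \e^{-\i tS}\vphi_0 \,-\, \i\mu\lambda\int_0^t \e^{-\i(t-\tau)S}\Bigl[\bigl(|\cdot|^{-\gamma}*|\vphi_\tau|^2\bigr)\vphi_\tau\Bigr]\,d\tau,
\end{equation*}
I would estimate the nonlinearity by Hardy--Littlewood--Sobolev (to control the convolution $|\cdot|^{-\gamma}*|\vphi|^2$ in an appropriate $L^p$) followed by Hölder and Sobolev embedding into $H^s$. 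Strichartz estimates for the fractional Schrödinger group on $\R^3$ (with admissible pairs dictated by the dispersion exponent $\sigma$) then allow me to set up a contraction on a ball in $C([0,T];H^s)\cap L^q_tW^{s,r}_x$ for $T$ small depending on $\|\vphi_0\|_{H^s}$. Uniqueness, continuous dependence, and the standard blow-up alternative (the solution extends as long as $\|\vphi_t\|_{H^\sigma}$ stays bounded) follow from the contraction mapping principle in the usual way.

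For the globalization claim, I would invoke Lemma~\ref{fracconservation} (conservation of mass and energy) together with Lemma~\ref{energypos}. When $\mu=1$, the energy is positive and directly controls $\|(-\Delta)^{\sigma/2}\vphi_t\|_2$. When $\mu=-1$ and $\sigma>\gamma/2$, the potential term is subcritical in the kinetic energy (the exponent $4\gamma/\sigma<4$), so Young's inequality absorbs it. In the mass-critical case $\sigma=\gamma/2$, the bound $|V^{(\gamma)}[\vphi_t]|\le \lambda C\,T^{(\sigma)}[\vphi_t]\,\|\vphi_t\|_2^2$ combined with mass conservation and $\lambda<\lambda_{H,c}=\|Q\|_2^2$ (the sharp Gagliardo--Nirenberg constant, attained on solutions of \eqref{groundstate}) yields a uniform a priori bound on $\|\vphi_t\|_{H^\sigma}$. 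A persistence-of-regularity argument, iterating the Strichartz/contraction bound with fractional Leibniz and chain rules applied to the Hartree nonlinearity, then upgrades the $H^\sigma$ control to the $H^s$ control required, so no finite-time blow-up occurs.

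For the blow-up statement when $\mu=-1$ and $\lambda>\lambda_{H,c}$, the plan is to exploit the ground state $Q$: by scaling $Q_\ell(x)=\ell^{(3-\gamma)/2+\sigma}\,Q(\ell x)$ (the invariant scaling of \eqref{hartree}) together with a modulation $\vphi_0=\e^{\i b|x|^{2\sigma}}Q_\ell$ (or a truncated analogue in $H^s$), one constructs initial data of negative energy. I would then apply a localized virial/Morawetz identity: in the non-relativistic case $\sigma=1$ the classical Glassey virial identity gives finite-time blow-up; in the fractional and semi-relativistic cases, one uses the generator $A=\tfrac12(x\cdot\nabla+\nabla\cdot x)$ and the commutator $[(-\Delta)^\sigma,A]=2\sigma(-\Delta)^\sigma$ to obtain a differential inequality $\tfrac{d^2}{dt^2}\scp{\vphi_t}{|x|^2\vphi_t}\le C\,\E^{(\gamma,\sigma)}[\vphi_0]<0$ (appropriately localized via a cutoff to handle nonlocality), forcing the variance to become negative in finite time, which is incompatible with global existence.

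The main obstacle is the blow-up proof for general $\sigma\in[\gamma/2,1)$: the fractional virial identity is non-local and the cutoff error terms are delicate, requiring either the Balakrishnan/Stinga--Torrea extension or the pointwise arguments of Fröhlich--Lenzmann. The local well-posedness and conditional globalization steps, by contrast, are essentially bookkeeping with Strichartz estimates and the a priori bounds already furnished by Lemmas~\ref{fracconservation} and \ref{energypos}.
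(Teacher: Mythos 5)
Your proposal is sound in outline but takes a different, and in places more laborious, route than the paper. For the local theory, the paper does \emph{not} use Strichartz estimates at all: Lemma~\ref{fraccontract} shows that the Hartree nonlinearity $u\mapsto J_\gamma(u)u = (\mu\lambda|\cdot|^{-\gamma}*|u|^2)u$ is a locally Lipschitz map from $H^s(\R^3)$ into itself as soon as $s\geq\gamma/2$, because the convolution with $|\cdot|^{-\gamma}\in L^{3/\gamma}_w$ is smoothing enough (via Hardy--Littlewood--Sobolev/weak Young plus the fractional Leibniz rule) that no auxiliary $L^q_tW^{s,r}_x$ space is needed. The contraction then runs directly in $C([0,T];H^s)$ using only the $L^2$-isometry of $\e^{-\i tS}$. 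Your Strichartz-based iteration would also work, but it has to contend with the derivative loss in the fractional Strichartz estimates of Lemma~\ref{fsstrichartz} and is overkill for the regime $s\geq\sigma\geq\gamma/2$; the paper reserves Strichartz for the genuinely harder low-regularity Proposition~\ref{frlowregwellposed}. For globalization your argument matches the paper's: conservation laws (Lemma~\ref{fracconservation}), the sharp control $|V^{(\gamma)}[u]|\leq\lambda C\,T^{(\sigma)}[u]^{\gamma/(2\sigma)}\|u\|_2^{4-\gamma/\sigma}$ from Lemma~\ref{energypos} (note the exponent is $\gamma/(2\sigma)$, not $4\gamma/\sigma$ as you wrote, though the subcritical/critical case distinction you draw is still correct), and persistence of regularity (Lemma~\ref{fracpersreg}) to upgrade the a priori $H^\sigma$ bound. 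For finite-time blow-up the paper simply cites \cite{Zh}; your sketch of a localized fractional virial argument is a legitimate self-contained alternative, and you correctly identify the main technical obstruction (nonlocality of $(-\Delta)^\sigma$ and the cutoff error terms), but a full execution of that argument is substantially more work than what the paper commits to.
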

From the proof of this proposition, we get control over the growth rate of $H^s$-norms as follows.
\begin{lemma}[Persistence of regularity]\label{fracpersreg}
	Suppose $\gamma\in(0,3/2)$, $\sigma\in[\gamma/2,1]$, $s\geq0$, $r:=\max\{s;\sigma\}$, and $\vphi_0\in H^r(\R^3)$. Fix a time $T\in(0,\infty)$ such that
	$$\nu:=\nu(T)\;:=\;\sup_{|\tau|\leq T} \|\vphi_\tau\|_{H^{\frac\gamma2}}<\infty.$$
	Let $\vphi$ be a solution of \eqref{hartree} on $[0,T]$. Then there is a universal constant $c$ such that for all $t\in(0,T]$ we have
	$$\|\vphi_t\|_{H^s}\lesssim\e^{c\nu^2t} \|\vphi_0\|_{H^s}.$$
	If $\mu=1$ or $\lambda<\lambda_{H,c}$, we can chose $T=\infty$ and we even have for any $0<s\leq\sigma$
	$$\|\vphi_t\|_{H^s}\lesssim\|\vphi_0\|_{H^\sigma}^{\frac{s}{\sigma}}$$
	Analogous statements holds true if we replace $\vphi_t$ by its regularized version $\vphi_t^{(\alpha)}$ and $\nu$ by
	$$\nu^{(\alpha)}:=\sup_{|\tau|\leq T} \|\vphi_\tau^{(\alpha)}\|_{H^{\frac\gamma2}}.$$
\end{lemma}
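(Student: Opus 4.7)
The plan is to run a standard Sobolev-energy argument: apply $\langle \nabla\rangle^s$ to \eqref{hartree}, test against $\langle \nabla\rangle^s \vphi_t$ in $L^2$, and discard the kinetic contribution via its vanishing imaginary part (since $S$ is self-adjoint). This produces
$$\tfrac12\tfrac{d}{dt}\|\vphi_t\|_{H^s}^2=\mu\lambda\,\Im \scp{\langle\nabla\rangle^s\vphi_t}{\langle\nabla\rangle^s\bigl[(|\cdot|^{-\gamma}*|\vphi_t|^2)\vphi_t\bigr]},$$
and so everything reduces to the trilinear estimate
$$\|(|\cdot|^{-\gamma}*|\vphi|^2)\vphi\|_{H^s}\;\lesssim\;\|\vphi\|_{H^{\gamma/2}}^2\|\vphi\|_{H^s}$$
for $s\geq\sigma$. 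Combined with Cauchy-Schwarz this gives $\tfrac{d}{dt}\|\vphi_t\|_{H^s}^2\lesssim\lambda\nu^2\|\vphi_t\|_{H^s}^2$, and Gronwall yields the claimed exponential bound $\|\vphi_t\|_{H^s}\lesssim\e^{c\nu^2t}\|\vphi_0\|_{H^s}$.

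To establish the trilinear estimate I would invoke Kato-Ponce (fractional Leibniz). Writing $U:=|\cdot|^{-\gamma}*|\vphi|^2$,
$$\|U\vphi\|_{H^s}\lesssim\|U\|_{L^\infty}\|\vphi\|_{H^s}+\|U\|_{W^{s,p}}\|\vphi\|_{L^q}$$
for a suitable Hölder pair $(p,q)$ with $1/p+1/q=1/2$. The first factor is controlled via the Fourier-side identity $\widehat{|\cdot|^{-\gamma}}\simeq|\xi|^{\gamma-3}$ combined with the Sobolev embedding $H^{\gamma/2}\hookrightarrow L^{6/(3-\gamma)}$, yielding $\|U\|_{L^\infty}\lesssim\|\vphi\|_{H^{\gamma/2}}^2$ by a (refined) Hardy-Littlewood-Sobolev argument. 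The second factor follows from the mapping properties of the Riesz-type convolution $|\cdot|^{-\gamma}*\,\cdot\,$ together with the Sobolev product rule applied to $|\vphi|^2$, closed by Sobolev embeddings that bound the intermediate norms by $\|\vphi\|_{H^{\gamma/2}}\|\vphi\|_{H^s}$. The case $0\leq s<\sigma$ is handled by interpolating between the conservation of $\|\vphi_t\|_{L^2}$ from Lemma \ref{fracconservation} and the already-proven bound at level $\sigma$.

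For the global statement, if $\mu=1$ or $\lambda<\lambda_{H,c}$, Lemma \ref{energypos} together with energy conservation yields $\|\vphi_t\|_{H^\sigma}$ uniformly bounded by a polynomial expression in $\|\vphi_0\|_{H^\sigma}$. For $0<s\leq\sigma$ one then interpolates
$$\|\vphi_t\|_{H^s}\leq \|\vphi_t\|_{L^2}^{1-s/\sigma}\|\vphi_t\|_{H^\sigma}^{s/\sigma}\lesssim\|\vphi_0\|_{H^\sigma}^{s/\sigma}.$$
The regularized equation \eqref{reghartree} is treated identically: since $0\leq(|x|^\gamma+\alpha)^{-1}\leq|x|^{-\gamma}$ pointwise, each estimate on the nonlinearity transfers verbatim with constants uniform in $\alpha$, provided $\nu$ is replaced by $\nu^{(\alpha)}$.

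The main technical obstacle is the $L^\infty$-bound $\|U\|_{L^\infty}\lesssim\|\vphi\|_{H^{\gamma/2}}^2$, which sits precisely at the failing endpoint of Hardy-Littlewood-Sobolev, so a direct HLS application does not work. Either a Littlewood-Paley dyadic decomposition of $|\cdot|^{-\gamma}$, or a Lorentz-space (weak-$L^{3/\gamma}$) refinement of Young's convolution inequality, is needed; alternatively one may replace the $L^\infty$ factor by a $BMO$-factor in a commutator version of Kato-Ponce. Once this endpoint issue is settled, the remaining steps are routine fractional-calculus manipulations.
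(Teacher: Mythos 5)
Your argument is correct in substance and rests on the same two pillars as the paper's proof: the trilinear estimate $\|(|\cdot|^{-\gamma}*|\vphi|^2)\vphi\|_{H^s}\lesssim\|\vphi\|_{H^{\gamma/2}}^2\|\vphi\|_{H^s}$ (which is exactly Lemma~\ref{fraccontract}) and a Gronwall closure, and for the global statement energy conservation plus Lemma~\ref{energypos} plus interpolation. The one structural difference is that you run the Gronwall step in differential form — differentiating $\|\langle\nabla\rangle^s\vphi_t\|_2^2$ and killing the kinetic term by taking the imaginary part — whereas the paper works with the Duhamel (integral) formulation
$$\vphi_t=\e^{-\i(-\Delta)^\sigma t}\vphi_0-\i\int_0^t\e^{-\i(-\Delta)^\sigma(t-\tau)}\bigl(\tfrac{\mu\lambda}{|\cdot|^\gamma}*|\vphi_\tau|^2\bigr)\vphi_\tau\,\mathrm d\tau,$$
uses that $\e^{-\i(-\Delta)^\sigma t}$ is an isometry on every Sobolev space, and then applies the integral form of Gronwall. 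The Duhamel route is cleaner at the regularity in play: the pairing $\scp{\langle\nabla\rangle^s\vphi_t}{\langle\nabla\rangle^sS\vphi_t}$ in your energy identity is formal when $\vphi_t$ is only in $H^r$ with $r<s+\sigma$ (which happens, e.g., for $0<s<\sigma$), so a rigorous version of your argument needs an extra regularization/density step; the Duhamel argument works directly at the $H^s$ level. Your endpoint concern about $\|U\|_{L^\infty}\lesssim\|\vphi\|_{H^{\gamma/2}}^2$ is well-placed and is indeed exactly what the paper handles (inside the proof of Lemma~\ref{fraccontract}) via the weak Young inequality using $|\cdot|^{-\gamma}\in L^{3/\gamma}_w$, i.e.\ precisely the Lorentz-space refinement you describe. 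Finally, for $0\leq s<\sigma$ you interpolate against the $H^\sigma$ bound, which gives a control in terms of $\|\vphi_0\|_{H^\sigma}$ rather than $\|\vphi_0\|_{H^s}$ as stated; this detour is unnecessary, since Lemma~\ref{fraccontract} is valid for every $s\geq0$ (under $u\in H^{\max\{\gamma/2;s\}}$, which is guaranteed by $\vphi_t\in H^r$, $r=\max\{s;\sigma\}\geq\max\{\gamma/2;s\}$), so the Gronwall argument closes directly at level $s$ for every $s\geq0$.
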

\begin{proposition}[Well-posedness for low Sobolev regularity in the non-relativistic case]\label{frlowregwellposed}
	Let $\sigma=1$ and $s\in[0,1)$. Then \eqref{hartree} is globally well-posed in $H^s(\R^3)$, where we set $H^0(\R^3):=L^2(\R^3)$. More precisely, for any $0<T<\infty$ there is a unique solution $\vphi \in C\left([0,T); H^s(\R^3)\right) \cap  C^1\left([0,T);  H^{s-2}(\R^3)\right)$. Furthermore, the solution $\vphi$ continuously depends on the initial datum $\vphi_0$.
	\par Moreover, we have for any $t>0$
	\begin{equation*}
	\|\vphi_t\|_{H^s}\lesssim (t+1)\|\vphi_0\|_{H^s}.
	\end{equation*}
\end{proposition}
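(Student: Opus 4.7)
The plan is to prove Proposition \ref{frlowregwellposed} via a Strichartz-based low-regularity well-posedness argument, exploiting the mass-subcriticality $\gamma<2\sigma=2$ of the non-relativistic Hartree nonlinearity. Writing \eqref{hartree} in Duhamel form with the free Schr\"odinger group $U(t):=\e^{\i t\Delta}$,
\begin{equation*}
\vphi_t=U(t)\vphi_0-\i\mu\lambda\int_0^t U(t-\tau)V(\vphi_\tau)\,d\tau,\qquad V(\vphi):=\Bigl(\tfrac1{|\cdot|^\gamma}*|\vphi|^2\Bigr)\vphi,
\end{equation*}
I would set up a contraction in $X^s(I):=L^\infty_t(I;H^s)\cap L^q_t(I;W^{s,r}_x)$ for a Schr\"odinger-admissible pair $(q,r)$ with $\tfrac2q+\tfrac3r=\tfrac32$ and $2\leq r<6$. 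The standard three-dimensional Strichartz inequality then reduces the problem to bounding $V(\vphi)$ in a dual Strichartz norm.

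The crucial nonlinear estimate uses the Hardy-Littlewood-Sobolev inequality (applicable since $\gamma\in(0,3/2)\subset(0,3)$),
\begin{equation*}
\Bigl\|\tfrac1{|\cdot|^\gamma}*|\vphi|^2\Bigr\|_{L^a_x}\lesssim \|\vphi\|_{L^b_x}^2,\qquad \tfrac1a=\tfrac2b-\tfrac{3-\gamma}3,
\end{equation*}
combined with H\"older in $x$ and the Sobolev embedding $H^s\hookrightarrow L^{r_0}_x$ to land in the Strichartz-dual space. For $s\in(0,1)$, the Kato-Ponce fractional Leibniz inequality distributes $\langle\nabla\rangle^s$ across the three factors of $V$; derivatives falling on the convolution produce a Riesz potential of order $\gamma+s<5/2$, again controlled by HLS. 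Mass subcriticality $\gamma<2\sigma$ is what allows one to match the exponents so that two of the factors are controlled by Strichartz norms depending only on $\|\vphi\|_{L^2}$, while the third carries the $H^s$-content with a positive power of the slice length in front.

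With this estimate a standard contraction yields a unique solution on $[0,\tau_0]$ with $\tau_0=\tau_0(\|\vphi_0\|_{L^2})$, together with continuous dependence. Since the $L^2$-mass is preserved along the flow -- Lemma \ref{fracconservation} extends to $s=0$ by approximating $\vphi_0\in L^2$ by $H^\sigma$-data, running the already-established $H^\sigma$-theory, and passing to the limit via the contraction's continuous dependence -- the local time $\tau_0$ is uniform, so iteration produces a global solution in $H^s$. For the polynomial growth bound, I would iterate the Strichartz estimate on unit-length slices $[k,k+1]$: on each slice the nonlinear contribution is controlled additively in terms of $\|\vphi_k\|_{H^s}$ and a factor depending only on the conserved $\|\vphi_0\|_{L^2}$, and summing over $\lesssim t$ slices yields the claimed $(t+1)$-linear bound.

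The main obstacle is the nonlinear bookkeeping: one must select the Strichartz pair and the HLS/H\"older exponents so that both (i) the endpoint $s=0$ closes for the full range $\gamma\in(0,3/2)$ (not merely the Coulomb case $\gamma=1$), and (ii) the fractional Leibniz split for $s\in(0,1)$ preserves the mass-subcritical structure. The restriction $\gamma<3/2$, strictly below the Hartree threshold, is precisely what makes such a balance of exponents feasible; once the nonlinear estimate is in place, the remainder of the argument follows the standard global Strichartz well-posedness blueprint.
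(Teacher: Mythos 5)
Your overall strategy mirrors the paper's own proof quite closely: both write the equation in Duhamel form, both rely on Strichartz estimates together with Hardy--Littlewood--Sobolev and a fractional Leibniz rule to close a contraction, and both iterate the local result using mass conservation. The paper packages the nonlinear estimate as Lemma~\ref{fraccontract} and then records \eqref{nrslg1}--\eqref{nrglobalslg}, which plays the role of your $X^s$-contraction; your invocation of Kato--Ponce is simply the unbundled version of the Leibniz estimate used there, and your remark about approximating $L^2$-data by $H^\sigma$-data to inherit mass conservation fills in a point the paper leaves implicit. So up through global existence the proposal is a faithful account of the argument.

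The step that does not go through as you describe it (and, it should be said, does not go through as the paper describes it either) is the final $(t+1)$-linear growth bound. On a unit slice $[k,k+1]$ the nonlinear term must be estimated by placing the $s$ derivatives on exactly one of the three factors of $\bigl(\tfrac{1}{|\cdot|^\gamma}*|\vphi|^2\bigr)\vphi$, so the best you can extract is a bound of the shape
$\|\vphi\|_{L^\infty_{[k,k+1]}H^s}\le \|\vphi_k\|_{H^s}+\eps\,\|\vphi\|_{L^\infty_{[k,k+1]}H^s}$
with $\eps$ depending only on the conserved $L^2$-mass. Absorbing the last term produces the \emph{multiplicative} recursion $\|\vphi_{k+1}\|_{H^s}\le (1-\eps)^{-1}\|\vphi_k\|_{H^s}$, whose composition over $\lceil t\rceil$ slices yields an exponential bound $\|\vphi_t\|_{H^s}\lesssim\e^{Ct}\|\vphi_0\|_{H^s}$, not the claimed linear one. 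Your phrase ``controlled additively in terms of $\|\vphi_k\|_{H^s}$'' is precisely where this slips: an additive increment that itself scales with $\|\vphi_k\|_{H^s}$ does not sum to $(t+1)\|\vphi_0\|_{H^s}$; one would need the increment to be bounded independently of the running $H^s$-norm, which the trilinear structure forbids. Unless you can supply a genuinely different mechanism (e.g.\ an almost-conserved quantity at the $H^s$ level), the honest conclusion of this Strichartz/Gronwall scheme is exponential growth.
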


\begin{remark}
	Note that in the case $\sigma<\gamma/2$, we only obtain conditional well-posedness, as shown, e.g., in \cite{GZ}.
\end{remark}
The next two results are both generalizations and improvements of results given in \cite{MS}. For a proof, we refer to appendix \ref{hartreeregapprox}.
\begin{lemma}\label{fracregapprox}
	Suppose $\gamma\in(0,3/2)$ and $\sigma\in[\gamma/2,1]$. Let $\varepsilon\in[0,1]$, define $r:=\max\{\sigma;(1+\varepsilon)\gamma/2\}$, and $\vphi_0\in H^{r}(\R^3)$. Fix a time $T\in(0,\infty)$ such that
	$$\nu=\sup_{|\tau|\leq T} \|\vphi_\tau\|_{H^{\frac\gamma2}}<\infty.$$
	Let $\vphi$ be the solution of \eqref{hartree} and $\vphi^{(\alpha)}$ be a solution of \eqref{reghartree}. Then there is a constant $C=C_{\nu,\|\vphi_0\|_{H^{r}},T}$ such that for all $t\in[0,T]$ we have
	\begin{align*}
	\|\vphi_t-\vphi_t^{(\alpha)}\|_2&\leq C\alpha^{\frac{1+\varepsilon}2}\\
	\|(-\Delta)^{\frac\gamma4}(\vphi_t-\vphi_t^{(\alpha)})\|_2&\leq C\alpha^\varepsilon.			
	\end{align*}
\end{lemma}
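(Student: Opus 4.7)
The plan is to set $w_t := \vphi_t - \vphi_t^{(\alpha)}$, so that $w_0 = 0$ and
\[
i\partial_t w_t \;=\; Sw_t + \mu\lambda(A_1(t) + A_2(t)),
\]
where the source term is $A_1(t) := K_\alpha * (|\vphi_t|^2\vphi_t)$ with kernel
\[
K_\alpha(x) \;=\; \frac{1}{|x|^\gamma} - \frac{1}{|x|^\gamma + \alpha} \;=\; \frac{\alpha}{|x|^\gamma(|x|^\gamma + \alpha)},
\]
and the Lipschitz-type term is $A_2(t) := \frac{1}{|\cdot|^\gamma + \alpha} * (|\vphi_t|^2 \vphi_t - |\vphi_t^{(\alpha)}|^2\vphi_t^{(\alpha)})$. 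The key scaling identity $K_\alpha(x) = \alpha^{-1}K_1(\alpha^{-1/\gamma}x)$ produces $\|K_\alpha\|_p = \alpha^{3/(\gamma p) - 1}\|K_1\|_p$ for $p \in (3/(2\gamma), 3/\gamma)$, which is the range in which $K_1$ is integrable given its $|x|^{-\gamma}$ singularity at the origin and $|x|^{-2\gamma}$ decay at infinity. Throughout, Lemma \ref{fracpersreg} provides the uniform bounds $\|\vphi_t\|_{H^r} + \|\vphi_t^{(\alpha)}\|_{H^r} \lesssim 1$ on $[0,T]$.

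For the $L^2$ estimate, I would compute $\tfrac{d}{dt}\|w_t\|_2^2 = -2\mu\lambda\,\Im\langle w_t, A_1 + A_2\rangle$, exploiting self-adjointness of $S$. For the source, Young's inequality gives $\|A_1\|_2 \leq \|K_\alpha\|_p \|\vphi_t\|_{3q}^3$ with $1/p + 1/q = 3/2$; the choice $p = 6/(\gamma(3+\varepsilon))$ yields $\|K_\alpha\|_p \lesssim \alpha^{(1+\varepsilon)/2}$, and the Sobolev embedding $H^{\gamma(3+\varepsilon)/6} \hookrightarrow L^{3q}$ together with the calibration $r \geq (1+\varepsilon)\gamma/2 \geq \gamma(3+\varepsilon)/6$ controls $\|\vphi_t\|_{3q}$ by $\|\vphi_t\|_{H^r}$. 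For the Lipschitz piece, expand $|\vphi_t|^2\vphi_t - |\vphi_t^{(\alpha)}|^2\vphi_t^{(\alpha)}$ as a polynomial in $w_t$ with $\vphi_t,\vphi_t^{(\alpha)}$ coefficients, and apply Cauchy--Schwarz to the resulting double integrals together with the Hardy-type bound $\||\cdot|^{-\gamma}*|u|^2\|_\infty \lesssim \|u\|_{H^{\gamma/2}}^2$ (for $u = \vphi_t$ or $\vphi_t^{(\alpha)}$) to obtain $|\langle w_t, A_2\rangle| \lesssim \nu^2 \|w_t\|_2^2$. A Gr\"onwall argument and $w_0 = 0$ then give $\|w_t\|_2 \lesssim \alpha^{(1+\varepsilon)/2}$.

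For the $\dot H^{\gamma/2}$ estimate, I would work from the Duhamel representation $w_t = -i\mu\lambda\int_0^t e^{-iS(t-s)}(A_1(s) + A_2(s))\,ds$ and use the isometry of $e^{-iSt}$ on $\dot H^{\gamma/2}$, reducing matters to pointwise-in-time estimates. For the source, $(-\Delta)^{\gamma/4}A_1 = K_\alpha * (-\Delta)^{\gamma/4}(|\vphi_t|^2\vphi_t)$, and Young with the adapted choice $p = 3/(\gamma(1+\varepsilon))$ gives $\|K_\alpha\|_p \lesssim \alpha^\varepsilon$; the dual exponent, Kato--Ponce fractional Leibniz, and Sobolev embedding then bound the remaining factor by powers of $\|\vphi_t\|_{H^r}$. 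For the Lipschitz piece, decompose $(|\cdot|^\gamma + \alpha)^{-1} = |\cdot|^{-\gamma} - K_\alpha$: on the Riesz-potential part, use $(-\Delta)^{\gamma/4}(|\cdot|^{-\gamma}*D) = c(-\Delta)^{(3\gamma - 6)/4}D$ together with the dual Sobolev embedding $L^{2/(3-\gamma)} \hookrightarrow \dot H^{(3\gamma-6)/2}$, H\"older, and $\|\vphi_t\|_{L^{6/(3-\gamma)}} \lesssim \|\vphi_t\|_{\dot H^{\gamma/2}}$ to get a bound of the form $\nu^2 \|w_s\|_{\dot H^{\gamma/2}}$; the residual $K_\alpha$-piece is absorbable into the source. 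A final Gr\"onwall on $t \mapsto \|(-\Delta)^{\gamma/4} w_t\|_2$ closes the argument.

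The main obstacle will be propagating the fractional derivative $(-\Delta)^{\gamma/4}$ through the non-integrable kernel $K_\alpha$ while simultaneously handling the cubic nonlinearity, all within the narrow $L^p$-range $p \in (3/(2\gamma), 3/\gamma)$, which forces a delicate choice of Sobolev exponents. The regularity $r = \max\{\sigma, (1+\varepsilon)\gamma/2\}$ is tuned precisely so that the Young/Kato--Ponce/Sobolev chain closes; at the borderline $\varepsilon = 1$ this range degenerates and the argument requires special care.
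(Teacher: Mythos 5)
Your overall strategy is sound and is essentially a repackaged version of the paper's proof of this lemma (Appendix \ref{hartreeregapprox}), with a few genuine differences worth noting. Where the paper controls the defect kernel via the pointwise domination $K_\alpha(x)=\tfrac{\alpha}{|x|^\gamma(|x|^\gamma+\alpha)}\leq \alpha^\varepsilon|x|^{-(1+\varepsilon)\gamma}$ and then weak Young (since $|\cdot|^{-(1+\varepsilon)\gamma}\in L^{3/((1+\varepsilon)\gamma)}_w$), you instead compute $\|K_\alpha\|_{L^p}=\alpha^{3/(\gamma p)-1}\|K_1\|_{L^p}$ by scaling and use strong Young. These are equivalent in content (you have chosen $p$ so that $3/(\gamma p)-1$ matches $(1+\varepsilon)/2$ or $\varepsilon$ precisely as the paper's $\varepsilon$ does), but the paper's Lorentz version has the advantage of extending cleanly to the endpoints $\varepsilon\in\{0,1\}$, which you correctly flag as problematic for the strong-$L^p$ approach. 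You also reverse the paper's order by proving the $L^2$ bound first; this is legitimate, since the term the paper bounds via the already-established $\dot H^{\gamma/2}$ rate can instead be treated by $\|A_1\|_2\lesssim\alpha^{(1+\varepsilon)/2}$ and Cauchy--Schwarz inside Gr\"onwall, as you do.

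The genuine gap is in your $\dot H^{\gamma/2}$ Lipschitz estimate. First, the displayed exponent is wrong: you quote the dual Sobolev embedding $L^{2/(3-\gamma)}\hookrightarrow\dot H^{(3\gamma-6)/2}$, but $2/(3-\gamma)<1$ for $\gamma<1$, so this is not even a Lebesgue space in most of the admissible range $\gamma\in(0,3/2)$; and for $\gamma\in(1,3/2)$ this embedding would leave the other factor in $L^\infty$, which $H^{\gamma/2}$ data does not control. Second, and more fundamentally, the fractional derivative $(-\Delta)^{\gamma/4}$ hits the product $(|\cdot|^{-\gamma}*D)\vphi_t$, not just the Riesz potential $|\cdot|^{-\gamma}*D$; you need a fractional Leibniz rule here (as you use in the source term, but omit here). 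If you apply the Leibniz rule with the split $L^{6/\gamma}\times L^{6/(3-\gamma)}\to L^2$, the convolution factor lands in $L^{6/\gamma}$, and then Sobolev gives $\|(-\Delta)^{(3\gamma-6)/4}D\|_{6/\gamma}\lesssim\|D\|_{3/(3-\gamma)}$, not $\|D\|_{2/(3-\gamma)}$. The corrected chain then closes for all $\gamma\in(0,3/2)$ by H\"older and $\|\cdot\|_{6/(3-\gamma)}\lesssim\|\cdot\|_{\dot H^{\gamma/2}}$, which is what the paper does via its Leibniz rule (Lemma \ref{leibniz}) and weak Young in \eqref{gradestsplit4aid}. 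A minor cosmetic point: you write $A_1=K_\alpha*(|\vphi_t|^2\vphi_t)$, but the Hartree nonlinearity is $(K_\alpha*|\vphi_t|^2)\vphi_t$; the $L^2$-estimate you state happens to hold for both after a symmetric H\"older--Young split, but for the $\dot H^{\gamma/2}$ estimate one cannot commute $(-\Delta)^{\gamma/4}$ past the convolution without a Leibniz step, so the distinction matters.
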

As a direct consequence, we obtain the following.
\begin{corollary}\label{fracregregpers}
	Suppose $\gamma\in(0,3/2)$, $\sigma=\gamma/2$, and $\vphi_0\in H^{\gamma/2}(\R^3)$. Fix a time $T\in(0,\infty]$ such that
	$$\nu=\sup_{|\tau|\leq T} \|\vphi_\tau\|_{H^{\frac\gamma2}}<\infty.$$
	Let $\vphi$ be the solution of \eqref{hartree} and $\vphi^{(\alpha)}$ be a solution of \eqref{reghartree}. Then there is a constant $\kappa=\kappa_{\nu,\|\vphi_0\|_{H^{\gamma/2}},T}$ such that
	\begin{equation*}
	\sup_{t\in[0,T]}\sup_{\alpha\in(0,1)}\|\vphi_\tau^{(\alpha)}\|_{H^{\frac\gamma2}}\leq \kappa.
	\end{equation*}
\end{corollary}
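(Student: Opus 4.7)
The plan is to deduce the corollary directly from Lemma \ref{fracregapprox} by making the optimal choice of the regularity parameter $\varepsilon$. Since we are in the mass-critical case $\sigma=\gamma/2$ and only have $\vphi_0\in H^{\gamma/2}(\R^3)$, the only admissible value in Lemma \ref{fracregapprox} is $\varepsilon=0$: this makes the required regularity $r=\max\{\sigma;(1+\varepsilon)\gamma/2\}=\gamma/2$, matching the hypothesis. With this choice, the second estimate of the lemma yields
\begin{equation*}
\|(-\Delta)^{\gamma/4}(\vphi_t-\vphi_t^{(\alpha)})\|_2\;\leq\; C_{\nu,\|\vphi_0\|_{H^{\gamma/2}},T}\cdot\alpha^0\;=\;C_{\nu,\|\vphi_0\|_{H^{\gamma/2}},T}
\end{equation*}
uniformly in $\alpha>0$ and $t\in[0,T]$.

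Next I would combine this difference estimate with a triangle inequality and $L^2$-mass conservation for the regularized equation. The regularized potential $(|\cdot|^\gamma+\alpha)^{-1}$ is real and symmetric, hence the argument of Lemma \ref{fracconservation} applies verbatim to \eqref{reghartree}, giving $\|\vphi_t^{(\alpha)}\|_2=\|\vphi_0\|_2$ for all $t\in[0,T]$. Writing
\begin{equation*}
\|\vphi_t^{(\alpha)}\|_{H^{\gamma/2}}\;\leq\;\|\vphi_t^{(\alpha)}\|_2+\|(-\Delta)^{\gamma/4}\vphi_t\|_2+\|(-\Delta)^{\gamma/4}(\vphi_t-\vphi_t^{(\alpha)})\|_2\;\leq\;\|\vphi_0\|_2+\nu+C,
\end{equation*}
the right-hand side is independent of both $\alpha\in(0,1)$ and $t\in[0,T]$, and so we may set $\kappa:=\|\vphi_0\|_2+\nu+C$.

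The main technical point to check is that Lemma \ref{fracregapprox} genuinely remains valid at the endpoint $\varepsilon=0$; one has to verify that the constant produced by its proof does not blow up as $\varepsilon\downarrow 0$ when $\sigma=\gamma/2$. Since the difference $\vphi-\vphi^{(\alpha)}$ satisfies an inhomogeneous equation with source $\bigl(\tfrac{1}{|\cdot|^\gamma}-\tfrac{1}{|\cdot|^\gamma+\alpha}\bigr)\ast|\vphi|^2\cdot\vphi$, the worst that can happen at $\varepsilon=0$ is that the $\alpha$-gain disappears, but the constant itself is finite as long as $\vphi_0\in H^{\gamma/2}$ and $\nu<\infty$; no further regularity is needed. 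Everything else is purely algebraic. No additional continuity-in-$\alpha$ or continuation argument is required, because the bound for each fixed $\alpha$ follows from the lemma and the uniformity is built into the constant through its dependence only on $\nu$, $\|\vphi_0\|_{H^{\gamma/2}}$, and $T$.
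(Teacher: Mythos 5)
Your proof is correct and follows the route the paper intends: the corollary is stated as a direct consequence of Lemma \ref{fracregapprox}, and taking $\varepsilon=0$ there (so $r=\max\{\sigma;(1+\varepsilon)\gamma/2\}=\gamma/2$) yields the uniform-in-$\alpha$ bound $\|(-\Delta)^{\gamma/4}(\vphi_t-\vphi_t^{(\alpha)})\|_2\leq C$, which combined with the triangle inequality, $\|\vphi_\tau\|_{H^{\gamma/2}}\leq\nu$, and $L^2$-conservation for the regularized flow gives the claim. Your remark that the lemma's constant stays finite at $\varepsilon=0$ is already ensured by the lemma's stated range $\varepsilon\in[0,1]$, so no extra verification is needed.
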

\par Before continuing, let us introduce some notation first. Suppose $\Psi_{N,t}$ is a solution of \eqref{schroedinger}, $\vphi_t$ is a solution of \eqref{hartree}, $\vphi_t^{(\alpha)}$ is a solution of \eqref{reghartree}, and $\Psi_{N,t}^{(\alpha)}$ is given by \eqref{regschroedinger}.
We define the reduced density matrices 
\begin{equation*}
\begin{split}
\gamma_{N,t}^{(k)}\;&:=\;\tr_{k+1,...,N}|\Psi_{N,t} \rangle \langle \Psi_{N,t}|,\\
\gamma_{N,t}^{(k,\alpha)}\;&:=\;\tr_{k+1,...,N}\ket{\Psi_{N,t}^{(\alpha)}}\bra{\Psi_{N,t}^{(\alpha)}},
\end{split}
\end{equation*}
as well as the projections
\begin{equation*}
\begin{split}
P^{(k)}_t\;&:=\;\ket{\vphi_t^{\otimes k}} \bra{\vphi_t^{\otimes k}},\\
P^{(k,\alpha)}_t\;&:=\;\ket{(\vphi_t^{(\alpha)})^{\otimes k}} \bra{(\vphi_t^{(\alpha)})^{\otimes k}}.
\end{split}
\end{equation*}
Furthermore, we introduce the Pickl functionals, see \cite{P}, given by
\begin{equation}\label{def:aN}
\begin{split}
a_{N,t}\;&:=\;\scp{\Psi_{N,t}}{\left(1-(\ket{\vphi_t}\bra{\vphi_t})_1\right)\Psi_{N,t}},\\
a_{N,t}^{(\alpha)}\;&:=\;\scp{\Psi_{N,t}^{(\alpha)}}{\left(1-(\ket{\vphi_t^{(\alpha)}}\bra{\vphi_t^{(\alpha)}})_1\right)\Psi_{N,t}^{(\alpha)}}.
\end{split}
\end{equation}
Let for $r \in \R$ 
\begin{equation}\label{def:skr}
S_{k,r}:=\sum_{i=1}^k (1+S_i)^r
\end{equation}
and we denote the Hilbert-Schmidt norm of an operator acting on $L^2(\R^3)$ with $\|.\|_{HS}$. In the following, we will obtain convergence for 
$$\tr\left|S_{k,r}^{\frac12} (\gamma_{N,t}^{(k)}-P^{(k)}_t) S_{k,r}^{\frac12}\right|,$$
respectively for
$$\tr\left|S_{k,r}^{\frac12} (\gamma_{N,t}^{(k,\alpha)}-P^{(k)}_t) S_{k,r}^{\frac12}\right|.$$
Then the respective result in Theorem \ref{main} follows from \eqref{tracecyclicity}. Let us recall an important fact from \cite{AH}. 


\begin{proposition}[Anapolitanos, Hott \cite{AH}]\label{oldmainthm}
	For any $\theta \in [0,1)$ and any $s\geq0$ we have the estimate
	\begin{equation*}
	\tr\left|S_{k,\theta s}^{\frac12} (\gamma_{N,t}^{(k)}-P^{(k)}_t) S_{k,\theta s}^{\frac12}\right|\;\leq\; kC_{\Psi,\vphi,\theta,s} ( a_{N,t}^{\min\{\frac{1}{2};1-\theta\}} + \|\gamma_{N,t}^{(k)}-P^{(k)}_t\|_{HS}^{1-\theta}),
	\end{equation*}
	where $C_{\Psi_{N,t},\vphi_t,\theta,s}:=	2 \left(\|S_{1,s}^\frac{1}{2}\Psi_{N,t}\|_2 +\|S^\frac{s}{2}\vphi_t\|_2 \right)^{\max\{1;2 \theta\}}$. An analogous statement holds true if we replace $\Psi_{N,t}$ and $\vphi_t$ by their regularized analogues $\Psi_{N,t}^{(\alpha)}$ and $\vphi_t^{(\alpha)}$.
\end{proposition}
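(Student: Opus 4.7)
The plan is to adapt the condensate-projector decomposition of Pickl together with a Sobolev-interpolation layer, following the approach of \cite{AH}. Write $p = \ket{\vphi_t}\bra{\vphi_t}$, $q = 1-p$ on $L^2(\R^3)$, and $T := \gamma_{N,t}^{(1)} - P_t^{(1)}$. The strategy is first to decompose $\gamma_{N,t}^{(k)} - P_t^{(k)}$ into diagonal and off-diagonal pieces at the one-particle level and then to use operator- and scalar-Jensen inequalities to recover the $(1+S_i)^{\theta s}$-weighting.

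For the one-particle version, I would write
$$T \;=\; (p\gamma_{N,t}^{(1)} p - p) + pTq + qTp + q\gamma_{N,t}^{(1)} q \;=\; -a_{N,t}\, p + pTq + qTp + q\gamma_{N,t}^{(1)} q,$$
using $p\gamma_{N,t}^{(1)} p = (1-a_{N,t})p$ and $pP_t^{(1)} q = 0$. The diagonal piece $-a_{N,t}\, p$ contributes a $S_{1,\theta s}$-weighted trace norm of $a_{N,t}\,\|(1+S_1)^{\theta s/2}\vphi_t\|_2^2$, bounded by $a_{N,t}\,\|S_{1,s}^{1/2}\vphi_t\|_2^{2\theta}$ via the scalar Jensen inequality $\scp{\vphi_t}{(1+S_1)^{\theta s}\vphi_t}\leq\scp{\vphi_t}{(1+S_1)^s\vphi_t}^\theta$. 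For the positive piece $q\gamma_{N,t}^{(1)} q \geq 0$, applying the operator Jensen inequality to its spectral decomposition yields
$$\mathrm{Tr}\big((1+S_1)^{\theta s}\, q\gamma_{N,t}^{(1)} q\big) \;\leq\; \mathrm{Tr}(q\gamma_{N,t}^{(1)} q)^{1-\theta}\,\mathrm{Tr}\big((1+S_1)^s \gamma_{N,t}^{(1)}\big)^{\theta} \;\leq\; a_{N,t}^{1-\theta}\,\|S_{1,s}^{1/2}\Psi_{N,t}\|_2^{2\theta},$$
which produces the $a_{N,t}^{1-\theta}$ power.

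The two rank-one off-diagonal pieces $pTq$ and its adjoint produce the Hilbert--Schmidt contribution. Writing $pTq = \ket{\vphi_t}\bra{qT\vphi_t}$, the weighted trace norm of this rank-one operator factorizes as $\|(1+S_1)^{\theta s/2}\vphi_t\|_2\cdot\|(1+S_1)^{\theta s/2}qT\vphi_t\|_2$; the first factor is controlled by Jensen as above, and for the second I would Sobolev-interpolate $\|(1+S_1)^{\theta s/2}f\|_2 \leq \|f\|_2^{1-\theta}\,\|(1+S_1)^{s/2}f\|_2^\theta$ together with the identity $\|qT\vphi_t\|_2 = \|pTq\|_{HS}$, whence the HS norm of $T$ enters to the power $1-\theta$. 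The general $k$-particle version follows by the same scheme applied with the symmetric Pickl number projectors $P_m^{(k)} := \sum_{|J|=m}\prod_{i\in J}q_i\prod_{j\notin J}p_j$, after which the bosonic symmetry of $\Psi_{N,t}$ collapses the combinatorial sum and produces the linear factor of $k$ in the claim.

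The main bookkeeping obstacle is recovering the precise exponents $\min\{1/2,1-\theta\}$ on $a_{N,t}$ and $\max\{1,2\theta\}$ on the regularity constant. The former arises by additionally combining the $q\gamma q$-bound with the elementary estimate $\|\gamma_{N,t}^{(1)}-P_t^{(1)}\|_{HS}^2 \leq 2a_{N,t}$, which in the regime $\theta\leq 1/2$ improves $a_{N,t}^{1-\theta}$ back to $a_{N,t}^{1/2}$; the latter reflects that for $\theta\leq 1/2$ a single power of $\|S_{1,s}^{1/2}\Psi\|_2 + \|S^{s/2}\vphi\|_2$ already dominates the interpolation, while for $\theta>1/2$ the exponent $2\theta$ is forced by the Jensen/Sobolev interpolation factors. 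The regularized statement follows by the identical argument with $\Psi_{N,t}^{(\alpha)}$ and $\vphi_t^{(\alpha)}$ replacing $\Psi_{N,t}$ and $\vphi_t$ throughout.
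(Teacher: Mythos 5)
Your $k=1$ sketch captures the essential mechanism of~\cite{AH} — the Pickl-style $p,q$ decomposition of $\gamma_{N,t}^{(1)}-P_t^{(1)}$, spectral Jensen on the $p$-weighted pieces, and H\"older/interpolation on the $q\gamma q$ and rank-one off-diagonal pieces — so the broad approach is right. However there are two genuine gaps and some confused bookkeeping that would need to be repaired.

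First, the displayed ``operator Jensen'' inequality for the positive piece is not correct as written. Diagonalizing $q\gamma_{N,t}^{(1)} q=\sum_j\lambda_j\ket{e_j}\bra{e_j}$, pointwise Jensen gives $\scp{e_j}{(1+S_1)^{\theta s}e_j}\le\scp{e_j}{(1+S_1)^{s}e_j}^{\theta}$, and H\"older on the sum gives
$\mathrm{Tr}\big((1+S_1)^{\theta s}q\gamma q\big)\le a_{N,t}^{1-\theta}\,\mathrm{Tr}\big((1+S_1)^{s}q\gamma q\big)^{\theta}$ — note the $q$'s persist. Since $q$ and $(1+S_1)^{s}$ do not commute, you cannot simply replace $\mathrm{Tr}((1+S_1)^{s}q\gamma q)$ with $\mathrm{Tr}((1+S_1)^{s}\gamma)$; you need an extra step such as
$\mathrm{Tr}\big((1+S_1)^{s}q\gamma q\big)^{1/2}=\|(1+S_1)^{s/2}q\gamma^{1/2}\|_{HS}\le \|(1+S_1)^{s/2}\gamma^{1/2}\|_{HS}+\|(1+S_1)^{s/2}p\gamma^{1/2}\|_{HS}\le\|S_{1,s}^{1/2}\Psi_{N,t}\|_2+\|S^{s/2}\vphi_t\|_2$,
which is also how the $\vphi$-contribution in the constant enters. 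The same commutator issue is silently skipped in bounding $\|(1+S_1)^{s/2}qT\vphi_t\|_2$ for the off-diagonal piece.

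Second, the passage to general $k$ is far too thin. Appealing to the Pickl number projectors $P_m^{(k)}$ and claiming that bosonic symmetry ``collapses the combinatorial sum and produces the linear factor of $k$'' is not a proof; those projectors produce $\mathcal O(k)$ classes of terms with nontrivial combinatorics, and the reduction is not automatic. The clean route (and the one underlying the linear $k$) is the triangle inequality for trace norms applied to the positive/negative decomposition: for self-adjoint $A$ and $B\ge0$, $B^{1/2}AB^{1/2}=B^{1/2}A_+B^{1/2}-B^{1/2}A_-B^{1/2}$ gives $\tr|B^{1/2}AB^{1/2}|\le\tr(B|A|)$. Taking $B=S_{k,\theta s}=\sum_{i=1}^k(1+S_i)^{\theta s}$ and using permutation symmetry of $\gamma_{N,t}^{(k)}-P_t^{(k)}$ then yields $\tr|S_{k,\theta s}^{1/2}(\gamma^{(k)}-P^{(k)})S_{k,\theta s}^{1/2}|\le k\,\tr\big((1+S_1)^{\theta s}|\gamma^{(k)}-P^{(k)}|\big)$. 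After this reduction one runs the $p_1,q_1$ decomposition on the $k$-particle operator, using that $q_1 P^{(k)} q_1=0$ still holds. Without this step your factor of $k$ is unjustified.

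Finally, the discussion of the exponents $\min\{1/2,1-\theta\}$ and $\max\{1,2\theta\}$ is muddled. Since $a_{N,t}\le1$, the bound $a_{N,t}^{1-\theta}$ you obtain from the $q\gamma q$ piece is already at most $a_{N,t}^{\min\{1/2,1-\theta\}}$; nothing is being ``improved back to $a_{N,t}^{1/2}$'' — in fact the paper's own remark after the proposition notes that an $a_{N,t}^{1-\theta}$ bound would be optimal and strictly stronger than the stated $a_{N,t}^{\min\{1/2,1-\theta\}}$. Similarly, your argument naturally produces the exponent $2\theta$ on the regularity constant; the $\max\{1,2\theta\}$ in the statement just reflects that the constant is $\ge1$, so this is a dominance observation rather than a consequence of the interpolation being ``forced''.
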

\begin{remark}\label{optimalregrem}
	Let us briefly discuss the best rates that we can expect with a result like Proposition \ref{oldmainthm}. The basic idea of this statement is to use boundedness in $H^1$-Sobolev trace norm and convergence in $L^2$-trace norm in order to interpolate the rate for Sobolev spaces in between. Assume an upper bound with $a_{N,t}^{1-\theta}$ rather than $a_{N,t}^{\min\{1/2;1-\theta\}}$ on the r.h.s. of the inequality in Proposition \ref{oldmainthm}, which is optimal. As mentioned above, the optimal rate on the $HS$-norm is $1/N$ as for the Pickl functional $a_{N,t}$, see below. These results would give us then the optimal rate of $1/N^{1-\theta}$ with the present techniques; the heuristics is based on the $H^1$-norm being held stationary, while there is convergence with rate $1/N$ in $L^2$. (In)formal interpolation would then give us $1/N^{1-\theta}$.
\end{remark}
Combining this result with \eqref{tracecyclicity}, we reduce showing convergence as stated in \eqref{convtype} to showing convergence for both $a_{N,t}$ and $\|\gamma_{N,t}^{(k)}-P_t^{(k)}\|_{HS}$. This will be the goal of the subsequent.
\par The following theorem has not been explicitly proven before, but can be easily obtained by following the steps of \cite{P} or \cite{KP}.
\begin{proposition}\label{firstprevious}
	Assume either of \ref{itm:sr}, \ref{itm:fs}, \ref{itm:nr}. Suppose $N\in\N$ and $\vphi_0\in H^\gamma(\R^3)$. Let $\Psi_{N,t}$ be a solution of \eqref{schroedinger} and $\vphi_t$ be a solution of \eqref{hartree}. Then we have for all $t>0$
	\begin{equation*}
	a_{N,t}\;\leq\;\frac{\e^{c\int_0^t\|\frac1{|\cdot|^{2\gamma}}*|\vphi_\tau|^2\|_\infty^{\frac12}\mathrm{d}\tau}}{N},
	\end{equation*}
	where $c$ is universal.
\end{proposition}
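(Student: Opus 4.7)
The plan is to follow the approach of Pickl \cite{P} and Knowles-Pickl \cite{KP}, working directly with the Pickl functional
\begin{equation*}
a_{N,t}\;=\;\scp{\Psi_{N,t}}{q_1\Psi_{N,t}},
\end{equation*}
where $q_1:=1-p_1$ and $p_1:=(|\vphi_t\rangle\langle\vphi_t|)_1$. Since $\Psi_{N,0}=\vphi_0^{\otimes N}$ satisfies $q_1\Psi_{N,0}=0$, the initial value $a_{N,0}=0$ is free, so it suffices to establish a Gr\"onwall-type differential inequality of the form $|\partial_t a_{N,t}|\lesssim C(t)(a_{N,t}+N^{-1})$.

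First I would compute $\partial_t a_{N,t}$. Using \eqref{schroedinger} together with the identity $\i\partial_t p_1=[h_1^{\vphi_t},p_1]$ following from \eqref{hartree} (with $h^{\vphi}:=S+\mu\lambda\,|\cdot|^{-\gamma}*|\vphi|^2$), one obtains
\begin{equation*}
\i\partial_t a_{N,t}\;=\;\scp{\Psi_{N,t}}{[q_1,H_N-h_1^{\vphi_t}]\Psi_{N,t}}.
\end{equation*}
Since $q_1$ commutes with $S_j$ for $j\neq 1$ and with the pair interactions $|x_i-x_j|^{-\gamma}$ for $2\leq i<j$, only particle-$1$ contributions survive. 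Invoking the bosonic symmetry of $\Psi_{N,t}$ under permutations of the last $N-1$ variables, the average $\tfrac{1}{N-1}\sum_{j\ge 2}|x_1-x_j|^{-\gamma}$ inside the expectation can be replaced by a single $W_{12}:=|x_1-x_2|^{-\gamma}$, giving
\begin{equation*}
\i\partial_t a_{N,t}\;=\;\mu\lambda\scp{\Psi_{N,t}}{[q_1,W_{12}-V^{\vphi_t}(x_1)]\Psi_{N,t}},\qquad V^{\vphi_t}:=|\cdot|^{-\gamma}*|\vphi_t|^2.
\end{equation*}

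Next I would apply Pickl's projector-insertion trick, decomposing $1=(p_1+q_1)(p_2+q_2)$ on both sides of the commutator and exploiting the key operator identity $p_2W_{12}p_2=V^{\vphi_t}(x_1)p_2$. Combined with $p_2q_2=0$ and the commutation of $V^{\vphi_t}(x_1)$ with $p_2$, several cross-terms vanish identically, and one is reduced to the imaginary part of three surviving terms of the type $\scp{\Psi}{q_1p_2W_{12}p_1q_2\Psi}$, $\scp{\Psi}{q_1q_2W_{12}p_1p_2\Psi}$, and $\scp{\Psi}{q_1q_2(W_{12}-V^{\vphi_t}(x_1))p_1q_2\Psi}$. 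Each is estimated by Cauchy-Schwarz, with the crucial operator bound
\begin{equation*}
\|p_2W_{12}\|_{\mathrm{op}}^2\;=\;\|p_2W_{12}^2p_2\|_{\mathrm{op}}\;=\;\Big\|\tfrac{1}{|\cdot|^{2\gamma}}*|\vphi_t|^2\Big\|_\infty
\end{equation*}
doing the heavy lifting, together with $\|q_j\Psi_{N,t}\|^2=a_{N,t}$. Combining the resulting bounds with the bosonic symmetry of $\Psi_{N,t}$ yields a differential inequality of the form $|\partial_t a_{N,t}|\lesssim\|\tfrac{1}{|\cdot|^{2\gamma}}*|\vphi_t|^2\|_\infty^{1/2}(a_{N,t}+N^{-1})$, after which Gr\"onwall applied to $a_{N,t}+N^{-1}$ with starting value $N^{-1}$ delivers the claimed exponential bound. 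The hypothesis $\vphi_0\in H^\gamma(\mathbb{R}^3)$, combined with Proposition \ref{fracwellposed} and Lemma \ref{fracpersreg}, guarantees via Hardy-Littlewood-Sobolev the finiteness of $\|\tfrac{1}{|\cdot|^{2\gamma}}*|\vphi_t|^2\|_\infty$ throughout each regime \ref{itm:sr},\ref{itm:fs},\ref{itm:nr}.

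The main obstacle is extracting the $N^{-1}$ correction from the pair-creation term $\scp{\Psi}{q_1q_2W_{12}p_1p_2\Psi}$: a crude Cauchy-Schwarz with $\|q_1q_2\Psi\|^2\leq a_{N,t}$ only produces an $a_{N,t}^{1/2}$ bound, which upon integration would yield $a_{N,t}\lesssim t^2$ and lose the $N^{-1}$ decay. To circumvent this, I would upgrade the naive Pickl functional to a weighted version $\widehat{m}=m(\widehat{n})$ built from the number operator $\widehat{n}=\sum_iq_i$ with a judiciously chosen weight $m$. Because the pair-creation term shifts $\widehat{n}$ by $+2$, the matrix elements between adjacent $\widehat{n}$-sectors of a symmetric $N$-body state carry the combinatorial weight $\sqrt{(N-k)(N-k-1)/(N(N-1))}$; playing this gain against the weight ratio $m(k+2)/m(k)$ produces precisely one extra factor of $N^{-1}$ and closes the Gr\"onwall estimate. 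This weighted-functional bookkeeping, while entirely in the spirit of \cite{P,KP}, is the one piece of the argument that requires genuine care.
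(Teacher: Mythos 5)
Your proposal follows precisely the Pickl/Knowles--Pickl framework that the paper explicitly defers to (``can be easily obtained by following the steps of \cite{P} or \cite{KP}''), so in spirit this is the same route the paper has in mind. The key ingredients are all correctly identified: the commutator identity $\i\partial_ta_{N,t}=\scp{\Psi_{N,t}}{[q_1,H_N-h_1^{\vphi_t}]\Psi_{N,t}}$ with the kinetic parts cancelling, the projector decomposition $1=(p_1+q_1)(p_2+q_2)$ with $p_2W_{12}p_2=V^{\vphi_t}(x_1)p_2$, the operator norm $\|p_2W_{12}\|_{\mathrm{op}}^2=\|\,|\cdot|^{-2\gamma}*|\vphi_t|^2\|_\infty$ as the quantity that enters the Gr\"onwall constant, and the observation that the pair-creation term $\scp{\Psi}{q_1q_2W_{12}p_1p_2\Psi}$ is the only one that does not fall out of a naive Cauchy--Schwarz.

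One small inaccuracy in your last paragraph is worth flagging. The weight is not free: the quantity being bounded is $a_{N,t}=\scp{\Psi}{q_1\Psi}=\scp{\Psi}{\hat n\Psi}$ with $\hat n=\tfrac1N\sum_iq_i$, which corresponds to the \emph{fixed} weight $m(k)=k/N$; ``judiciously chosen'' overstates the freedom you have. Moreover the mechanism that produces the $N^{-1}$ correction is the weight \emph{difference} $m(k+2)-m(k)=2/N$ (which for $\hat m=\hat n$ is a constant), not the ratio $m(k+2)/m(k)$, which is not small. An equally valid and arguably more elementary route is to write, by bosonic symmetry,
\begin{equation*}
\scp{\Psi}{q_1q_2W_{12}p_1p_2\Psi}\;=\;\frac{1}{N-1}\sum_{j=2}^N\scp{q_1\Psi}{\,q_jW_{1j}p_jp_1\Psi},
\end{equation*}
apply Cauchy--Schwarz to the inner product with the vector $\sum_{j\geq2}q_jW_{1j}p_jp_1\Psi$, and then expand the square: the diagonal terms give $(N-1)\|p_2W_{12}\|_{\mathrm{op}}^2$ while the off-diagonal terms give $(N-1)(N-2)\|p_2W_{12}\|_{\mathrm{op}}^2a_{N,t}$, yielding $|\cdot|\lesssim\|p_2W_{12}\|_{\mathrm{op}}\bigl(a_{N,t}+N^{-1}\bigr)$ after Young's inequality. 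This closes the Gr\"onwall estimate with no weighted-functional bookkeeping at all, and is essentially the calculation carried out in \cite{P}. Either version is acceptable here.
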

\begin{remark}
	Note that the weak Young's inequality together with the Sobolev embedding implies that the integrand in the exponent is bounded by $\|\vphi_{\tau}\|_{H^\gamma}$. Lemma \ref{fracpersreg} then yields at most a super-exponential growth of $t\mapsto a_{N,t}$.
\end{remark}
\par As, e.g., presented in \cite{AHH} and \cite{AH}, we have
\begin{equation}\label{picklschmidt}
\begin{split}
\tr{\left|\gamma_{N,t}^{(k)}-P^{(k)}_t\right|}&\lesssim\|\gamma_{N,t}^{(k)}-P^{(k)}_t\|_{HS}\lesssim\sqrt{k a_{N,t}},\\
\tr{\left|\gamma_{N,t}^{(k,\alpha)}-P^{(k,\alpha)}_t\right|}&\lesssim \|\gamma_{N,t}^{(k,\alpha)}-P^{(k,\alpha)}_t\|_{HS}\lesssim\sqrt{k a_{N,t}^{(\alpha)}}
\end{split}
\end{equation}
The respective first inequality goes back to an argument by R. Seiringer, the second one was brought to the author's attention by M. Griesemer.
\par By Proposition \ref{firstprevious}, it is sufficient to bound
$$\int_0^t\|\frac1{|\cdot|^{2\gamma}}*|\vphi_\tau|^2\|_\infty^{\frac12}\mathrm{d}\tau,$$
in order to show convergence of the Pickl functional and, by the above comments and Proposition \ref{oldmainthm}, convergence in Sobolev trace norms. For that, we want to use an observation made in \cite{CLL} and improve it: Instead of writing 
\begin{equation*}
\frac1{|\cdot|^\gamma}\;=\;V_2+V_\infty	
\end{equation*}
with $V_2\in L^2(\R^3)$ and $V_\infty\in L^\infty(\R^3)$, we want to employ the fact that $|\cdot|^{-\gamma}\in L^{3/\gamma}_w(\R^3)$. This shall give us better results in the subsequent, as we will see below. By Young's inequality and using mass conservation that, we find that
$$\|\frac1{|\cdot|^{2\gamma}}*|\vphi_\tau|^2\|_\infty\lesssim\|\vphi_{\tau}\|_{6/(3-2\gamma)}^2.$$
In particular, the above mentioned integral can be bounded via
\begin{equation}\label{potentialstrichartz}
\int_0^t\|\frac1{|\cdot|^{2\gamma}}*|\vphi_\tau|^2\|_\infty^{\frac12}\mathrm{d}\tau\lesssim\int_0^t\|\vphi_\tau\|_{6/(3-2\gamma)}\mathrm{d}\tau=\|\vphi_{\tau}\|_{L^1_\tau([0,t];L^{6/(3-2\gamma)}_x)}.
\end{equation}
Whereas \cite{CLL} use the above decomposition of the interaction potential together with Strichartz estimates to control more singular potentials, we want to employ the fact that $|\cdot|^{-\gamma}\in L^{3/\gamma}_w(\R^3)$ together with Strichartz estimates both for considering more singular potentials and lowering the required regularity on the initial datum $\vphi_0$. 
\par Altogether, the program we will run goes as follows:
\begin{enumerate}
	\item Reduce needed regularity of initial data by means of Strichartz estimates when getting control over the Pickl functional $a_{N,t}$.
	\item Control the Hilbert-Schmidt norm by $a_{N,t}$.
	\item Control higher Sobolev trace norms.
\end{enumerate}

\begin{remark}
	With the present method, the optimality of the regularity of initial data fully relies	optimality of the respectively used Strichartz estimates. Rather than providing optimal regularity, this work aims to provide a guide how to derive the Hartree equation and improving the needed regularity. Also, we consider the general case of not necessarily radial solutions of \eqref{hartree}. In the case of radial solutions, there are improved Strichartz estimates, see \cite{CKS}, \cite{CL}, \cite{GW}.
\end{remark}

	\section{Derivation of the Hartree equation}
	\subsection{The semi-relativistic case}
	In this section, always assume \ref{itm:sr}. Our first goal is to reduce the needed regularity for showing convergence towards the Hartree equation in trace norm. As mentioned above, we will apply Strichartz estimates in order to decrease the required regularity on $\vphi$, which by Lemma \ref{fracpersreg} reduces to required regularity on $\vphi_0$. Let us recall Strichartz estimate for the semi-relativistic NLS, also known as the half-wave equation, from \cite{D}. We call a pair $(q,r)\in[2,\infty]^2$ \textit{admissible} iff $(q,r)\neq(2,\infty)$ and 
	$$\frac1q+\frac1r=\frac12.$$
	In addition, define a residual power
	$$\alpha_{q,r}:=\frac32-\frac3r-\frac1q.$$
	\begin{lemma}[Dinh \cite{D}]\label{srstrichartz}
		Let $(q,r)\in[2,\infty]^2$ be admissible, $\alpha\in\R$. Then we have for any interval $I\subseteq\R$
			\begin{equation*}
			\begin{split}
			\|\e^{-\i(-\Delta)^{1/2}t}\vphi_0\|_{L^q_t(I;W^{\alpha,r}_x)}\;&\lesssim\;\|\vphi_0\|_{H^{\alpha+\alpha_{q,r}}},\\
			\|\int_0^t \e^{-\i (-\Delta)^{1/2}(t-\tau)}F(\tau) \mathrm{d}\tau\|_{L^q_t(I;W^{\alpha,r}_x)}\;&\lesssim\;\|F\|_{L^1_t(I;H^{\alpha+\alpha_{q,r}}_x)}.
			\end{split}
			\end{equation*}
	\end{lemma}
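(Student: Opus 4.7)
Since $U(t):=\e^{-\i t(-\Delta)^{1/2}}$ is the half-wave propagator, the plan is to follow the classical $TT^\ast$-plus-Littlewood-Paley strategy, tracking carefully how the weaker dispersion of the cone (compared with the paraboloid) forces the loss of $\alpha_{q,r}$ derivatives.

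The first step is the frequency-localized dispersive estimate. Let $P_N$ be a Littlewood-Paley projection onto $|\xi|\sim N$ for dyadic $N\geq 1$ (the low-frequency piece is immediate from Bernstein combined with $L^2$-conservation). Since the unit sphere in $\R^3$ has $d-1=2$ non-vanishing principal curvatures, stationary phase applied to the oscillatory kernel gives
$$\|U(t)P_N f\|_{L^\infty_x}\;\lesssim\; N^{2}|t|^{-1}\|P_N f\|_{L^1_x}.$$
Interpolated with the unitary bound $\|U(t)P_N f\|_{L^2_x}=\|P_N f\|_{L^2_x}$, this yields the spatial decay $\|U(t)P_N f\|_{L^r_x}\lesssim N^{2(1-2/r)}|t|^{-(1-2/r)}\|P_N f\|_{L^{r'}_x}$ for $2\leq r\leq\infty$.

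A $TT^\ast$ argument combined with Hardy-Littlewood-Sobolev in the time variable then turns this spatial decay into the frequency-localized Strichartz bound $\|U(\cdot)P_N f\|_{L^q_t L^r_x}\lesssim N^{\alpha_{q,r}}\|P_N f\|_{L^2_x}$. The admissibility identity $\tfrac1q+\tfrac1r=\tfrac12$ arises precisely from matching the HLS exponent to the decay rate $1-2/r$, and the requirement $q>2$ excludes the endpoint $(2,\infty)$. Passing from $L^r$ to $W^{\alpha,r}$ amounts to inserting the Fourier multiplier $(1-\Delta)^{\alpha/2}$ on both sides, which commutes with $U(t)$. A Littlewood-Paley square function estimate, valid for $2\leq r<\infty$, then lets one sum the frequency-localized bounds over dyadic $N$ to obtain the homogeneous estimate with the $H^{\alpha+\alpha_{q,r}}$-norm on the right. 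The case $r=2$ (so $q=\infty$) is immediate from $L^2$-conservation, while $r=\infty$ is ruled out by admissibility.

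The inhomogeneous estimate follows from Minkowski's inequality together with the homogeneous bound, since the input norm $L^1_\tau$ allows one to pull the spatial-temporal norm inside the $\tau$-integral; a change of variables $s=t-\tau$ reduces each fixed-$\tau$ piece to the homogeneous Strichartz estimate applied to the datum $F(\tau)$, and integrating in $\tau$ produces the $L^1_\tau H^{\alpha+\alpha_{q,r}}_x$ norm. (Christ-Kiselev is not needed precisely because the input time norm is $L^1_\tau$.) The main obstacle is tracking the derivative loss: the dispersive rate is only $|t|^{-(d-1)/2}$ rather than the Schr\"odinger rate $|t|^{-d/2}$, and each frequency shell contributes an extra factor $N^{(d+1)/2}=N^2$ in the $L^1\to L^\infty$ kernel bound. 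Carefully carrying both factors through $TT^\ast$ and HLS yields exactly $\alpha_{q,r}=\tfrac32-\tfrac3r-\tfrac1q=\tfrac2q$ along the admissible line.
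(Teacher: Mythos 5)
The paper does not prove Lemma \ref{srstrichartz}; it is imported verbatim from Dinh \cite{D} and used as a black box, so there is no internal proof to compare against. Your sketch is, however, a faithful outline of how such half-wave Strichartz estimates are standardly established in the literature (including essentially the route taken in \cite{D}): frequency-localized stationary-phase dispersive estimate with $d-1=2$ non-vanishing principal curvatures of the sphere giving the $|t|^{-1}$ rate and the $N^{(d+1)/2}=N^2$ gain, interpolation with $L^2$-unitarity, $TT^*$ with one-dimensional Hardy--Littlewood--Sobolev forcing the admissibility line $\tfrac1q+\tfrac1r=\tfrac12$ and the loss $\alpha_{q,r}=\tfrac2q=1-\tfrac2r$ per dyadic shell, Littlewood--Paley summation (with Minkowski in $q\geq2$) to pass to inhomogeneous Sobolev norms, and a plain Minkowski argument for the inhomogeneous estimate with $L^1_t$ input so that Christ--Kiselev is not required. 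The arithmetic you carry through ($N^{2(1-2/r)}|t|^{-(1-2/r)}$ for $L^{r'}\to L^r$, then HLS exponent $\lambda=1-2/r$ yielding $\tfrac2q=\lambda$, hence $\|U(\cdot)P_Nf\|_{L^q_tL^r_x}\lesssim N^{\alpha_{q,r}}\|P_Nf\|_{L^2}$) is correct, and the exclusion of $(q,r)=(2,\infty)$ is accurately attributed to the HLS endpoint $\lambda=1$ together with the failure of $L^\infty$-based Littlewood--Paley.

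One small inaccuracy worth flagging: you treat the low-frequency block $P_{\leq 1}$ as a special case handled by Bernstein plus $L^2$-conservation. That argument produces a factor $|I|^{1/q}$ and so is only local in time, whereas the lemma is stated for arbitrary $I\subseteq\R$. In fact no special treatment is needed: the half-wave phase $|\xi|$ is exactly degree-one homogeneous, so the frequency-localized dispersive estimate $\|U(t)P_Nf\|_{L^\infty_x}\lesssim N^2|t|^{-1}\|P_Nf\|_{L^1_x}$ holds uniformly for all dyadic $N$, small and large alike, and the resulting $TT^*$ bound $\|U(\cdot)P_Nf\|_{L^q_tL^r_x}\lesssim N^{\alpha_{q,r}}\|P_Nf\|_{L^2_x}$ can simply be summed over all $N$ using $|\xi|^{\alpha_{q,r}}\leq\langle\xi\rangle^{\alpha_{q,r}}$ (valid since $\alpha_{q,r}\geq0$ on the admissible line). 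With that correction the argument is complete and is the right way to read off where the $\alpha_{q,r}$-derivative loss comes from.
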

\paragraph{Defocusing case or focusing case with small coupling.} Let us consider the regime $\mu=1$ or $\lambda<\lambda_{S,c}$.
	\begin{proposition}\label{srsupconv}
		Assume $\vphi_0\in H^{2/3}(\R^3)$, $N\in\N$, and $k\in\N^{\leq N}$. Let $\Psi_{N,t}$ be a solution of \eqref{schroedinger} and $\vphi_t$ be a solution of \eqref{hartree}. Then there is a constant $C=C_{\|\vphi_0\|_{H^{2/3}}}$ such that for all $t\geq0$ we have
		\begin{equation}
		\begin{split}
			\tr{\left|\gamma_{N,t}^{(k)}-P^{(k)}_t\right|}\;&\lesssim\;\sqrt{k}\frac{\e^{C\e^{Ct}}}{\sqrt{N}},\\
			a_{N,t}\;&\leq\;\frac{\e^{C\e^{Ct}}}{N}.
		\end{split}			
		\end{equation}
	\end{proposition}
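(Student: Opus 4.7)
The plan is to reduce the statement, via Proposition~\ref{firstprevious} and \eqref{picklschmidt}, to controlling $\int_0^t \|\vphi_\tau\|_{L^6(\R^3)}\,d\tau$, and to then extract this control from the Strichartz machinery of Lemma~\ref{srstrichartz}. From \eqref{potentialstrichartz} we already have $\int_0^t \|\tfrac{1}{|\cdot|^2}\ast|\vphi_\tau|^2\|_{\infty}^{1/2}\,d\tau \lesssim \int_0^t \|\vphi_\tau\|_{L^6}\,d\tau$, and Hölder in time gives the further reduction
\[
\int_0^t \|\vphi_\tau\|_{L^6}\,d\tau \;\leq\; t^{2/3}\,\|\vphi\|_{L^3_\tau([0,t];\, L^6_x)}.
\]
Thus the core task is to estimate the mixed Strichartz norm $\|\vphi\|_{L^3_\tau L^6_x}$. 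The pair $(q,r)=(3,6)$ is admissible in the sense of Lemma~\ref{srstrichartz}, with residual power $\alpha_{3,6}=2/3$, which is what dictates the $H^{2/3}$-regularity hypothesis on $\vphi_0$.

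Next I would apply the homogeneous and inhomogeneous Strichartz bounds of Lemma~\ref{srstrichartz} to the Duhamel formula for \eqref{hartree}, obtaining
\[
\|\vphi\|_{L^3_\tau([0,t];\,L^6_x)} \;\lesssim\; \|\vphi_0\|_{H^{2/3}} \,+\, \Bigl\|\tfrac{\mu\lambda}{|\cdot|}\ast|\vphi|^2\,\vphi\Bigr\|_{L^1_\tau([0,t];\,H^{2/3}_x)}.
\]
The key pointwise-in-time nonlinear estimate I would establish is
\[
\Bigl\|\tfrac{1}{|\cdot|}\ast|\vphi_\tau|^2\,\vphi_\tau\Bigr\|_{H^{2/3}} \;\lesssim\; \|\vphi_\tau\|_{H^{1/2}}^{2}\,\|\vphi_\tau\|_{H^{2/3}},
\]
via Kato-Ponce fractional Leibniz combined with Hardy-Littlewood-Sobolev. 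The crucial input is the sharp Sobolev-Lorentz embedding $\dot{H}^{1/2}(\R^3)\hookrightarrow L^{3,2}(\R^3)$, which together with Lorentz-Hölder against $|\cdot|^{-1}\in L^{3,\infty}(\R^3)$ delivers $\|\tfrac{1}{|\cdot|}\ast|\vphi_\tau|^2\|_{L^\infty}\lesssim\|\vphi_\tau\|_{H^{1/2}}^{2}$; the Kato-Ponce commutator piece is controlled at the intermediate fractional level $H^{7/12}$ obtained by interpolating between $H^{1/2}$ and $H^{2/3}$, and then absorbed into the same right-hand side.

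To close, I use that in the defocusing regime $\mu=1$, and more generally whenever $\lambda<\lambda_{S,c}$, Lemmas~\ref{fracconservation} and~\ref{energypos} provide a uniform bound $\|\vphi_\tau\|_{H^{1/2}}\leq C_{\vphi_0}$. Persistence of regularity (Lemma~\ref{fracpersreg}) then gives $\|\vphi_\tau\|_{H^{2/3}}\lesssim \e^{c\tau}\|\vphi_0\|_{H^{2/3}}$. Inserting both into the Strichartz estimate yields
\[
\|\vphi\|_{L^3_\tau([0,t];\,L^6_x)} \;\lesssim\; \|\vphi_0\|_{H^{2/3}} \,+\, C_{\vphi_0}\!\int_0^t \|\vphi_\tau\|_{H^{2/3}}\,d\tau \;\lesssim\; \e^{Ct},
\]
so $\int_0^t \|\vphi_\tau\|_{L^6}\,d\tau \lesssim t^{2/3}\e^{Ct}\lesssim \e^{Ct}$. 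Feeding this into Proposition~\ref{firstprevious} produces the double-exponential bound $a_{N,t}\leq \e^{C\e^{Ct}}/N$, and \eqref{picklschmidt} then promotes it to the stated trace-norm inequality with the $\sqrt{k}/\sqrt{N}$ factor.

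The main obstacle I foresee is the nonlinear estimate: since $H^{2/3}(\R^3)$ lies strictly below the Sobolev-embedding threshold for $L^\infty$, the naive endpoint $L^\infty\times H^{2/3}$ Leibniz pairing is unavailable. The remedy is to exploit the sharp Lorentz refinement of Sobolev and to keep all exponents in the Kato-Ponce commutator inside admissible Hardy-Littlewood-Sobolev ranges, so that the only conservation law invoked is the energy-controlled $\|\vphi_\tau\|_{H^{1/2}}$-bound; the $H^{2/3}$-norm is allowed to grow exponentially, and this exponential growth, together with the Hölder factor $t^{2/3}$ and the outer exponential in Proposition~\ref{firstprevious}, is precisely what accounts for the double-exponential $f(t)=\e^{C\e^{Ct}}$ appearing in the statement.
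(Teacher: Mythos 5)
Your proposal follows essentially the same route as the paper's proof: reduce via Proposition~\ref{firstprevious} and \eqref{picklschmidt} to $\int_0^t\|\vphi_\tau\|_{L^6}\,\mathrm d\tau$, apply Hölder in time to pass to $\|\vphi\|_{L^3_\tau L^6_x}$, use the admissible pair $(3,6)$ with residual $\alpha_{3,6}=2/3$ in Lemma~\ref{srstrichartz}, invoke the nonlinear bound $\|(\tfrac1{|\cdot|}*|\vphi|^2)\vphi\|_{H^{2/3}}\lesssim\|\vphi\|_{H^{1/2}}^2\|\vphi\|_{H^{2/3}}$, and close with energy conservation plus Lemma~\ref{fracpersreg}. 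The only difference is that you re-derive the nonlinear estimate via Kato--Ponce and the Lorentz refinement of Sobolev where the paper simply cites Lemma~\ref{fraccontract}; the content and the resulting double-exponential bound are identical.
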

\begin{proof}
	By applying first H\"older's inequality followed by the Strichartz estimate \ref{srstrichartz}, we obtain
	\begin{align*}
		\|\vphi\|_{L^1_\tau([0,t];L^{6}_x)}&\leq t^{\frac23}\|\vphi\|_{L^3_\tau([0,t];L^{6}_x)}\lesssim t^{\frac23}\left(\|\vphi_0\|_{H^{\frac23}}+\|(\frac1{|\cdot|}*|\vphi_{\tau}|^2)\vphi_{\tau}\|_{L^1_\tau([0,t];H^{\frac23}_x)}\right)\\
		&\lesssim t^{\frac23}\left(\|\vphi_0\|_{H^{\frac23}}+\int_0^t\|\vphi_\tau\|_{H^{\frac12}}^2\|\vphi_\tau\|_{H^{\frac23}}\mathrm{d}\tau \right),
	\end{align*}
	where in the last step we applied Lemma \ref{fraccontract} in the appendix. Lemma \ref{fracpersreg} together with energy conservation yields
	$$\|\vphi\|_{L^1_\tau([0,t];L^{6}_x)}\lesssim C\e^{Ct},$$
	for some constant $C=C_{\|\vphi_0\|_{H^{2/3}}}$. Together with \eqref{potentialstrichartz} and \eqref{picklschmidt} this implies the statement.
\end{proof}

Collecting the previous results, we have proved the following theorems.
\begin{proposition}\label{srsubcrit}
	Assume $\vphi_0\in H^{2/3}(\R^3)$. Suppose $N\in\N$ and $k\in\N^{\leq N}$. Let $\Psi_{N,t}$ be a solution of \eqref{schroedinger} and $\vphi_t$ be a solution of \eqref{hartree}. Then there is a constant $C=C_{\|\vphi_0\|_{H^{2/3}}}$ such that for any $\theta\in[0,1)$ and any $t\geq0$ we have
	$$\tr\left|S_{k,\theta}^{\frac12} (\gamma_N^{(k)}-P^{(k)}) S_{k,\theta}^{\frac12}\right|\leq Ck^{\frac{3-\theta}2}\frac{\e^{C\e^{Ct}}}{N^{\frac{1-\theta}{2}}}.$$
\end{proposition}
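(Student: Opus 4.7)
The statement is essentially a direct corollary obtained by inserting the bounds of Proposition \ref{srsupconv} into Proposition \ref{oldmainthm}. The plan is to apply Proposition \ref{oldmainthm} with the choice $s=1$ (so that $S_{k,\theta s}=S_{k,\theta}$ matches the operator appearing in the statement), verify that the $N$-independent prefactor $C_{\Psi_{N,t},\vphi_t,\theta,1}$ is finite uniformly in $N$ and $t$, and then track the exponents carefully.

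First I would show that the $\vphi$-part of the constant, $\|S^{1/2}\vphi_t\|_2$, is controlled by a constant depending only on $\|\vphi_0\|_{H^{1/2}}$ (which is bounded since $H^{2/3}\hookrightarrow H^{1/2}$): by Lemma \ref{fracconservation} the energy $\E^{(1,1/2)}[\vphi_t]$ is conserved, and Lemma \ref{energypos} then yields $T^{(1/2)}[\vphi_t]\lesssim 1$, i.e.\ a uniform bound on $\|(-\Delta)^{1/4}\vphi_t\|_2$. For the many-body part $\|S_{1,1}^{1/2}\Psi_{N,t}\|_2$, symmetry of $\Psi_{N,t}$ gives
\begin{equation*}
\|S_{1,1}^{1/2}\Psi_{N,t}\|_2^2\;=\;1+\scp{\Psi_{N,t}}{S_1\Psi_{N,t}}\;=\;1+\tfrac1N\scp{\Psi_{N,t}}{\textstyle\sum_{i=1}^N S_i\Psi_{N,t}},
\end{equation*}
and the quadratic form estimate \eqref{hamsupercrit} of Lemma \ref{hnsa} together with conservation of $\scp{\Psi_{N,t}}{H_N\Psi_{N,t}}=\scp{\vphi_0^{\otimes N}}{H_N\vphi_0^{\otimes N}}$ bounds this by $1+\tfrac1N\scp{\vphi_0^{\otimes N}}{(H_N+N)\vphi_0^{\otimes N}}$. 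Expanding the one- and two-body pieces and using Hardy--Littlewood--Sobolev (as in the proof of Lemma \ref{energypos}) yields a bound in terms of $\|\vphi_0\|_{H^{1/2}}$, uniform in $N$.

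Next I would plug the two ingredients from Proposition \ref{srsupconv} into Proposition \ref{oldmainthm}: by \eqref{picklschmidt} the Hilbert--Schmidt term satisfies $\|\gamma_{N,t}^{(k)}-P^{(k)}_t\|_{HS}^{1-\theta}\lesssim (ka_{N,t})^{(1-\theta)/2}\lesssim k^{(1-\theta)/2}\e^{C\e^{Ct}}/N^{(1-\theta)/2}$, and $a_{N,t}^{\min\{1/2;1-\theta\}}\lesssim \e^{C\e^{Ct}}/N^{\min\{1/2;1-\theta\}}$. For every $\theta\in[0,1)$ one has $(1-\theta)/2\leq\min\{1/2;1-\theta\}$, so the Hilbert--Schmidt contribution is dominant and controls the rate $N^{-(1-\theta)/2}$. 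Multiplying by the explicit prefactor $k$ from Proposition \ref{oldmainthm} produces the claimed $k^{1+(1-\theta)/2}=k^{(3-\theta)/2}$ dependence, and collecting the double-exponential time factor $\e^{C\e^{Ct}}$ from Proposition \ref{srsupconv} finishes the argument.

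The only step that is not purely mechanical is the uniform energy estimate on $\|S_{1,1}^{1/2}\Psi_{N,t}\|_2$; I expect no genuine obstacle there, since the needed coercivity is precisely \eqref{hamsupercrit} and the defocusing/small-focusing assumption of the ambient subsection ensures it applies. All Strichartz-based regularity reduction is already encapsulated in Proposition \ref{srsupconv}, which is why the assumption $\vphi_0\in H^{2/3}(\R^3)$ suffices without further input.
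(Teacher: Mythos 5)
Your proposal is correct and follows essentially the same route as the paper's proof: reduce to uniform boundedness of $\|S_{1,1}^{1/2}\Psi_{N,t}\|_2 + \|S^{1/2}\vphi_t\|_2$ via Proposition \ref{oldmainthm}, then combine Proposition \ref{srsupconv} with \eqref{picklschmidt}, noting that the Hilbert--Schmidt term dictates the $N^{-(1-\theta)/2}$ rate and the $k^{(3-\theta)/2}$ factor. The only cosmetic deviation is in the final step bounding $\frac1N\scp{\vphi_0^{\otimes N}}{(H_N+N)\vphi_0^{\otimes N}}$: you propose to expand the energy of the tensor-product state and apply Hardy--Littlewood--Sobolev, while the paper simply reuses the upper half of \eqref{hamsupercrit} to bound the total energy by the kinetic energy; both are valid and lead to the same $\|\vphi_0\|_{H^{1/2}}^2$ bound.
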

\begin{proof}
	In view of Proposition \ref{oldmainthm}, Proposition \ref{srsupconv} and \eqref{picklschmidt}, it only remains to show uniform boundedness of $\|S_{1,1}^{1/2}\Psi_{N,t}\|_2 +\|S^{1/2}\vphi_t\|_2$. Energy conservation directly implies
	$$\|S^{1/2}\vphi_t\|_2\lesssim \|\vphi_0\|_{H^{1/2}}.$$
	Moreover, we have, using \eqref{hamsupercrit} and energy conservation
	\begin{equation}
		\begin{split}
			\|S_{1,1}^{1/2}\Psi_{N,t}\|_2^2&=\frac1N\scp{\Psi_{N,t}}{\left(\sum_{i=1}^NS_i+N\right)\Psi_{N,t}}\\
			&\lesssim\frac1N\scp{\Psi_{N,t}}{(H_N+N)\Psi_{N,t}}\\
			&=\frac1N\scp{\Psi_{N,0}}{(H_N+N)\Psi_{N,0}}\\
			&\lesssim\|\vphi_0\|_{H^{1/2}}^2,	
		\end{split}
	\end{equation}
	where in the last step we used that the total energy can be bounded by the kinetic energy in the present case. 
\end{proof}
Let us recall the result
$$\|\gamma_{N,t}^{(1)}-P^{(1)}_t\|_{HS}\leq\frac{f(t)}{N}$$
given in \cite{Lee}. In here $f(t)$ depends only on $\lambda$ and $\sup_{\tau\in[0,t]}\|\vphi_{\tau}\|_{H^1}$. Note that due to Lemma \ref{fracpersreg}, we have
$$\sup_{\tau\in[0,t]}\|\vphi_{\tau}\|_{H^1}\lesssim \e^{c\nu^2t}\|\vphi_0\|_{H^1},$$
where $c$ is a universal constant and $\nu=\sup_{\tau\in[0,t]}\|\vphi_{\tau}\|_{H^{1/2}}$ which, by energy conservation, can be uniformly bounded by a multiple of $\|\vphi_0\|_{H^{1/2}}.$ Thus we have that the above function $C(t)$ is actually a function only of $\lambda$, $\|\vphi_0\|_{H^1}$ and $t$.
\par  Combining this remark with Proposition \ref{oldmainthm}, Proposition \ref{srsupconv}, Lemma \ref{fracpersreg}, and the proof of Proposition \ref{srsubcrit}, we obtain the following result.
\begin{proposition}\label{srlee}
	Assume $\vphi_0\in H^{1}(\R^3)$ and $N\in\N$. Let $\Psi_{N,t}$ be a solution of \eqref{schroedinger} and $\vphi_t$ be a solution of \eqref{hartree}. Then there is a function $f(t)=f(t,\|\vphi_0\|_{H^{1}})$ such that for any $\theta\in[0,1)$ and any $t\geq0$ we have
	$$\tr\left|S_{1,\theta}^{\frac12} (\gamma_{N,t}^{(1)}-P^{(1)}_t) S_{1,\theta}^{\frac12}\right|\leq \frac{f(t)}{N^{\min\{1/2;1-\theta\}}}.$$
\end{proposition}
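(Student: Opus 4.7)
The plan is to combine the interpolation estimate of Proposition \ref{oldmainthm} (with $k=1$, $s=1$) with Lee's improved Hilbert-Schmidt bound and the Pickl functional control from Proposition \ref{srsupconv}. For the given $\theta\in[0,1)$, Proposition \ref{oldmainthm} yields
\begin{equation*}
\tr\left|S_{1,\theta}^{\frac12}(\gamma_{N,t}^{(1)}-P^{(1)}_t)S_{1,\theta}^{\frac12}\right|\;\lesssim\;C_{\Psi_{N,t},\vphi_t,\theta,1}\left(a_{N,t}^{\min\{1/2;1-\theta\}}+\|\gamma_{N,t}^{(1)}-P^{(1)}_t\|_{HS}^{1-\theta}\right),
\end{equation*}
so it suffices to estimate each factor and show the resulting bound is of the claimed form $f(t)/N^{\min\{1/2;1-\theta\}}$.

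First I would control the constant $C_{\Psi_{N,t},\vphi_t,\theta,1}=2(\|S_{1,1}^{1/2}\Psi_{N,t}\|_2+\|S^{1/2}\vphi_t\|_2)^{\max\{1;2\theta\}}$ exactly as in the proof of Proposition \ref{srsubcrit}. Energy conservation (Lemma \ref{fracconservation}) immediately yields $\|S^{1/2}\vphi_t\|_2\lesssim\|\vphi_0\|_{H^{1/2}}$, while \eqref{hamsupercrit} combined with conservation of $\scp{\Psi_{N,t}}{H_N\Psi_{N,t}}$ gives $\|S_{1,1}^{1/2}\Psi_{N,t}\|_2^2\lesssim N^{-1}\scp{\vphi_0^{\otimes N}}{(H_N+N)\vphi_0^{\otimes N}}\lesssim\|\vphi_0\|_{H^{1/2}}^2$. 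Hence $C_{\Psi_{N,t},\vphi_t,\theta,1}$ is bounded by a constant depending only on $\|\vphi_0\|_{H^{1/2}}$ and $\theta$.

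Next, for the Pickl functional term I would invoke Proposition \ref{srsupconv}, which gives $a_{N,t}\leq e^{Ce^{Ct}}/N$ with $C=C_{\|\vphi_0\|_{H^{2/3}}}\lesssim C_{\|\vphi_0\|_{H^1}}$; raising to the power $\min\{1/2;1-\theta\}$ yields a bound of the form $f_1(t)/N^{\min\{1/2;1-\theta\}}$. For the Hilbert-Schmidt term I would use Lee's bound $\|\gamma_{N,t}^{(1)}-P^{(1)}_t\|_{HS}\leq f(t)/N$, where, as noted in the excerpt preceding the proposition, $f(t)$ depends only on $\lambda$ and $\sup_{\tau\in[0,t]}\|\vphi_\tau\|_{H^1}$; Lemma \ref{fracpersreg} together with energy conservation bounds this supremum by $e^{c\|\vphi_0\|_{H^{1/2}}^2 t}\|\vphi_0\|_{H^1}$, so $f(t)$ is itself a function of $t$ and $\|\vphi_0\|_{H^1}$ only. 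Raising to the power $1-\theta$ produces a term of the form $f_2(t)/N^{1-\theta}$.

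Finally, since $\min\{1/2;1-\theta\}\leq 1-\theta$ for every $\theta\in[0,1)$, the Hilbert-Schmidt contribution decays at least as fast as the Pickl one, and the two combine into a single bound $f(t)/N^{\min\{1/2;1-\theta\}}$, which is the desired estimate. The proof is essentially pure bookkeeping: the only mild point to watch is that one must track which norm of $\vphi_0$ governs each constant; the Pickl rate only needs $\|\vphi_0\|_{H^{2/3}}$, the interpolation constant needs $\|\vphi_0\|_{H^{1/2}}$, but Lee's Hilbert-Schmidt bound genuinely requires $\vphi_0\in H^1$, which is exactly the hypothesis of the proposition. I do not anticipate a serious obstacle.
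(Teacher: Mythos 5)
Your proposal is correct and follows exactly the same route as the paper: apply Proposition \ref{oldmainthm} with $k=1$, $s=1$, bound the prefactor $C_{\Psi_{N,t},\vphi_t,\theta,1}$ via energy conservation and \eqref{hamsupercrit} as in Proposition \ref{srsubcrit}, control $a_{N,t}$ with Proposition \ref{srsupconv}, insert Lee's $\mathcal O(1/N)$ Hilbert--Schmidt bound (time dependence made explicit through Lemma \ref{fracpersreg}), and note that $N^{-(1-\theta)}\le N^{-\min\{1/2;1-\theta\}}$. The bookkeeping of which Sobolev norm of $\vphi_0$ controls each piece is also accurate.
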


	\paragraph{The focusing case.}  In this paragraph, assume $\mu=-1$ and $\lambda\geq\lambda_{S,c}$. It is well-known that solutions of the Hartree equation \eqref{hartree} exhibit blow-up after finite time, see, e.g. \cite{Le}. Even worse, as mentioned above, $H_N$ is not necessarily a self-adjoint operator for which we could solve \eqref{schroedinger}. Thus, we cannot get a control over $a_{N,t}$ as defined in \eqref{def:aN}. Instead, we work with a regularized Pickl functional, see \eqref{def:aN},
	\begin{equation*}
		a_{N,t}^{(\alpha)}\;=\;\scp{\Psi_{N,t}^{(\alpha)}}{\left(1-(\ket{\vphi_t^{(\alpha)}}\bra{\vphi_t^{(\alpha)}})_1\right)\Psi_{N,t}^{(\alpha)}}.
	\end{equation*}
	We are able to state our main theorem for this section.
	\begin{proposition}\label{srpicklsupercrit}
		Let $\vphi_0\in H^{2/3}(\R^3)$. Fix some $T\in(0,\infty)$ such that
		$$\nu\;=\;\sup_{|\tau|\leq T} \|\vphi_\tau\|_{H^{\frac12}}<\infty.$$
		Suppose $N\in\N$ and $k\in\N^{\leq N}$. Let $\Psi_{N,t}^{(\alpha)}=\e^{-\i H_N^{(\alpha)}t}\vphi_0^{\otimes N}$ and $\vphi$ be a solution of \eqref{hartree}. Then there is a constant $C=C_{\nu,\|\vphi_0\|_{H^{2/3}},T}$ such that with the above notations we have for any $0<\alpha<1$ and any $t\in [0,T]$
		\begin{equation*}
		\begin{split}
			\tr{\left|\gamma_{N,t}^{(k,\alpha)}-P^{(k)}_t\right|}&\leq C\sqrt{k}\left(\frac{1}{\sqrt{N}}+\alpha^{\frac23}\right),\\
			a_{N,t}^{(\alpha)}&\leq\frac{C}{N}.
		\end{split}
		\end{equation*}
		In particular, we have for $\alpha=\alpha_N=\mathcal{O}(N^{-3/4})$
		\begin{equation*}
		\tr{\left|\gamma_{N,t}^{(k,\alpha_N)}-P^{(k)}_t\right|}\lesssim C \sqrt{\frac{k}{N}}.
		\end{equation*}
		
	\end{proposition}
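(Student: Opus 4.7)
My plan is to compare $\gamma_{N,t}^{(k,\alpha)}$ with $P^{(k)}_t$ by interpolating through the regularized projector $P^{(k,\alpha)}_t$: by the triangle inequality,
\begin{equation*}
\tr{\left|\gamma_{N,t}^{(k,\alpha)}-P^{(k)}_t\right|}\;\leq\;\tr{\left|\gamma_{N,t}^{(k,\alpha)}-P^{(k,\alpha)}_t\right|}+\tr{\left|P^{(k,\alpha)}_t-P^{(k)}_t\right|}.
\end{equation*}
The first term will be handled by a Pickl-type estimate on the regularized dynamics yielding $a_{N,t}^{(\alpha)}\leq C/N$, and the second by using Lemma~\ref{fracregapprox} to approximate $\varphi_t$ by $\varphi_t^{(\alpha)}$ in $L^2$.

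For the first step, I would rerun the argument of Propositions~\ref{firstprevious} and \ref{srsupconv} for the regularized system \eqref{reghartree}--\eqref{regschroedinger}. Since $(|x|+\alpha)^{-2}\leq|x|^{-2}$ pointwise, the analogue of Proposition~\ref{firstprevious} yields
\begin{equation*}
a_{N,t}^{(\alpha)}\;\leq\;\frac{1}{N}\exp\left(c\int_0^t\left\|\tfrac{1}{|\cdot|^2}*|\varphi_\tau^{(\alpha)}|^2\right\|_\infty^{1/2}\mathrm{d}\tau\right),
\end{equation*}
and the Strichartz bootstrap of Proposition~\ref{srsupconv} applied to $\varphi^{(\alpha)}$ (the nonlinearity of \eqref{reghartree} enjoys the same contraction estimate of Lemma~\ref{fraccontract} since its kernel is dominated by $|\cdot|^{-1}$) produces
\begin{equation*}
\int_0^t\|\varphi_\tau^{(\alpha)}\|_{L^6}\,\mathrm{d}\tau\;\lesssim\;t^{2/3}\Big(\|\varphi_0\|_{H^{2/3}}+\int_0^t\|\varphi_\tau^{(\alpha)}\|_{H^{1/2}}^2\|\varphi_\tau^{(\alpha)}\|_{H^{2/3}}\,\mathrm{d}\tau\Big).
\end{equation*}
The key--and the main obstacle--is obtaining the right-hand side uniformly in $\alpha\in(0,1)$. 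This uniformity is exactly what Corollary~\ref{fracregregpers} provides for the $H^{1/2}$-norm (recall $\sigma=\gamma/2=1/2$); feeding this uniform bound into the regularized analogue of Lemma~\ref{fracpersreg} promotes it to a uniform bound on $\|\varphi_\tau^{(\alpha)}\|_{H^{2/3}}$ over $[0,T]\times(0,1)$. The resulting uniform estimate $a_{N,t}^{(\alpha)}\leq C/N$ combined with \eqref{picklschmidt} gives $\tr{|\gamma_{N,t}^{(k,\alpha)}-P^{(k,\alpha)}_t|}\leq C\sqrt{k/N}$.

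For the second step, a standard calculation for rank-two differences of projectors onto tensor-power unit vectors gives
\begin{equation*}
\tr{\left|P^{(k,\alpha)}_t-P^{(k)}_t\right|}\;=\;2\sqrt{1-|\scp{\varphi_t^{(\alpha)}}{\varphi_t}|^{2k}}\;\leq\;2\sqrt{k}\,\|\varphi_t^{(\alpha)}-\varphi_t\|_2,
\end{equation*}
using $1-a^k\leq k(1-a)$ on $[0,1]$ and $1-|\scp{\psi}{\phi}|^2\leq\|\psi-\phi\|_2^2$ for unit vectors. I would then invoke Lemma~\ref{fracregapprox} with $\varepsilon=1/3$; the regularity demand $r=\max\{\sigma,(1+\varepsilon)\gamma/2\}=\max\{1/2,2/3\}=2/3$ matches the hypothesis exactly, producing $\|\varphi_t^{(\alpha)}-\varphi_t\|_2\leq C\alpha^{2/3}$. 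Summing both contributions gives the first displayed estimate, and the balancing choice $\alpha_N=N^{-3/4}$ makes $\alpha_N^{2/3}=N^{-1/2}$, yielding the final rate $C\sqrt{k/N}$. Everything is structurally routine; the only delicate point is the $\alpha$-uniformity, which hinges entirely on Corollary~\ref{fracregregpers}.
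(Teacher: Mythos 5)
Your proposal reproduces the paper's proof almost verbatim: the split through $P^{(k,\alpha)}_t$, the regularized Pickl bound $a_{N,t}^{(\alpha)}\le\frac1N\exp\big(\int_0^t\|\tfrac1{|\cdot|^2}*|\vphi_\tau^{(\alpha)}|^2\|_\infty^{1/2}\,\mathrm d\tau\big)$ controlled via Strichartz and the $\alpha$-uniform $H^{1/2}$-bound of Corollary~\ref{fracregregpers} (then promoted to $H^{2/3}$ through Lemma~\ref{fracpersreg}), Lemma~\ref{fracregapprox} with $\varepsilon=1/3$ for the $\alpha^{2/3}$ term, and the balance $\alpha_N=N^{-3/4}$. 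The only cosmetic difference is that you split in trace norm and compute $\tr|P^{(k,\alpha)}_t-P^{(k)}_t|=2\sqrt{1-|\scp{\vphi_t^{(\alpha)}}{\vphi_t}|^{2k}}$ explicitly, whereas the paper converts everything to Hilbert--Schmidt norm first and then uses that the HS-distance of rank-one tensor-power projectors is bounded by $\sqrt{k}$ times the $L^2$-distance of the underlying states; both yield the same bound.
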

	\begin{proof}[Proof of Proposition \ref{srpicklsupercrit}]
		As mentioned, e.g., in \cite{AH}, using the variational characterization of the first eigenvalue, one can show with the above notation
		\begin{equation}\label{srtracenormsplit}
		\begin{split}
		\tr{\left|\gamma_{N,t}^{(k,\alpha)}-P^{(k)}_t\right|}&\lesssim\|\gamma_{N,t}^{(k,\alpha)}-P^{(k)}_t\|_{HS}\\
		&\lesssim \|\gamma_{N,t}^{(k,\alpha)}-P^{(k,\alpha)}_t\|_{HS}+\|P^{(k,\alpha)}_t-P^{(k)}_t\|_{HS}\\
		&	\lesssim\sqrt{ka_{N,t}^{(\alpha)}}+\sqrt{k}\|\vphi_t^{(\alpha)}-\vphi_t\|_2,
		\end{split}
		\end{equation}
		where in the last we used \eqref{picklschmidt} together with the fact that the $HS$-distance of two rank-1 projections is bounded from above by the respective $L^2$-distance of the states onto which we project. By Lemma \ref{fracregapprox}, there is a constant $C=C_{\nu,\|\vphi_0\|_{H^{2/3}},T}$ such that for all $t\in[0,T]$ we have
		\begin{equation}\label{srregpickll2dist}
		\|\vphi_t-\vphi_t^{(\alpha)}\|_2\leq C\alpha^{\frac23}.	
		\end{equation}
		Next, following the steps of \cite{P} and \cite{AHH}, we can equally show
		\begin{equation}\label{sranregbd}
		a_{N,t}^{(\alpha)}\;\leq\;\frac{\e^{\int_0^t\|\frac1{|\cdot|^{2}}*|\vphi_\tau^{(\alpha)}|^2\|_\infty^{\frac12}\mathrm{d}\tau}}{N}.
		\end{equation}
		With analogous steps as in the proof of Proposition \ref{srsupconv}, one can show
		\begin{equation*}
			\int_0^t\|\frac1{|\cdot|^{2}}*|\vphi_\tau^{(\alpha)}|^2\|_\infty^{\frac12}\mathrm{d}\tau \lesssim t^{\frac23}\left(\|\vphi_0\|_{H^{\frac23}}+\int_0^t\|\vphi_\tau^{(\alpha)}\|_{H^{\frac12}}^2\|\vphi_\tau^{(\alpha)}\|_{H^{\frac23}}\mathrm{d}\tau \right).
		\end{equation*}
		Lemma \ref{fracpersreg} together with Corollary \ref{fracregregpers} and energy conservation then yields
		$$\|\vphi_\tau^{(\alpha)}\|_{H^{\frac12}}^2\|\vphi_\tau^{(\alpha)}\|_{H^{\frac23}}\leq C$$
		for some $C=C_{\nu,\|\vphi_0\|_{H^{2/3}},T}$. This concludes the proof.
	\end{proof}
	With this result at hand, we can even show convergence in higher Sobolev trace norms.
	\begin{proposition}\label{srhighersobolevconv}
		Let $r\geq2/3$, $\vphi_0\in H^{r}(\R^3)$ and define $\varepsilon\in[1/3,1]$ to satisfy $(1+\varepsilon)/2\geq r$. Fix some $T\in(0,\infty)$ such that
		$$\nu\;=\;\sup_{|\tau|\leq T} \|\vphi_\tau\|_{H^{\frac12}}<\infty.$$
		Suppose $N\in\N$ and $k\in\N^{\leq N}$. Let $\Psi_{N,t}^{(\alpha)}=\e^{-\i H_N^{(\alpha)}t}\vphi_0^{\otimes N}$ and $\vphi$ be a solution of \eqref{hartree}. Then there is a constant $C=C_{\nu,\|\vphi_0\|_{H^{r}},T}$ such that with the above notations we have for any $0<\alpha<1$ small enough, any $t\in [0,T]$ and any $\theta\in[0,\min\{\frac{\vep}{1-\vep};1\})$
		$$\tr\left|S_{k,\theta}^{\frac12} (\gamma_{N,t}^{(k,\alpha)}-P^{(k)}_t) S_{k,\theta}^{\frac12}\right|\leq C k^{\frac{3-\theta}2}\left( \frac1{\alpha^{\max\{1/2;\theta\}}N^{\frac{1-\theta}{2}}}+\alpha^{\frac{1-\theta+\varepsilon(1+\theta)-\max\{1;2\theta\}}{2}}\right).$$
		In particular, we have for $\alpha=\alpha_N=\mathcal{O}(N^{-(1-\theta)/(1-\theta+\varepsilon(1+\theta))})$
		$$\tr\left|S_{k,\theta}^{\frac12} (\gamma_{N,t}^{(k,\alpha)}-P^{(k)}_t) S_{k,\theta}^{\frac12}\right|\leq C\frac{k^{\frac{3-\theta}2}}{N^{\frac{(1-\theta)(1-\theta+\varepsilon(1+\theta)-\max\{1;2\theta\})}{2[1-\theta+\varepsilon(1+\theta)]}}}.$$
	\end{proposition}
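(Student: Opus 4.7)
The proof splits the weighted trace norm via the triangle-type inequality
$$\tr\left|S_{k,\theta}^{\frac12}(\gamma_{N,t}^{(k,\alpha)} - P^{(k)}_t)S_{k,\theta}^{\frac12}\right| \leq \tr\left|S_{k,\theta}^{\frac12}(\gamma_{N,t}^{(k,\alpha)} - P^{(k,\alpha)}_t)S_{k,\theta}^{\frac12}\right| + \tr\left|S_{k,\theta}^{\frac12}(P^{(k,\alpha)}_t - P^{(k)}_t)S_{k,\theta}^{\frac12}\right|$$
so the analysis reduces to a \emph{dynamical} error and a \emph{regularization} error. Each is controlled by applying Proposition \ref{oldmainthm} with a suitable choice of pair, together with interpolation between $L^2$ and $H^{1/2}$ bounds from Lemma \ref{fracregapprox}, finally balancing the two via the choice of $\alpha = \alpha_N$.

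For the dynamical error, I apply the regularized version of Proposition \ref{oldmainthm} (with $\Psi = \Psi_{N,t}^{(\alpha)}$, $\vphi = \vphi_t^{(\alpha)}$, and $s = 1$). Proposition \ref{srpicklsupercrit} gives $a_{N,t}^{(\alpha)} \lesssim 1/N$, and \eqref{picklschmidt} turns this into $\|\gamma_{N,t}^{(k,\alpha)} - P_t^{(k,\alpha)}\|_{HS} \lesssim \sqrt{k/N}$. The prefactor $C_{\Psi^{(\alpha)},\vphi^{(\alpha)},\theta,1}$ needs a bound on $\|S^{1/2}\vphi_t^{(\alpha)}\|_2$ (uniform in $\alpha$ by Corollary \ref{fracregregpers}) and on $\|S_{1,1}^{1/2}\Psi_{N,t}^{(\alpha)}\|_2$. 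For the latter I exploit \eqref{reghamboundskin} to write $\sum_i S_i \leq H_N^{(\alpha)} + N\lambda/(2\alpha)$, combine with $N$-particle energy conservation, and bound the initial energy by $\langle \vphi_0^{\otimes N}, H_N^{(\alpha)}\vphi_0^{\otimes N}\rangle \lesssim N\|\vphi_0\|_{H^{1/2}}^2$, obtaining $\|S_{1,1}^{1/2}\Psi_{N,t}^{(\alpha)}\|_2 \lesssim \alpha^{-1/2}$. Raising to the $\max\{1;2\theta\}$ power yields $C_{\Psi^{(\alpha)},\vphi^{(\alpha)},\theta,1} \lesssim \alpha^{-\max\{1/2;\theta\}}$, which combined with the HS bound and the overall factor $k$ gives exactly the $k^{(3-\theta)/2} \alpha^{-\max\{1/2;\theta\}} N^{-(1-\theta)/2}$ term.

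For the regularization error, I again apply Proposition \ref{oldmainthm}, this time with $\Psi = (\vphi_t^{(\alpha)})^{\otimes N}$ and $\vphi = \vphi_t$ (treating the pure product state as the ``many-body state''). Now the ``Pickl functional'' reduces to $1 - |\langle \vphi_t^{(\alpha)}, \vphi_t\rangle|^2 \leq \|\vphi_t^{(\alpha)} - \vphi_t\|_2^2$, and $\|\gamma^{(k)} - P^{(k)}\|_{HS} = \|P_t^{(k,\alpha)} - P_t^{(k)}\|_{HS} \lesssim \sqrt{k}\|\vphi_t^{(\alpha)} - \vphi_t\|_2$. Interpolating the two rates of Lemma \ref{fracregapprox} between $L^2$ (rate $\alpha^{(1+\varepsilon)/2}$) and $\dot H^{1/2}$ (rate $\alpha^\varepsilon$, which is exactly what $\gamma/4 = 1/4$ gives in \ref{itm:sr}) produces an $H^{\theta/2}$-rate $\alpha^{(1-\theta+\varepsilon(1+\theta))/2}$. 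The prefactor $C$ here involves $\|\vphi_t^{(\alpha)}\|_{H^{1/2}}$, which by the same initial-energy / Corollary \ref{fracregregpers} mechanism blows up in $\alpha$ at most like $\alpha^{-\max\{1/2;\theta\}}$ through the $\max\{1;2\theta\}$-power, producing the second term $k^{(3-\theta)/2}\alpha^{(1-\theta+\varepsilon(1+\theta)-\max\{1;2\theta\})/2}$.

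Finally, the ``In particular'' part is obtained by choosing $\alpha_N$ to equate the two terms, i.e.\ solving $\alpha^{-\max\{1/2;\theta\}} N^{-(1-\theta)/2} = \alpha^{(1-\theta+\varepsilon(1+\theta))/2 - \max\{1/2;\theta\}}$, which gives $\alpha_N = N^{-(1-\theta)/(1-\theta+\varepsilon(1+\theta))}$ and the stated rate in $N$. The constraint $\theta < \min\{\varepsilon/(1-\varepsilon);1\}$ is precisely what makes the numerator $1-\theta+\varepsilon(1+\theta)-\max\{1;2\theta\}$ positive, so the rate is nontrivial. The main technical obstacle will be the careful combination of the telescoping tensor-product decomposition $P^{(k,\alpha)} - P^{(k)}$ with the Sobolev interpolation while tracking the $k$-dependence: writing the rank-$2$ operator $P^{(k,\alpha)}-P^{(k)} = |u^{(\alpha)}-u\rangle\langle u^{(\alpha)}| + |u\rangle\langle u^{(\alpha)}-u|$ with $u^{(\alpha)} = (\vphi_t^{(\alpha)})^{\otimes k}$ and bounding its weighted trace norm by products $\|S_{k,\theta}^{1/2}(u^{(\alpha)}-u)\|_2 \cdot \|S_{k,\theta}^{1/2}u^{(\alpha)}\|_2$, the symmetric structure of $S_{k,\theta} = \sum_i (1+S_i)^\theta$ collapses the $H^{\theta/2}$ tensor norms into $\sqrt{k}\cdot\|\vphi^{(\alpha)}\|_{H^{\theta/2}}$ and produces the claimed $k^{(3-\theta)/2}$ scaling after absorbing the overall $k$ from Proposition \ref{oldmainthm}.
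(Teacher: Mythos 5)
Your handling of the dynamical error is the same as the paper's and is correct: apply Proposition \ref{oldmainthm} with $(\Psi_{N,t}^{(\alpha)},\vphi_t^{(\alpha)})$, control $\|S^{1/2}\vphi_t^{(\alpha)}\|_2$ by Corollary \ref{fracregregpers}, and control $\|S_{1,1}^{1/2}\Psi_{N,t}^{(\alpha)}\|_2$ by \eqref{reghamboundskin} plus many-body energy conservation for $H_N^{(\alpha)}$, which is where the $\alpha^{-1/2}$ (and then $\alpha^{-\max\{1/2;\theta\}}$ after the exponent $\max\{1;2\theta\}$) genuinely originates.

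Your treatment of the regularization error $\tr|S_{k,\theta}^{1/2}(P_t^{(k,\alpha)}-P_t^{(k)})S_{k,\theta}^{1/2}|$ diverges from the paper and contains a genuine error. The paper does not reapply Proposition \ref{oldmainthm} here; it uses Seiringer's variational trick ($\tr|A|\lesssim \tr A + \|A\|_{HS}$ for a difference of rank-one projections, see \eqref{srprojectionssplit}), then bounds the trace piece through \eqref{srprojetraceaid}--\eqref{srprojetrace} and the HS piece through \eqref{srhs}--\eqref{srfgsum}, each time interpolating $\|S^{\theta/2}(\vphi_t^{(\alpha)}-\vphi_t)\|_2$ between the $L^2$ rate $\alpha^{(1+\varepsilon)/2}$ and the $\dot H^{1/4}$ rate $\alpha^{\varepsilon}$ of Lemma \ref{fracregapprox}, with all prefactors controlled \emph{uniformly in $\alpha$} by Corollary \ref{fracregregpers}. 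Your idea of reapplying Proposition \ref{oldmainthm} with $\Psi=(\vphi_t^{(\alpha)})^{\otimes N}$ is in itself a legitimate alternative (Proposition \ref{oldmainthm} is an algebraic inequality, not specific to Schr\"odinger solutions), but you then claim that the prefactor $C_{\Psi,\vphi,\theta}$ ``blows up in $\alpha$ at most like $\alpha^{-\max\{1/2;\theta\}}$ ... by the same initial-energy / Corollary \ref{fracregregpers} mechanism.'' This is wrong and self-contradictory: for the product state, $\|S_{1,1}^{1/2}(\vphi_t^{(\alpha)})^{\otimes N}\|_2^2 = 1+\|S^{1/2}\vphi_t^{(\alpha)}\|_2^2 \leq 1+\kappa^2$ is \emph{uniformly bounded in $\alpha$} by Corollary \ref{fracregregpers}, precisely because there is no $H_N^{(\alpha)}$ energy conservation step — that step is only needed for the true many-body Schr\"odinger evolution, and it is the sole source of the $\alpha^{-1/2}$. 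Consequently, what your route actually yields for the second term is $\lesssim k^{(3-\theta)/2}\alpha^{(1+\varepsilon)(1-\theta)/2}$, which is not the exponent $\frac{1-\theta+\varepsilon(1+\theta)-\max\{1;2\theta\}}{2}$ that you assert — you appear to have written down the target exponent from the statement rather than deriving it from your own decomposition. (In fact $(1+\varepsilon)(1-\theta)/2$ exceeds the stated exponent for all $\theta>0$, so your route, properly carried out, is still sufficient; but the chain of reasoning as written does not establish that, and the ``$H^{\theta/2}$-rate'' interpolation you describe in passing is not what Proposition \ref{oldmainthm} gives — it is what the paper's direct trace-and-HS estimate gives.)

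In short: dynamical piece fine; regularization piece uses a plausible alternative decomposition but is derailed by misattributing the source of the $\alpha^{-1/2}$ blow-up and by asserting rather than computing the final $\alpha$-exponent. To fix your version, drop the spurious $\alpha^{-\max\{1/2;\theta\}}$ on the regularization piece, correctly track the Pickl functional $1-|\langle\vphi_t^{(\alpha)},\vphi_t\rangle|^2\lesssim\alpha^{1+\varepsilon}$ and the HS norm $\lesssim\sqrt k\,\alpha^{(1+\varepsilon)/2}$ through Proposition \ref{oldmainthm}, and balance the resulting $\alpha^{(1+\varepsilon)(1-\theta)/2}$ against the dynamical term; alternatively, follow the paper and use the rank-one variational splitting followed by Sobolev interpolation of $\vphi_t-\vphi_t^{(\alpha)}$.
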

	\begin{proof}
		We start with the estimate
		\begin{equation}\label{srsobolevsplit}
		\begin{split}
		\tr\left|S_{k,\theta}^{\frac12} (\gamma_{N,t}^{(k,\alpha)}-P^{(k)}_t) S_{k,\theta}^{\frac12}\right|\lesssim  	\tr\left|S_{k,\theta}^{\frac12} (\gamma_{N,t}^{(k,\alpha)}-P^{(k,\alpha)}_t) S_{k,\theta}^{\frac12}\right|+\tr\left|S_{k,\theta}^{\frac12} (P^{(k,\alpha)}_t-P^{(k)}_t) S_{k,\theta}^{\frac12}\right|.
		\end{split}
		\end{equation}
		 For the first term, we apply Proposition \ref{oldmainthm} to get
		 \begin{equation}\label{srtracenormpart}
		 	\begin{split}
			 	\tr\left|S_{k,\theta}^{\frac12} (\gamma_N^{(k,\alpha)}-P^{(k,\alpha)}) S_{k,\theta}^{\frac12}\right|\lesssim kC_{\Psi_{N,t}^{(\alpha)},\vphi_t^{(\alpha)},\theta} \left( (a_{N,t}^{(\alpha)})^{\min\{1/2;1-\theta\}} + \|\gamma_{N,t}^{(k,\alpha)}-P^{(k,\alpha)}_t\|_{HS}^{1-\theta}\right),	 	
		 	\end{split}
		 \end{equation}
		 where $C_{\Psi_{N,t}^{(\alpha)},\vphi_t^{(\alpha)},\theta}:=\left(\|S_{1,1}^\frac{1}{2}\Psi_{N,t}^{(\alpha)}\|_2 +\|S^\frac{1}{2}\vphi_t^{(\alpha)}\|_2 \right)^{\max\{1;2 \theta\}}$. By Corollary \ref{fracregregpers}, we have $\|S^\frac{1}{2}\vphi_t^{(\alpha)}\|_2\leq\kappa$ for some $\kappa=\kappa_{\nu,\|\vphi_0\|_{H^{1/2}},T}$. In addition, using symmetry of $\Psi_{N,t}$ w.r.t. particle permutations followed by \eqref{reghamboundskin} and energy conservation for the Schr\"odinger equation, we find
		 \begin{equation}\label{srregnorm}
		 	\begin{split}
				\|S_{1,1}^\frac{1}{2}\Psi_{N,t}^{(\alpha)}\|_2^2-1&=\frac1N\scp{\Psi_{N,t}^{(\alpha)}}{\sum_{i=1}^NS_i\Psi_{N,t}^{(\alpha)}}\\
				&\leq\frac1N\scp{\Psi_{N,t}^{(\alpha)}}{H_N^{(\alpha)}\Psi_{N,t}^{(\alpha)}}+\frac{\lambda}{2\alpha}\\
				&=\frac1N\scp{\Psi_{N,0}^{(\alpha)}}{H_N^{(\alpha)}\Psi_{N,0}^{(\alpha)}}+\frac{\lambda}{2\alpha}\\
				&\leq\|\vphi_0\|_{H^{1/2}}^2+\frac{\lambda}{2\alpha}.				
		 	\end{split}
		 \end{equation}
		Using the bound on $a_{N,t}^{(\alpha)}$ in Proposition \ref{srpicklsupercrit} together with \eqref{picklschmidt} gives
		 $$\|\gamma_{N,t}^{(k,\alpha)}-P^{(k,\alpha)}_t\|_{HS}\leq C\sqrt{\frac{k}{N}}$$
		 for some $C=C_{\nu,\|\vphi_0\|_{H^{2/3}},T}$. If we then choose $\alpha$ small enough, we thus obtain using \eqref{srtracenormpart} together with \eqref{srregnorm} and Proposition \ref{srpicklsupercrit}
		 \begin{equation}
		 		 \tr\left|S_{k,\theta}^{\frac12} (\gamma_N^{(k,\alpha)}-P^{(k,\alpha)}) S_{k,\theta}^{\frac12}\right|\lesssim C\frac{k^{\frac{3-\theta}2}}{\alpha^{\max\{1/2;\theta\}}N^{\frac{1-\theta}2}}
		 \end{equation}
		 for some possibly bigger constant $C=C_{\nu,\|\vphi_0\|_{H^{2/3}},T}$.
		 \par For the second term in \eqref{srsobolevsplit}, we use the variational characterization of the first eigenvalue to obtain
		 \begin{equation}\label{srprojectionssplit}
		 	\tr\left|S_{k,\theta}^{\frac12} (P^{(k,\alpha)}_t-P^{(k)}_t) S_{k,\theta}^{\frac12}\right|\lesssim \tr\left(S_{k,\theta}^{\frac12} (P^{(k,\alpha)}_t-P^{(k)}_t) S_{k,\theta}^{\frac12}\right)+\|S_{k,\theta}^{\frac12} (P^{(k,\alpha)}_t-P^{(k)}_t) S_{k,\theta}^{\frac12}\|_{HS},
		 \end{equation}
		 a fact first pointed out by R. Seiringer, see also \cite{AH}. To bound the first term, we use
		 \begin{equation}\label{srprojetraceaid}
		 \begin{split}
			\tr\left(S_{k,\theta}^{\frac12} (P^{(k,\alpha)}_t-P^{(k)}_t) S_{k,\theta}^{\frac12}\right)&=\|S_{k,\theta}^{\frac12}(\vphi_t^{(\alpha)})^{\otimes k}\|_2^2-\|S_{k,\theta}^{\frac12}\vphi_t^{\otimes k}\|_2^2\\
			&=k\left(\|S^\frac{\theta}{2}\vphi_t^{(\alpha)}\|_2^2-\|S^\frac{\theta}{2}\vphi_t\|_2^2\right)\\
			&\leq k\scp{|S^\frac{\theta}{2}(\vphi_t^{(\alpha)}-\vphi_t)|}{|S^\frac{\theta}{2}(\vphi_t^{(\alpha)}+\vphi_t)|}\\
			&\leq k\|S^\frac{\theta}{2}(\vphi_t^{(\alpha)}-\vphi_t)\|_2(\|S^\frac{\theta}{2}\vphi_t^{(\alpha)}\|_2+\|S^\frac{\theta}{2}\vphi_t\|_2),
		 \end{split}	 	
		 \end{equation}
		 where in the last step we applied Cauchy-Schwarz. Using interpolation together with Lemma \ref{fracregapprox}, with $\varepsilon=1/3$, and Corollary \ref{fracregregpers}, we then obtain
		 \begin{equation}\label{srprojetrace}
			 \begin{split}
				\tr\left(S_{k,\theta}^{\frac12} (P^{(k,\alpha)}_t-P^{(k)}_t) S_{k,\theta}^{\frac12}\right)&\leq Ck\|\vphi_t^{(\alpha)}-\vphi_t\|_2^{1-\theta}\|S^{1/2}(\vphi_t^{(\alpha)}-\vphi_t)\|_2^{\theta}\\
				&\leq Ck\alpha^{\frac{1-\theta+\varepsilon(1+\theta)}{2}},
			 \end{split}
		 \end{equation}
		 where the involved constants depend on $C=C_{\nu,\|\vphi_0\|_{H^{2/3}},T}$. 
		 \par To bound the second term in \eqref{srprojectionssplit}, we use that we can write the Hilbert-Schmidt norm of on operator as the $L^2$-norm of its kernel. Abbreviating $f:=S_{k,\theta}^{1/2}(\vphi_t)^{\otimes k}$ and $g:=S_{k,\theta}^{1/2}(\vphi_t^{(\alpha)})^{\otimes k}$, we thus have
		 \begin{equation}\label{srhs}
		 \begin{split}
		 	\|S_{k,\theta}^{\frac12} (P^{(k,\alpha)}_t-P^{(k)}_t) S_{k,\theta}^{\frac12}\|_{HS}&=\|\bar{f}\otimes f-\bar{g}\otimes g\|_2\\
		 	&\leq\|(\bar{f}-\bar{g})\otimes f\|_2+\|\bar{g}\otimes(f-g)\|_2\\
		 	&=\|f-g\|_2(\|f\|_2+\|g\|_2).
		 \end{split}
		 \end{equation}
		 Next, we compute
		 \begin{equation}\label{srfgdiff}
		 	\begin{split}
				\|f-g\|_2^2&=k\|S^\frac{\theta}{2}(\vphi_t^{(\alpha)}-\vphi_t)\|_2^2\\
				&\leq k\|\vphi_t^{(\alpha)}-\vphi_t\|_2^{2-2\theta}\|S^{1/2}(\vphi_t^{(\alpha)}-\vphi_t)\|_2^{2\theta}\\
				&\leq Ck\alpha^{1-\theta+\varepsilon(1+\theta)}
		 	\end{split}
		 \end{equation}
		 for some constant $C=C_{\nu,\|\vphi_0\|_{H^{2/3}},T}$, where we used the same arguments as for \eqref{srprojetrace}. Similarly, we show
		 \begin{equation}\label{srfgsum}
		 	\|f\|_2+\|g\|_2\leq C\sqrt{k}
		 \end{equation}
		 for some constant $C=C_{\nu,\|\vphi_0\|_{H^{2/3}},T}$. Collecting \eqref{srprojetrace}, \eqref{srhs}, \eqref{srfgdiff}, and \eqref{srfgsum}, we obtain the desired result.
	\end{proof}
	As an easy corollary of Proposition \ref{srpicklsupercrit} and the result given in \cite{Lee}, we also obtain the following special case.
	\begin{proposition}
		Let $\vphi_0\in H^1(\R^3)$. Fix some $T\in(0,\infty)$ such that
		$$\nu\;=\;\sup_{|\tau|\leq T} \|\vphi_\tau\|_{H^{\frac12}}<\infty.$$
		Suppose $N\in\N$ and $k\in\N^{\leq N}$. Let $\Psi_{N,t}^{(\alpha)}=\e^{-\i H_N^{(\alpha)}t}\vphi_0^{\otimes N}$ and $\vphi$ be a solution of \eqref{hartree}. Then there is a constant $C=C_{\nu,\|\vphi_0\|_{H^1},T}$ such that with the above notations we have for any $0<\alpha<1$ small enough, any $t\in [0,T]$ and any $\theta\in[0,1)$
		$$\tr\left|S_{1,\theta}^{\frac12} (\gamma_{N,t}^{(1,\alpha)}-P^{(1)}_t) S_{1,\theta}^{\frac12}\right|\leq C \left( \frac1{\alpha^{\max\{1/2;\theta\}}N^{\min\{1/2;1-\theta\}}}+\alpha^{1-\max\{1/2;\theta\}}\right).$$
		In particular, we have for $\alpha=\alpha_N=\mathcal{O}(N^{-\min\{1/2;1-\theta\}})$
		$$\tr\left|S_{1,\theta}^{\frac12} (\gamma_{N,t}^{(1,\alpha)}-P^{(1)}_t) S_{1,\theta}^{\frac12}\right|\leq \frac{C}{N^{\min\{1/2;1-\theta\}^2}}.$$
	\end{proposition}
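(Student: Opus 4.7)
The plan is to present this as a direct $k=1$ corollary of Proposition \ref{srpicklsupercrit} together with the single-particle $1/N$ Hilbert--Schmidt bound of \cite{Lee}, using the same triangle-inequality splitting that was employed in the proof of Proposition \ref{srhighersobolevconv}. The essential improvement over Proposition \ref{srhighersobolevconv} is that, because $k=1$ and we have $H^1$ initial data, Lee's result provides $\|\gamma_{N,t}^{(1)}-P^{(1)}_t\|_{HS}\le f(t)/N$ with $f$ depending only on $\lambda$ and $\sup_{\tau\in[0,t]}\|\vphi_\tau\|_{H^1}$. I would first verify that the same bound holds for the regularized pair $(\Psi_{N,t}^{(\alpha)},\vphi_t^{(\alpha)})$ uniformly in $\alpha\in(0,1)$. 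This reduces to checking uniform $H^1$-boundedness of $\vphi_t^{(\alpha)}$ on $[0,T]$, which follows from Corollary \ref{fracregregpers} (uniform $H^{1/2}$-boundedness, hence $\nu^{(\alpha)}\le\kappa$) plugged into Lemma \ref{fracpersreg}.

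The core of the argument follows the splitting
\begin{equation*}
\tr\bigl|S_{1,\theta}^{1/2}(\gamma_{N,t}^{(1,\alpha)}-P^{(1)}_t)S_{1,\theta}^{1/2}\bigr|
\;\lesssim\;\tr\bigl|S_{1,\theta}^{1/2}(\gamma_{N,t}^{(1,\alpha)}-P^{(1,\alpha)}_t)S_{1,\theta}^{1/2}\bigr|
+\tr\bigl|S_{1,\theta}^{1/2}(P^{(1,\alpha)}_t-P^{(1)}_t)S_{1,\theta}^{1/2}\bigr|.
\end{equation*}
For the first piece I would invoke Proposition \ref{oldmainthm} with $s=1$, $k=1$, applied to the regularized solutions, combined with $a_{N,t}^{(\alpha)}\le C/N$ from Proposition \ref{srpicklsupercrit} and the regularized Lee bound; these together yield the factor $N^{-\min\{1/2;1-\theta\}}$. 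The multiplicative constant $C_{\Psi^{(\alpha)},\vphi^{(\alpha)},\theta,1}$ from Proposition \ref{oldmainthm} involves $\|S_{1,1}^{1/2}\Psi_{N,t}^{(\alpha)}\|_2$, which by the energy computation \eqref{srregnorm} behaves like $\alpha^{-1/2}$, contributing $\alpha^{-\max\{1/2;\theta\}}$ after raising to the power $\max\{1;2\theta\}$; the companion $\|S^{1/2}\vphi_t^{(\alpha)}\|_2$ factor is uniformly bounded by Corollary \ref{fracregregpers} and Lemma \ref{fracpersreg}. This delivers the first summand on the right-hand side of the claim.

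For the second piece I would repeat the rank-one argument \eqref{srprojectionssplit}--\eqref{srfgsum} that was carried out in the proof of Proposition \ref{srhighersobolevconv}, specialized to $k=1$, so that the bound collapses to a multiple of $\|S^{\theta/2}(\vphi_t^{(\alpha)}-\vphi_t)\|_2$. Applying Lemma \ref{fracregapprox} at $\varepsilon=1$ (which is available because $\vphi_0\in H^1$ and $r=\max\{1/2,(1+\varepsilon)/2\}=1$) gives both the $L^2$- and $H^{1/2}$-differences at order $\alpha$; interpolating these against the uniform $H^1$-bound on $\vphi_t^{(\alpha)}-\vphi_t$ yields the stated $\alpha^{1-\max\{1/2;\theta\}}$ (in fact a slightly sharper power, which we discard for the cleaner statement). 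Finally, the closing identity $\alpha_N=N^{-\min\{1/2;1-\theta\}}$ is the unique balance between the two summands and produces the rate $N^{-\min\{1/2;1-\theta\}^2}$.

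The only delicate step is the transfer of Lee's $HS$-bound to the regularized system; everything else is a clean assembly of already-established ingredients. I would therefore isolate that transfer as an explicit remark or short lemma, noting that Lee's argument is insensitive to whether the two-body potential is $|x|^{-1}$ or the uniformly bounded regularization $(|x|^\gamma+\alpha)^{-1}$, provided the one-body $H^1$-norm of the effective solution is controlled uniformly in $\alpha$ -- which it is, on the interval $[0,T]$, by Lemma \ref{fracpersreg} combined with Corollary \ref{fracregregpers}.
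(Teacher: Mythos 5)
Your proposal is correct and follows the same line of attack as the paper's own proof: the triangle-inequality splitting $\tr|S_{1,\theta}^{1/2}(\gamma_{N,t}^{(1,\alpha)}-P^{(1)}_t)S_{1,\theta}^{1/2}|\lesssim\tr|S_{1,\theta}^{1/2}(\gamma_{N,t}^{(1,\alpha)}-P^{(1,\alpha)}_t)S_{1,\theta}^{1/2}|+\tr|S_{1,\theta}^{1/2}(P^{(1,\alpha)}_t-P^{(1)}_t)S_{1,\theta}^{1/2}|$, Proposition~\ref{oldmainthm} applied to the regularized solutions for the first piece (with the $\alpha^{-\max\{1/2;\theta\}}$ factor coming out of \eqref{srregnorm}), and the rank-one argument \eqref{srprojectionssplit}--\eqref{srfgsum} with Lemma~\ref{fracregapprox} at $\varepsilon=1$ for the second piece. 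This is exactly what the paper does.

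The one place you genuinely diverge is how Lee's Hilbert--Schmidt bound enters. You propose to verify directly that the bound holds uniformly in $\alpha$ for the fully regularized pair, i.e.\ $\|\gamma_{N,t}^{(1,\alpha)}-P^{(1,\alpha)}_t\|_{HS}\le C/N$, by observing that Lee's argument is insensitive to the regularization of the potential as long as one has uniform $H^1$-control of $\vphi_t^{(\alpha)}$, which Corollary~\ref{fracregregpers} together with Lemma~\ref{fracpersreg} supplies. The paper instead writes $\|\gamma_{N,t}^{(1,\alpha)}-P^{(1,\alpha)}_t\|_{HS}\le\|\gamma_{N,t}^{(1,\alpha)}-P^{(1)}_t\|_{HS}+\|\vphi_t^{(\alpha)}-\vphi_t\|_2\le C(1/N+\alpha)$, applying Lee to the \emph{mixed} pair and Lemma~\ref{fracregapprox} with $\varepsilon=1$ to the difference. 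Your direct route is cleaner and in fact sharper: when the paper's $C(1/N+\alpha)$ bound is raised to the power $1-\theta$ and multiplied by the $\alpha^{-\max\{1/2;\theta\}}$ prefactor from $C_{\Psi^{(\alpha)},\vphi^{(\alpha)},\theta}$, the $\alpha^{1-\theta}$ contribution produces a stray term $\alpha^{1-\theta-\max\{1/2;\theta\}}$ that is \emph{not} dominated by the two summands in the stated bound once $\theta>0$; your version with a flat $C/N$ avoids this and lands exactly on $\alpha^{-\max\{1/2;\theta\}}N^{-\min\{1/2;1-\theta\}}$ as claimed. The one caveat is that the ``regularized Lee bound'' is not a literal citation but genuinely requires the short verification you correctly flag -- neither the original reference nor the present paper spells it out, so isolating it as a lemma, as you suggest, is the right call. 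The remaining interpolation you use for the second piece (against the uniform $H^1$-bound) gives $\alpha^{1-\theta}\le\alpha^{1-\max\{1/2;\theta\}}$, which is slightly weaker than the $\alpha^1$ one obtains by interpolating both $L^2$- and $\dot H^{\gamma/4}$-differences from Lemma~\ref{fracregapprox}, but it suffices for the statement.
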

\begin{proof}
	As in the last proof, we start with the estimate
	\begin{equation*}
		\begin{split}
		\tr\left|S_{1,\theta}^{\frac12} (\gamma_{N,t}^{(1,\alpha)}-P^{(1)}_t) S_{1,\theta}^{\frac12}\right|
		&\lesssim C_{\Psi_{N,t}^{(\alpha)},\vphi_t^{(\alpha)},\theta} \left( (a_{N,t}^{(\alpha)})^{\min(\frac{1}{2},1-\theta)} + \|\gamma_{N,t}^{(1,\alpha)}-P^{(1,\alpha)}_t\|_{HS}^{1-\theta}\right)\\
		&\qquad +\tr\left|S_{1,\theta}^{\frac12} (P^{(1,\alpha)}_t-P^{(1)}_t) S_{1,\theta}^{\frac12}\right|.
		\end{split}
	\end{equation*}
	Whereas for the remaining terms, we use the same bounds as established in the previous proof, we want to use the better rate of \cite{Lee} to bound $\|\gamma_{N,t}^{(1,\alpha)}-P^{(1,\alpha)}_t\|_{HS}\leq C/N$. We have
	\begin{align*}
			\|\gamma_{N,t}^{(1,\alpha)}-P^{(1,\alpha)}_t\|_{HS}&\leq\|\gamma_{N,t}^{(1,\alpha)}-P^{(1)}_t\|_{HS}+\|P^{(1)}_t-P^{(1,\alpha)}_t\|_{HS}\\
			&\leq\|\gamma_{N,t}^{(1,\alpha)}-P^{(1)}_t\|_{HS}+\|\vphi_t^{(\alpha)}-\vphi_t\|_2\\
			&\leq C\left(\frac1N+\alpha\right)
	\end{align*}
	for some constant $C=C_{\nu,H^1,T}$, where in the last step we used the above mentioned result of \cite{Lee} together with Lemma \ref{fracregapprox}, with $\varepsilon=1$. Repeating the same steps of the last proof, together with Lemma \ref{fracregapprox}, with $\varepsilon=1$, we obtain
	$$\tr\left|S_{1,\theta}^{\frac12} (P^{(1,\alpha)}_t-P^{(1)}_t) S_{1,\theta}^{\frac12}\right|\leq C\alpha$$
	for some constant $C=C_{\nu,H^1,T}$. This concludes the proof.
\end{proof}

	\begin{remark}[Some comments on the proofs of Propositions \ref{srsupconv} -- \ref{srhighersobolevconv}]\label{simplifyrest}
		We want to point out that the only arguments that involved the specific parameters $(\gamma,\sigma)$ of the Hamiltonian were used in bounding $\|\vphi_t^{(\alpha)}-\vphi_t\|_2$, $\|S^{1/2}(\vphi_t^{(\alpha)}-\vphi_t)\|_2$ as well, when applying Strichartz estimates to obtain an upper bound on $a_{N,t}$ and $a_{N,t}^{(\alpha)}$. Moreover, that we use the same estimates for both $a_{N,t}$ and $a_{N,t}^{(\alpha)}$. More precisely, we show  
		\begin{equation*}
			\begin{split}
				a_{N,t}&\leq\frac{\e^{t^{p}\left(\|\vphi_0\|_{H^{s}}+\int_0^t\|\vphi_\tau\|_{H^{s'}}^3\mathrm{d}\tau \right)+t}}{N},\\
				a_{N,t}^{(\alpha)}&\leq\frac{\e^{t^{p}\left(\|\vphi_0\|_{H^{s}}+\int_0^t\|\vphi_\tau^{(\alpha)}\|_{H^{s'}}^3\mathrm{d}\tau \right)+t}}{N}
			\end{split}
		\end{equation*}
		for some exponents $p$, $s$, $s'$. By Lemma \ref{fracpersreg} this reduces to only have a uniform bound on $\|\vphi_t\|_{H^{\gamma/2}}$ respectively $\|\vphi_t^{(\alpha)}\|_{H^{\gamma/2}}(\R^3)$. Corollary \ref{fracregregpers} then shows that this reduces to requiring a uniform bound on $\|\vphi_t\|_{H^{\gamma/2}}$. We will use this observation to shorten the proofs in the remaining sections.
	\end{remark}

	\subsection{Fractional NLS with possibly singular potentials}
	
	In this section, we always assume \ref{itm:fs}. For this is rather non-physical case, not much interest has been shown in deriving the NLS from the Schr\"odinger equation. Since our present tools are sufficient and sufficiently simple, we want to present a derivation here. In order to reduce needed regularity when applying the Pickl method, we again need Strichartz estimates. A pair $(q,r)\in[2,\infty]^2$ is called \textit{admissible} iff
	$$\frac2q+\frac3r=\frac32.$$
	Define the following Strichartz-norm
	\begin{equation}\label{strichartznorm}
		\|u\|_{S^s_{q,r}(I)}\;:=\;\||\nabla|^{-3(1-\sigma)(\frac12-\frac1r)}u_\tau\|_{L^q_t(I;W^{s,r}_x)}
	\end{equation}
	for an interval $I\subseteq \R$. Let us recall the following fact from \cite{COX}, see also \cite{HS}.
	\begin{lemma}[Cho, Ozawa, Xia \cite{COX}]\label{fsstrichartz}
		Let $(q,r)\in[2,\infty]^2$ be admissible. Then we have for any interval $I\subseteq\R$ and any $s\in\R$
		\begin{equation*}
		\begin{split}
		\|\e^{-\i (-\Delta)^{\sigma}t}\vphi_0\|_{S^s_{q,r}(I)}\;&\lesssim\;\|\vphi_0\|_{H^s},\\
		\|\int_0^t \e^{-\i (-\Delta)^{\sigma}(t-\tau)}F(\tau) \mathrm{d}\tau\|_{S^s_{q,r}(I)}\;&\lesssim\;\|F\|_{L^1_t([0,t];H^s_x)}.
		\end{split}
		\end{equation*}
	\end{lemma}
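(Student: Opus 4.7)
My plan is to prove the homogeneous estimate via a frequency-localized dispersive estimate followed by the Keel-Tao abstract Strichartz framework, and then to deduce the inhomogeneous estimate from Duhamel's formula. Since $e^{-\i(-\Delta)^\sigma t}$ commutes with $\langle\nabla\rangle^s$, the case of general $s$ reduces to $s=0$, for which the goal becomes
$$\||\nabla|^{-3(1-\sigma)(\frac12-\frac1r)}e^{-\i(-\Delta)^\sigma t}\vphi_0\|_{L^q_t(I;L^r_x)}\;\lesssim\;\|\vphi_0\|_{L^2}.$$

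The key step is to decompose $\vphi_0=\sum_{N\in 2^\Z}P_N\vphi_0$ with Littlewood-Paley projectors and to establish the frequency-localized dispersive estimate
$$\|e^{-\i(-\Delta)^\sigma t}P_Nf\|_{L^\infty_x}\;\lesssim\;|t|^{-3/2}\,N^{3(1-\sigma)}\,\|P_Nf\|_{L^1_x}.$$
This follows by rescaling $\xi=N\eta$ in the kernel at frequency $N$, which reduces matters to a unit-frequency oscillatory integral with effective time parameter $tN^{2\sigma}$; since $\sigma\in(0,1)\setminus\{1/2\}$, the phase $\eta\mapsto|\eta|^{2\sigma}$ is smooth with Hessian bounded away from zero on the unit annulus, so standard stationary phase (or a van der Corput type bound) gives decay $(tN^{2\sigma})^{-3/2}$, and undoing the rescaling produces the displayed bound. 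Interpolating with the trivial $L^2$-isometry and feeding the $TT^*$-version into Keel-Tao, the admissibility $2/q+3/r=3/2$ matches the decay exponent $3/2>1$ so that the endpoint $(q,r)=(2,6)$ is covered, yielding
$$\|e^{-\i(-\Delta)^\sigma t}P_Nf\|_{L^q_tL^r_x}\;\lesssim\;N^{3(1-\sigma)(\frac12-\frac1r)}\|P_Nf\|_{L^2_x}.$$
Summing over dyadic $N$ via the Littlewood-Paley square-function characterization of $L^r_x$ (valid since $r\in[2,6]$), Minkowski's inequality in $L^q_tL^r_x$ versus $\ell^2$, and Plancherel in $L^2_x$ absorbs the prefactor exactly into $|\nabla|^{-3(1-\sigma)(\frac12-\frac1r)}$, giving the homogeneous $s=0$ estimate. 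The inhomogeneous bound with $L^1_tH^s_x$ right-hand side then follows from the Minkowski integral inequality applied to the Duhamel formula, no Christ-Kiselev machinery being needed since the source side already lives in $L^1_t$.

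The main obstacle is the frequency-dependent loss $N^{3(1-\sigma)}$ in the dispersive estimate: it is genuine whenever $\sigma<1$ and forces one to work dyadically, with the loss absorbed into the $S^s_{q,r}$-norm through the negative fractional derivative $|\nabla|^{-3(1-\sigma)(1/2-1/r)}$. For $r=2$ the loss vanishes, consistent with $L^2$-conservation, and for $\sigma=1$ one recovers the classical Schr\"odinger Strichartz estimates. The exclusion $\sigma=1/2$ is essential here because then the phase $|\eta|$ is conic and its Hessian degenerates along radial directions, so only wave-type Strichartz estimates (as in Lemma \ref{srstrichartz}) are available, with a different admissibility condition.
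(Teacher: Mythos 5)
This lemma is not proved in the paper; it is quoted directly from Cho--Ozawa--Xia \cite{COX} (see also \cite{HS}), so there is no in-paper argument to compare against. Your blind proof is correct and is the canonical route for Strichartz estimates of fractional Schr\"odinger type, and essentially the argument of the cited reference: the frequency-localized dispersive estimate
$\|\e^{-\i(-\Delta)^\sigma t}P_N f\|_{L^\infty_x}\lesssim N^{3(1-\sigma)}|t|^{-3/2}\|P_N f\|_{L^1_x}$
follows from three-dimensional stationary phase once one observes that the Hessian of $\eta\mapsto|\eta|^{2\sigma}$ on the unit annulus has determinant proportional to $(2\sigma)^{3}(2\sigma-1)$, hence is non-degenerate exactly when $\sigma\neq1/2$; Keel--Tao/$TT^*$ then yields $\|\e^{-\i(-\Delta)^\sigma t}P_N f\|_{L^q_tL^r_x}\lesssim N^{3(1-\sigma)(\frac12-\frac1r)}\|P_N f\|_{2}$ for all Schr\"odinger-admissible pairs including the endpoint $(2,6)$; the Littlewood--Paley square-function characterization of $L^r_x$ together with Minkowski (legitimate since $q,r\geq 2$, and admissibility forces $r\in[2,6]$) absorbs the dyadic loss precisely into the $|\nabla|^{-3(1-\sigma)(\frac12-\frac1r)}$ weight of the $S^s_{q,r}$-norm; and Minkowski's integral inequality applied to Duhamel handles the inhomogeneous term without any Christ--Kiselev input since the source side carries an $L^1_t$ norm. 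One small remark: in three dimensions the relevant tool is genuinely stationary phase with non-degenerate Hessian rather than a one-dimensional van der Corput bound, but the conclusion of your argument does not depend on the parenthetical alternative.
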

	\paragraph{Defocusing case or focusing case with small coupling.} Let us consider the regime $\mu=1$ or $\lambda<\lambda_{S,c}$.
\begin{proposition}\label{fssupconv}
Let $r:=\max\{\sigma;(1-\sigma)\gamma\}$, and $\vphi_0\in H^{r}(\R^3)$. Suppose $N\in\N$ and $k\in\N^{\leq N}$. Let $\Psi_{N,t}$ be a solution of \eqref{schroedinger} and $\vphi_t$ be a solution of \eqref{hartree}. Then there is a constant $C=C_{\|\vphi_0\|_{H^{r}}}$ such that for all $t\geq0$ we have
\begin{equation*}
\begin{split}
\tr{\left|\gamma_{N,t}^{(k)}-P^{(k)}_t\right|}&\lesssim\;\sqrt{k}\frac{\e^{C\e^{Ct}}}{\sqrt{N}},\\
a_{N,t}\;&\leq\;\frac{\e^{C\e^{Ct}}}{N}.
\end{split}
\end{equation*}
If $\sigma\geq\gamma/(\gamma+1)$, we only need to assume $\vphi_0\in H^\sigma(\R^3)$ and we can improve these bounds to
\begin{equation*}
\begin{split}
\tr{\left|\gamma_{N,t}^{(k)}-P^{(k)}_t\right|}&\lesssim\;\sqrt{k}\frac{\e^{C(\sqrt{t}+t^{2})}}{\sqrt{N}},\\
a_{N,t}\;&\leq\;\frac{\e^{C(\sqrt{t}+t^{2})}}{N}
\end{split}
\end{equation*}
for some $C=C_{\|\vphi_0\|_{H^\sigma}}$.
\end{proposition}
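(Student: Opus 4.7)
The approach mirrors Proposition \ref{srsupconv}, replacing the semi-relativistic Strichartz estimate by the fractional one of Lemma \ref{fsstrichartz}. By Proposition \ref{firstprevious} together with \eqref{potentialstrichartz}, the bound on $a_{N,t}$ reduces to controlling $\|\vphi\|_{L^1_\tau([0,t];L^{6/(3-2\gamma)}_x)}$, and the trace-norm statement then follows from \eqref{picklschmidt}. So the whole task is to bound this space-time norm.

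First, apply Hölder in time to pass from $L^1_\tau$ to an admissible $L^q_\tau$: pick $(q,6/(3-2\gamma))$ admissible in the sense of Lemma \ref{fsstrichartz}, the natural choice being $q=2/\gamma$ for $\gamma\leq 1$ and a different admissible pair combined with a Sobolev embedding when $\gamma\in(1,3/2)$. Second, apply Lemma \ref{fsstrichartz} to the Duhamel form of \eqref{hartree}; to undo the weight $|\nabla|^{-3(1-\sigma)(\frac12-\frac1r)}$ appearing in the definition \eqref{strichartznorm} and recover the plain $L^{6/(3-2\gamma)}_x$ norm, the Sobolev index in the Strichartz norm must be set to $s=(1-\sigma)\gamma$. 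Third, control the nonlinearity by a trilinear fractional Leibniz estimate in the spirit of Lemma \ref{fraccontract},
$$\bigl\|\tfrac{\mu\lambda}{|\cdot|^\gamma}*|\vphi|^2\,\vphi\bigr\|_{H^{(1-\sigma)\gamma}} \;\lesssim\; \|\vphi\|_{H^{\gamma/2}}^2\,\|\vphi\|_{H^{(1-\sigma)\gamma}}.$$
The Sobolev exponent $(1-\sigma)\gamma$, combined with the $\max\{s;\sigma\}$ requirement of Lemma \ref{fracpersreg}, forces the regularity threshold $r=\max\{\sigma;(1-\sigma)\gamma\}$ appearing in the statement. Mass conservation and positivity of the energy (Lemmas \ref{fracconservation} and \ref{energypos}) give a uniform-in-time bound on $\|\vphi_\tau\|_{H^{\gamma/2}}$, whence Lemma \ref{fracpersreg} yields $\|\vphi_\tau\|_{H^{(1-\sigma)\gamma}}\lesssim \e^{Ct}\|\vphi_0\|_{H^r}$. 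Collecting, $\|\vphi\|_{L^1_\tau L^{6/(3-2\gamma)}_x}\lesssim \e^{Ct}$, and exponentiating via Proposition \ref{firstprevious} produces the claimed double-exponential bound on $a_{N,t}$.

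For the improved case $\sigma\geq \gamma/(\gamma+1)$ one has $(1-\sigma)\gamma\leq\sigma$, so $r=\sigma$ suffices. Crucially, Lemma \ref{energypos} now supplies a uniform-in-time bound $\|\vphi_\tau\|_{H^\sigma}\lesssim 1$ directly from the initial energy, bypassing the exponential-in-$t$ growth of Lemma \ref{fracpersreg}. Consequently the Strichartz-Duhamel integral grows only linearly in $t$, and multiplied by the Hölder time factor $t^{1-1/q}$ this yields polynomial growth of the exponent in $a_{N,t}$, producing the claimed $\e^{C(\sqrt{t}+t^2)}$ bound. The main technical obstacle is threading the admissibility constraint $2/q+3/r=3/2$ through the weight exponent $3(1-\sigma)(\tfrac12-\tfrac1r)$ of \eqref{strichartznorm} together with the Sobolev loss $(1-\sigma)\gamma$ from the cubic nonlinearity, and verifying compatibility across the full parameter range $\gamma\in(0,3/2)$, $\sigma\in[\gamma/2,1)\setminus\{1/2\}$; this three-way balance is precisely what pins down the threshold $\sigma=\gamma/(\gamma+1)$ above which the nonlinear regularity contribution $(1-\sigma)\gamma$ is dominated by $\sigma$ and a sharper, energy-conservation driven time growth kicks in.
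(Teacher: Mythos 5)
Your proposal follows essentially the same route as the paper: reduce to bounding $\|\vphi_\tau\|_{L^1_\tau([0,t];L^{6/(3-2\gamma)}_x)}$ via Proposition \ref{firstprevious} and \eqref{potentialstrichartz}, apply H\"older in time, pass to the Strichartz norm \eqref{strichartznorm} with Sobolev index chosen to absorb the weight, treat the Duhamel integral with the trilinear estimate of Lemma \ref{fraccontract}, and close using Lemma \ref{fracpersreg}; you also correctly flag the $\gamma>1$ case (Sobolev embedding $\dot W^{\gamma-1,6}\hookrightarrow L^{6/(3-2\gamma)}$ with the pair $(2,6)$, so that the effective Strichartz index is $\gamma-\sigma$ rather than $(1-\sigma)\gamma$) and the mechanism by which $\sigma\geq\gamma/(\gamma+1)$ yields a time-uniform $H^{(1-\sigma)\gamma}$-bound and hence the polynomial exponent $\sqrt t + t^2$. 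This matches the paper's proof.
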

\begin{proof}
	As explained at the end of section \ref{notations}, it is enough to bound the quantity $\|\vphi_\tau\|_{L^1_\tau([0,t],L^{6/(3-2\gamma)}_x)}$ for we then have
	$$a_{N,t}\leq\frac{\e^{c(\|\vphi_\tau\|_{L^1_\tau([0,t];L^{6/(3-2\gamma)}_x)})}}{N}$$
	for some universal constant $c$.
	\paragraph{Case $\gamma>1$.} We start by applying H\"older's inequality followed by Sobolev's inequality
	\begin{equation}
		\begin{split}
			\|\vphi_\tau\|_{L^1_\tau([0,t];L^{6/(3-2\gamma)}_x)}&\leq \sqrt{t}\|\vphi_\tau\|_{L^2_\tau([0,t];L^{6/(3-2\gamma)}_x)}\\
			&\lesssim\sqrt{t}\||\nabla|^{\gamma-1}\vphi_\tau\|_{L^2_\tau([0,t];L^6_x)}\\
			&\lesssim\sqrt{t}\|\vphi\|_{S^{\gamma-\sigma}_{2,6}}\\
			&\lesssim\sqrt{t}(\|\vphi_0\|_{H^{\gamma-\sigma}}+\|(\frac1{|\cdot|^\gamma}*|\vphi_{\tau}|^2)\vphi_\tau\|_{L^1_\tau H^{\gamma-\sigma}_x}),
		\end{split}
	\end{equation}
	where in the last step we applied Lemma \ref{fsstrichartz}. For $\sigma\geq\gamma/2>\gamma/(\gamma+1)$, we have $\gamma-\sigma\leq\gamma/2$. Thus, applying Lemma \ref{fraccontract}, we find
	\begin{equation}
	\begin{split}
		\|(\frac1{|\cdot|^\gamma}*|\vphi_{\tau}|^2)\vphi_\tau\|_{H^{\gamma-\sigma}}\lesssim \|\vphi_{\tau}\|_{H^{\frac{2\gamma-\sigma}{3}}}^3\leq C
	\end{split}
	\end{equation}
	for some $C=C_{\|\vphi_0\|_{H^{\sigma}}}$. In the last step, we applied Lemma \ref{fracpersreg}. 
	\paragraph{Case $\gamma\leq1$.} In this case we have $6/(3-2\gamma)\leq6$ and we have Strichartz estimates available. After applying H\"older's inequality, we thus apply Strichartz estimates \ref{fsstrichartz} to obtain
	\begin{equation}
	\begin{split}
	\|\vphi_\tau\|_{L^1_\tau([0,t];L^{6/(3-2\gamma)}_x)}&\leq t^{\frac{2-\gamma}2}\|\vphi_\tau\|_{L^{2/\gamma}_\tau([0,t];L^{6/(3-2\gamma)}_x)}\\
	&\leq t^{\frac{2-\gamma}2}\|\vphi_\tau\|_{S^{(1-\sigma)\gamma}_{2/\gamma,6/(3-2\gamma)}}\\
	&\lesssim t^{\frac{2-\gamma}2}(\|\vphi_0\|_{H^{(1-\sigma)\gamma}}+\|(\frac1{|\cdot|^\gamma}*|\vphi_{\tau}|^2)\vphi_\tau\|_{L^1_\tau H^{(1-\sigma)\gamma}_x}).
	\end{split}
	\end{equation}
	In the case $\sigma\geq\gamma/(\gamma+1)$, we apply Lemma \ref{fraccontract} to obtain
	\begin{equation}
		\begin{split}
			\|(\frac1{|\cdot|^\gamma}*|\vphi_{\tau}|^2)\vphi_\tau\|_{H^{(1-\sigma)\gamma}}\lesssim\|\vphi_{\tau}\|_{H^{\gamma/2}}^2\|\vphi_{\tau}\|_{H^{(1-\sigma)\gamma}}\leq C
		\end{split}
	\end{equation}
	for some constant $C=C_{\|\vphi_0\|_{H^\sigma}}$, where in the last step, we applied Lemma \ref{fracpersreg}. In the case $\sigma<\gamma/(\gamma+1)$, we apply Lemma \ref{fraccontract} again to find
	\begin{equation*}
	\begin{split}
	\|(\frac1{|\cdot|^\gamma}*|\vphi_{\tau}|^2)\vphi_\tau\|_{H^{(1-\sigma)\gamma}}\lesssim\|\vphi_{\tau}\|_{H^{\gamma/2}}^2\|\vphi_{\tau}\|_{H^{(1-\sigma)\gamma}}\lesssim C\e^{C\tau}
	\end{split}
	\end{equation*}
	for some constant $C=C_{\|\vphi_0\|_{H^{(1-\sigma)\gamma}}}$.
\end{proof}
In view of remark \ref{simplifyrest}, we obtain
\begin{proposition}\label{fssubcrit}
	Let $r:=\max\{\sigma;(1-\sigma)\gamma\}$ and $\vphi_0\in H^{r}(\R^3)$. Suppose $N\in\N$ and $k\in\N^{\leq N}$. Let $\Psi_{N,t}$ be a solution of \eqref{schroedinger} and $\vphi_t$ be a solution of \eqref{hartree}. Then there is a constant $C=C_{\|\vphi_0\|_{H^{r}}}$ such that for all $t\geq0$ and any $N\in\N$ we have
	$$\tr\left|S_{k,\theta}^{\frac12} (\gamma_{N,t}^{(k)}-P^{(k)}_t) S_{k,\theta}^{\frac12}\right|\leq Ck^{\frac{3-\theta}2}\frac{\e^{C\e^{Ct}}}{N^{\frac{1-\theta}{2}}}.$$
	If $\sigma\geq\gamma/(\gamma+1)$, we only need to require $\vphi_0\in H^\sigma(\R^3)$ and we can improve this bound to
	$$\tr\left|S_{k,\theta}^{\frac12} (\gamma_{N,t}^{(k)}-P^{(k)}_t) S_{k,\theta}^{\frac12}\right|\leq Ck^{\frac{3-\theta}2}\frac{\e^{C(\sqrt{t}+t^{2})}}{N^{\frac{1-\theta}{2}}}$$
	for some $C=C_{\|\vphi_0\|_{H^\sigma}}$.
\end{proposition}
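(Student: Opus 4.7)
The plan is to reduce the statement to a direct combination of Proposition \ref{oldmainthm} and Proposition \ref{fssupconv}, mirroring exactly the strategy already used to establish Proposition \ref{srsubcrit} in the semi-relativistic case. First I apply Proposition \ref{oldmainthm} with $s=1$, which yields
\begin{equation*}
\tr\bigl|S_{k,\theta}^{\frac12}(\gamma_{N,t}^{(k)}-P_t^{(k)})S_{k,\theta}^{\frac12}\bigr|
\;\leq\; k\,C_{\Psi_{N,t},\vphi_t,\theta,1}\bigl(a_{N,t}^{\min\{1/2,\,1-\theta\}}+\|\gamma_{N,t}^{(k)}-P_t^{(k)}\|_{HS}^{1-\theta}\bigr),
\end{equation*}
so that everything reduces to (i) uniform boundedness in $(N,t)$ of the prefactor and (ii) quantitative rates for $a_{N,t}$ and the Hilbert--Schmidt norm.

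For step (i), the prefactor is $C_{\Psi_{N,t},\vphi_t,\theta,1}=2(\|S_{1,1}^{1/2}\Psi_{N,t}\|_2+\|S^{1/2}\vphi_t\|_2)^{\max\{1,2\theta\}}$. The term $\|S^{1/2}\vphi_t\|_2^2 = 2T^{(\sigma)}[\vphi_t]$ is controlled using energy conservation (Lemma \ref{fracconservation}) together with the coercivity estimate of Lemma \ref{energypos}, which is available precisely because we are working in the regime $\mu=1$ or $\lambda<\lambda_{S,c}$; this yields a bound in terms of $\|\vphi_0\|_{H^\sigma}$, hence in terms of $\|\vphi_0\|_{H^r}$ since $r\geq\sigma$. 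For $\|S_{1,1}^{1/2}\Psi_{N,t}\|_2^2$ I use bosonic symmetry to rewrite it as $\frac1N\langle\Psi_{N,t},(\sum_iS_i+N)\Psi_{N,t}\rangle$, then apply the two-sided bound $\sum_iS_i+N\lesssim H_N+N$ of Lemma \ref{hnsa}, and finally invoke conservation of $\langle H_N\rangle$ along with $\Psi_{N,0}=\vphi_0^{\otimes N}$, obtaining
\begin{equation*}
\|S_{1,1}^{1/2}\Psi_{N,t}\|_2^2\;\lesssim\;\tfrac{1}{N}\langle\vphi_0^{\otimes N},H_N\vphi_0^{\otimes N}\rangle+1\;=\;2\E^{(\gamma,\sigma)}[\vphi_0]+1,
\end{equation*}
again bounded by $\|\vphi_0\|_{H^\sigma}$.

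For step (ii), Proposition \ref{fssupconv} gives $a_{N,t}\leq e^{Ce^{Ct}}/N$ whenever $\vphi_0\in H^r$, and the sharper bound $a_{N,t}\leq e^{C(\sqrt{t}+t^2)}/N$ under the additional assumption $\sigma\geq\gamma/(\gamma+1)$ (with $\vphi_0\in H^\sigma$ sufficing). The Hilbert--Schmidt estimate \eqref{picklschmidt} then upgrades this to $\|\gamma_{N,t}^{(k)}-P_t^{(k)}\|_{HS}\lesssim\sqrt{k\,a_{N,t}}$. Substituting everything back into the bound from Proposition \ref{oldmainthm}, the Hilbert--Schmidt contribution produces a term of order $k\cdot k^{(1-\theta)/2}e^{Ce^{Ct}(1-\theta)/2}/N^{(1-\theta)/2}=k^{(3-\theta)/2}e^{Ce^{Ct}}/N^{(1-\theta)/2}$ (absorbing multiplicative constants into $C$), while the Pickl contribution $k\,a_{N,t}^{\min\{1/2,1-\theta\}}$ is of the same or smaller order for every $\theta\in[0,1)$ (the exponent on $N$ is at least $(1-\theta)/2$, and the $k$-power is only linear, which is dominated by $k^{(3-\theta)/2}$). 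Collecting the two contributions gives the claimed estimate, and using the improved $a_{N,t}$-bound in place of the double-exponential one produces the second half of the statement.

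There is no serious obstacle to this plan: the analytic heart of the argument already lives in Propositions \ref{oldmainthm} and \ref{fssupconv} together with the regularity propagation results (Lemmas \ref{fracpersreg}, \ref{fracconservation}, \ref{energypos}, \ref{hnsa}). The only mild point requiring care is verifying that the two competing contributions to the right-hand side do indeed combine into the single advertised bound across the full range $\theta\in[0,1)$, but this is a straightforward exponent check as above.
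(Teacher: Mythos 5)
Your proposal is correct and follows essentially the same route the paper takes: the paper's ``proof'' of Proposition \ref{fssubcrit} is just an appeal to Remark \ref{simplifyrest}, i.e.\ an instruction to repeat the proof of Proposition \ref{srsubcrit} verbatim with the semi-relativistic ingredients replaced by their fractional counterparts, and that is exactly what you have written out. The only substantive verifications are the ones you flagged — that $C_{\Psi_{N,t},\vphi_t,\theta,1}$ is bounded uniformly in $N$ and $t$ via energy conservation, Lemma \ref{energypos}, and \eqref{hamsupercrit}, and that the exponent bookkeeping in $k$ and $N$ collapses to the advertised $k^{(3-\theta)/2}/N^{(1-\theta)/2}$ — and both check out.

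Two very minor presentational remarks, neither of which affects correctness. First, the chain $\|S_{1,1}^{1/2}\Psi_{N,t}\|_2^2 \lesssim \frac1N\langle\vphi_0^{\otimes N},H_N\vphi_0^{\otimes N}\rangle + 1 = 2\E^{(\gamma,\sigma)}[\vphi_0]+1$ is fine, but it would read a hair more cleanly to insert the intermediate step $\frac1N\langle\Psi_{N,t},(H_N+N)\Psi_{N,t}\rangle$ before invoking conservation and $\Psi_{N,0}=\vphi_0^{\otimes N}$, as the paper does in the semi-relativistic proof. Second, it is worth making explicit that the time-growth factors $e^{Ce^{Ct}(1-\theta)/2}$ (resp.\ $e^{C(\sqrt{t}+t^2)(1-\theta)/2}$) are absorbed into a single $C$ uniformly over $\theta\in[0,1)$ because the exponent $(1-\theta)/2$ is at most $1/2$; you assert this implicitly when ``absorbing multiplicative constants into $C$,'' and the same remark covers the $\max\{1,2\theta\}$ power in the prefactor.
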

\paragraph{The focusing case.}  In this paragraph, assume $\mu=-1$, $\lambda\geq\lambda_{S,c}$, and $\sigma=\gamma/2\neq1/2$. Due to remark \ref{simplifyrest}, we will only state the results.
\begin{proposition}\label{fspicklsupercrit}
	Let $r:=\max\{\gamma/2;(1-\gamma/2)\gamma\}$, $\vphi_0\in H^{r}(\R^3)$, and define $\varepsilon\in[0,1]$ to satisfy $(1+\varepsilon)\gamma/2=r$. Fix some $T\in(0,\infty)$ such that
	$$\nu\;=\;\sup_{|\tau|\leq T} \|\vphi_\tau\|_{H^{\frac\gamma2}}<\infty.$$
	Suppose $N\in\N$ and $k\in\N^{\leq N}$. Let $\Psi_{N,t}^{(\alpha)}=\e^{-\i H_N^{(\alpha)}t}\vphi_0^{\otimes N}$ and $\vphi$ be a solution of \eqref{hartree}. Then there is a constant $C=C_{\nu,\|\vphi_0\|_{H^{r}},T}$ such that with the above notations for any $0<\alpha<1$ and any $t\in [0,T]$ we have
	\begin{equation*}
	\begin{split}
	\tr{\left|\gamma_{N,t}^{(k,\alpha)}-P^{(k)}_t\right|}&\leq C\sqrt{k}\left(\frac{1}{\sqrt{N}}+\alpha^{\frac{1+\varepsilon}{2}}\right),\\
	a_{N,t}^{(\alpha)}&\leq\frac{C}{N}.
	\end{split}
	\end{equation*}
	In particular, we have for $\alpha=\alpha_N=\mathcal{O}(N^{-1})$
	\begin{equation*}
	\tr{\left|\gamma_{N,t}^{(k,\alpha_N)}-P^{(k)}_t\right|}\lesssim C \sqrt{\frac{k}{N}}.
	\end{equation*}
	If $\gamma>1$, we only have to assume $\vphi_0\in H^{\gamma/2}(\R^3)$ and the same estimates hold with $C=C_{\nu,\|\vphi_0\|_{H^{\gamma/2}},T}$ and $\varepsilon=0$.
\end{proposition}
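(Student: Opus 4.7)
The proof follows the same blueprint as Proposition \ref{srpicklsupercrit}, with the modifications indicated in Remark \ref{simplifyrest}. The starting point is the decomposition
$$\tr{\left|\gamma_{N,t}^{(k,\alpha)}-P^{(k)}_t\right|}\;\lesssim\;\|\gamma_{N,t}^{(k,\alpha)}-P^{(k,\alpha)}_t\|_{HS}+\|P^{(k,\alpha)}_t-P^{(k)}_t\|_{HS}\;\lesssim\;\sqrt{k\,a_{N,t}^{(\alpha)}}+\sqrt{k}\,\|\vphi_t^{(\alpha)}-\vphi_t\|_2,$$
exactly as in \eqref{srtracenormsplit}, using \eqref{picklschmidt} for the first term and the fact that the $HS$-distance of two rank-one projections is controlled by the $L^2$-distance of their defining states for the second. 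The second term is immediately controlled by Lemma \ref{fracregapprox}, which yields $\|\vphi_t^{(\alpha)}-\vphi_t\|_2\lesssim C\alpha^{(1+\varepsilon)/2}$ for $C=C_{\nu,\|\vphi_0\|_{H^r},T}$, since $r=(1+\varepsilon)\gamma/2$ by assumption.

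The core work therefore consists in establishing $a_{N,t}^{(\alpha)}\leq C/N$. Following the steps of \cite{P} and \cite{AHH} as was done for \eqref{sranregbd} in the semi-relativistic setting, one obtains
$$a_{N,t}^{(\alpha)}\;\leq\;\frac{\e^{c\int_0^t\|\frac1{|\cdot|^{2\gamma}}\ast|\vphi_\tau^{(\alpha)}|^2\|_\infty^{1/2}\mathrm{d}\tau}}{N}$$
for a universal $c>0$. Invoking \eqref{potentialstrichartz}, the task is reduced to bounding $\|\vphi^{(\alpha)}\|_{L^1_\tau([0,t];L^{6/(3-2\gamma)}_x)}$, and this is precisely the quantity estimated in Proposition \ref{fssupconv}, except that every appearance of $\vphi$ is replaced by $\vphi^{(\alpha)}$. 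Case $\gamma>1$ uses H\"older plus Sobolev embedding followed by Strichartz (Lemma \ref{fsstrichartz}) to land in $H^{\gamma-\sigma}$; case $\gamma\leq1$ uses H\"older and a direct Strichartz bound at $H^{(1-\sigma)\gamma}$. In both situations Lemma \ref{fraccontract} reduces the relevant nonlinearity to a product of $H^{\gamma/2}$-norms times at most one higher Sobolev norm, and Lemma \ref{fracpersreg} together with Corollary \ref{fracregregpers} propagate these norms uniformly in $\alpha$ on $[0,T]$ under the hypothesis that $\nu=\sup_{|\tau|\leq T}\|\vphi_\tau\|_{H^{\gamma/2}}<\infty$.

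The improvement claimed when $\gamma>1$ (only $H^{\gamma/2}$ regularity needed, $\varepsilon=0$) is a direct consequence of the second branch in Proposition \ref{fssupconv}: in that regime one already gets $\sigma\geq\gamma/(\gamma+1)$ automatically since $\sigma=\gamma/2>1/2>\gamma/(\gamma+1)$ for $\gamma\in(1,3/2)$, so the stronger bound holds with $\vphi_0\in H^\sigma=H^{\gamma/2}$. Finally, combining the two pieces yields
$$\tr|\gamma_{N,t}^{(k,\alpha)}-P^{(k)}_t|\;\leq\;C\sqrt{k}\left(\frac1{\sqrt{N}}+\alpha^{(1+\varepsilon)/2}\right),$$
and the choice $\alpha=\alpha_N=\mathcal{O}(N^{-1/(1+\varepsilon)})$, or $\alpha_N=\mathcal O(N^{-1})$ as stated (a safe choice since $\varepsilon\in[0,1]$ gives $1/(1+\varepsilon)\geq 1/2$, so $N^{-1}\leq N^{-1/(1+\varepsilon)}$ still makes both terms $\lesssim N^{-1/2}$), balances the two error terms. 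The main point requiring care is tracking the dependence of constants in the Strichartz and product estimates on $\alpha$: Corollary \ref{fracregregpers} is essential here, as it guarantees that all higher Sobolev norms of $\vphi^{(\alpha)}$ appearing in the chain remain bounded by a constant depending on $\nu$, $\|\vphi_0\|_{H^r}$, and $T$, but not on $\alpha$; without this uniformity the whole argument would fail.
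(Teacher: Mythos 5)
Your proposal reproduces exactly the paper's (implicit) proof: the paper does not spell out a proof here but invokes Remark~\ref{simplifyrest} and asks the reader to mimic the proof of Proposition~\ref{srpicklsupercrit}, replacing the $L^2$-approximation step \eqref*{srregpickll2dist} by Lemma~\ref{fracregapprox} with the exponent $\varepsilon$ dictated by the lower initial regularity, and replacing the Strichartz bound on $a_{N,t}^{(\alpha)}$ by the one from the proof of Proposition~\ref{fssupconv} applied to $\vphi^{(\alpha)}$, with Corollary~\ref{fracregregpers} supplying the $\alpha$-uniformity. You have done precisely this, and your discussion of the balancing choice $\alpha_N=\mathcal O(N^{-1})$ (versus the sharper $N^{-1/(1+\varepsilon)}$) is also correct.

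One small arithmetic slip in the $\gamma>1$ discussion: you write $\sigma=\gamma/2>1/2>\gamma/(\gamma+1)$, but for $\gamma\in(1,3/2)$ one has $\gamma/(\gamma+1)>1/2$, so the second inequality is false. The correct (and elementary) justification is that $\gamma/2\geq\gamma/(\gamma+1)$ iff $\gamma+1\geq 2$, i.e.\ $\gamma\geq 1$; equivalently, for $\gamma>1$ one has $r=\max\{\gamma/2;(1-\gamma/2)\gamma\}=\gamma/2$ and hence $\varepsilon=0$, so the branch of Proposition~\ref{fssupconv} with $\vphi_0\in H^\sigma=H^{\gamma/2}$ is the one that applies. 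The conclusion you draw is unaffected.
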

\begin{remark}
	Notice the reduced rate $\alpha^{\frac{1+\varepsilon}{2}}$ due to reduced required regularity on initial data. This effectively changes the step \eqref{srregpickll2dist} in the proof of Proposition \ref{srpicklsupercrit} invoking Lemma \ref{fracregapprox}.
\end{remark}

\begin{proposition}\label{fshighersobolevconv}
	Let $r:=\max\{\gamma/2;(1-\gamma/2)\gamma\}$, $\vphi_0\in H^{r}(\R^3)$, and define $\varepsilon\in(0,1]$ to satisfy $(1+\varepsilon)\gamma/2\geq r$. Fix some $T\in(0,\infty)$ such that
	$$\nu\;=\;\sup_{|\tau|\leq T} \|\vphi_\tau\|_{H^{\frac\gamma2}}<\infty.$$
	Suppose $N\in\N$ and $k\in\N^{\leq N}$. Let $\Psi_{N,t}^{(\alpha)}=\e^{-\i H_N^{(\alpha)}t}\vphi_0^{\otimes N}$ and $\vphi$ be a solution of \eqref{hartree}. Then there is a constant $C=C_{\nu,\|\vphi_0\|_{H^{r}},T}$ such that with the above notations for any $0<\alpha<1$ small enough, any $t\in [0,T]$ and any $\theta\in[0,\min\{\frac{\vep}{1-\vep};1\})$ we have
	$$\tr\left|S_{k,\theta}^{\frac12} (\gamma_{N,t}^{(k,\alpha)}-P^{(k)}_t) S_{k,\theta}^{\frac12}\right|\leq C k^{\frac{3-\theta}2}\left( \frac1{\alpha^{\max\{1/2;\theta\}}N^{\frac{1-\theta}{2}}}+\alpha^{\frac{1-\theta+\varepsilon(1+\theta)-\max\{1;2\theta\}}{2}}\right).$$
	In particular, we have for $\alpha=\alpha_N=\mathcal{O}(N^{-(1-\theta)/(1-\theta+\varepsilon(1+\theta))})$
	$$\tr\left|S_{k,\theta}^{\frac12} (\gamma_{N,t}^{(k,\alpha)}-P^{(k)}_t) S_{k,\theta}^{\frac12}\right|\leq C\frac{k^{\frac{3-\theta}2}}{N^{\frac{(1-\theta)(1-\theta+\varepsilon(1+\theta)-\max\{1;2\theta\})}{2[1-\theta+\varepsilon(1+\theta)]}}}.$$
\end{proposition}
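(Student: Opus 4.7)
The plan is to transcribe the proof of Proposition \ref{srhighersobolevconv} into the fractional setting, replacing its use of Proposition \ref{srpicklsupercrit} by the fractional analogue Proposition \ref{fspicklsupercrit} and reading the regularization bounds of Lemma \ref{fracregapprox} in the present $\sigma=\gamma/2\neq 1/2$ regime. As anticipated in Remark \ref{simplifyrest}, no new Strichartz work is needed: it has already been absorbed into Proposition \ref{fspicklsupercrit}. The only new feature is the approximation rate supplied by Lemma \ref{fracregapprox}, namely $\|\vphi_t-\vphi_t^{(\alpha)}\|_2\lesssim\alpha^{(1+\varepsilon)/2}$ and $\|(-\Delta)^{\gamma/4}(\vphi_t-\vphi_t^{(\alpha)})\|_2\lesssim\alpha^{\varepsilon}$, whose admissibility is exactly the hypothesis $(1+\varepsilon)\gamma/2\geq r$.

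I would first split, as in \eqref{srsobolevsplit},
\begin{equation*}
\tr\left|S_{k,\theta}^{\frac12}(\gamma_{N,t}^{(k,\alpha)}-P_t^{(k)})S_{k,\theta}^{\frac12}\right|\lesssim \tr\left|S_{k,\theta}^{\frac12}(\gamma_{N,t}^{(k,\alpha)}-P_t^{(k,\alpha)})S_{k,\theta}^{\frac12}\right|+\tr\left|S_{k,\theta}^{\frac12}(P_t^{(k,\alpha)}-P_t^{(k)})S_{k,\theta}^{\frac12}\right|.
\end{equation*}
For the first term I apply Proposition \ref{oldmainthm} to the regularized data: the Pickl-functional and Hilbert-Schmidt inputs are controlled by Proposition \ref{fspicklsupercrit}, the $\vphi_t^{(\alpha)}$-part of the $C_{\Psi_{N,t}^{(\alpha)},\vphi_t^{(\alpha)},\theta}$ prefactor is uniformly controlled by Corollary \ref{fracregregpers}, and its $\Psi_{N,t}^{(\alpha)}$-part is bounded by mimicking the computation \eqref{srregnorm} and using \eqref{reghamboundskin} together with energy conservation for the regularized Schr\"odinger flow, yielding $\|S_{1,1}^{1/2}\Psi_{N,t}^{(\alpha)}\|_2^2\lesssim 1+1/\alpha$. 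Raising this to $\max\{1;2\theta\}$ produces the $\alpha^{-\max\{1/2;\theta\}}$ factor and bounds the first piece by $Ck^{(3-\theta)/2}\alpha^{-\max\{1/2;\theta\}}N^{-(1-\theta)/2}$.

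For the projection difference I perform the variational splitting \eqref{srprojectionssplit} and reduce both the trace and the Hilbert-Schmidt contributions to controlling $\|S^{\theta/2}(\vphi_t^{(\alpha)}-\vphi_t)\|_2$, following \eqref{srprojetraceaid} and the tensor-kernel estimate \eqref{srhs}. Interpolating the two Lemma \ref{fracregapprox} bounds,
\begin{equation*}
\|S^{\theta/2}(\vphi_t^{(\alpha)}-\vphi_t)\|_2\leq \|\vphi_t^{(\alpha)}-\vphi_t\|_2^{1-\theta}\|(-\Delta)^{\gamma/4}(\vphi_t^{(\alpha)}-\vphi_t)\|_2^{\theta}\lesssim\alpha^{(1-\theta+\varepsilon(1+\theta))/2},
\end{equation*}
and combining with the uniform bound on $\|S^{\theta/2}\vphi_t^{(\alpha)}\|_2$ from Corollary \ref{fracregregpers} gives the second contribution $Ck^{(3-\theta)/2}\alpha^{(1-\theta+\varepsilon(1+\theta))/2}$. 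Summing and balancing amounts to $\alpha^{\max\{1/2;\theta\}+(1-\theta+\varepsilon(1+\theta)-\max\{1;2\theta\})/2}=N^{-(1-\theta)/2}$, which collapses to $\alpha^{(1-\theta+\varepsilon(1+\theta))/2}=N^{-(1-\theta)/2}$ because $2\max\{1/2;\theta\}=\max\{1;2\theta\}$, and produces the stated choice $\alpha_N=\mathcal{O}(N^{-(1-\theta)/(1-\theta+\varepsilon(1+\theta))})$ together with the advertised rate.

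The main obstacle is verifying that the exponent $(1-\theta+\varepsilon(1+\theta)-\max\{1;2\theta\})/2$ remains strictly positive once the $\alpha^{-\max\{1/2;\theta\}}$ blow-up of the Schr\"odinger kinetic energy has been absorbed; this is precisely the purpose of the constraint $\theta\in[0,\min\{\varepsilon/(1-\varepsilon);1\})$, which guarantees in the regime $\theta\leq 1/2$ that the gain $\alpha^{\varepsilon(1+\theta)}$ from the higher approximation rate dominates the $\alpha^{\theta-1}$ loss coming from raising the kinetic-energy bound. All remaining steps are a verbatim copy of the semi-relativistic argument underlying Proposition \ref{srhighersobolevconv}.
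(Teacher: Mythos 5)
Your proposal is essentially the paper's own (unwritten) proof: the paper explicitly declines to write out a proof of Proposition \ref{fshighersobolevconv}, noting only (via Remark \ref{simplifyrest} and the remark following the statement) that it follows from the proof of Proposition \ref{srhighersobolevconv} after replacing Proposition \ref{srpicklsupercrit} by Proposition \ref{fspicklsupercrit} and adjusting the applications of Lemma \ref{fracregapprox} in the pendants to \eqref{srprojetrace} and \eqref{srhs}, which is exactly what you did. Your computations — the $(1+\varepsilon)$-dependent interpolation $\|S^{\theta/2}(\vphi_t^{(\alpha)}-\vphi_t)\|_2 \lesssim \alpha^{(1-\theta+\varepsilon(1+\theta))/2}$, the $\alpha^{-\max\{1/2;\theta\}}$ loss from \eqref{srregnorm}-type control of the regularized kinetic energy, and the use of Corollary \ref{fracregregpers} for $\|S^{1/2}\vphi_t^{(\alpha)}\|_2$ — match what the paper implicitly intends, and the bound you actually derive for the projection difference ($\alpha^{(1-\theta+\varepsilon(1+\theta))/2}$) is even slightly sharper than the proposition's stated $\alpha^{(1-\theta+\varepsilon(1+\theta)-\max\{1;2\theta\})/2}$, so it implies the claim.

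One small caveat on your discussion of the constraint $\theta < \min\{\varepsilon/(1-\varepsilon); 1\}$: as you acknowledge, your check that the final exponent is positive only covers $\theta \le 1/2$ (where $\max\{1;2\theta\}=1$). For $\theta > 1/2$ the algebra requires the stronger restriction $\theta < (1+\varepsilon)/(3-\varepsilon)$, which is not implied by $\theta < \min\{\varepsilon/(1-\varepsilon); 1\}$ when $\varepsilon > 1/3$; in that residual range the stated rate is non-decaying and the proposition becomes vacuous. This is a feature (or defect) of the paper's own statement, both here and in Proposition \ref{srhighersobolevconv}, not a gap in your argument: the inequality still holds. I would simply caution you not to frame strict positivity of the exponent as an obstacle that the hypothesis resolves, since the bound does not depend on it.
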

\begin{remark}
	As above, we have changed rates in $\alpha$ due to the adjusted initial regularity. In here, we effectively only have to change the pendants to \eqref{srprojetrace} and \eqref{srhs} according to Lemma \ref{fracregapprox}.
\end{remark}

\subsection{The non-relativistic case}
	In this section, we always assume \ref{itm:nr}. Also in this case, a pair $(q,r)\in[2,\infty]^2$ is called \textit{admissible} iff
	$$\frac2q+\frac3r=\frac32.$$ 
	Since, in the case of the full Laplacian, Strichartz estimates have been intensively studied, we will only give the reference to Tao's book \cite{Tao} and refer the reader to the references therein.
	\begin{lemma}\label{nrstrichartz}
		Let $(q,r),(\tilde{q},\tilde{r})\in[2,\infty]^2$ be admissible pairs and $s\in\R$. Then we have for any $I\subseteq\R$
		\begin{equation*}
			\begin{split}
				\|\e^{-\i (-\Delta)t}\vphi_0\|_{L^q_t(I;W^{s,r}_x)}\;&\lesssim\;\|\vphi_0\|_{H^s},\\
				\|\int_0^t \e^{-\i(-\Delta)(t-\tau)}F(\tau)\mathrm{d}\tau\|_{L^q_t(I;W^{s,r}_x)}\;&\lesssim\;\|F\|_{L^{\tilde{q}'}_t(I;W^{s,\tilde{r}'}_x)}.
			\end{split}
		\end{equation*}
	\end{lemma}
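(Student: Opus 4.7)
The plan is to reduce the statement to the classical Strichartz inequalities for the free Schr\"odinger propagator $U(t):=\e^{-\i(-\Delta)t}$ on $\R^3$, and then commute the fractional derivative $\langle\nabla\rangle^s$ through.

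First, I would establish the dispersive estimate $\|U(t)f\|_{L^\infty_x}\lesssim |t|^{-3/2}\|f\|_{L^1_x}$ by writing $U(t)f$ as a convolution with the explicit Gaussian-type kernel $(4\pi \i t)^{-3/2}\e^{\i|x|^2/(4t)}$ (oscillatory Gaussian) and reading off the $L^1\to L^\infty$ norm. Combining this with the trivial $L^2\to L^2$ unitarity via Riesz-Thorin yields $\|U(t)f\|_{L^r_x}\lesssim |t|^{-3(\frac12-\frac1r)}\|f\|_{L^{r'}_x}$ for every $r\in[2,\infty]$.

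Next I would run the $TT^*$-argument. Define $T:L^2_x\to L^q_t L^r_x$ by $Tf(t):=U(t)f$; its formal adjoint is $T^*F=\int U(-s)F(s)\,\mathrm ds$, and boundedness of $T$ is equivalent to boundedness of the self-adjoint operator
\begin{equation*}
TT^*F(t)\;=\;\int_\R U(t-s)F(s)\,\mathrm ds\,:\,L^{q'}_tL^{r'}_x\longrightarrow L^q_tL^r_x.
\end{equation*}
Inserting the dispersive bound into the inner norm reduces the claim to the one-dimensional Hardy-Littlewood-Sobolev inequality
\begin{equation*}
\Bigl\|\int_\R |t-s|^{-3(\frac12-\frac1r)}\|F(s)\|_{L^{r'}_x}\,\mathrm ds\Bigr\|_{L^q_t}\lesssim \|F\|_{L^{q'}_tL^{r'}_x},
\end{equation*}
which requires exactly the admissibility relation $\frac2q+\frac3r=\frac32$ together with $q>2$ (the endpoint $(q,r)=(2,6)$ being the deep Keel--Tao case, for which I would invoke \cite{Tao}). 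This gives the homogeneous estimate at $s=0$. The inhomogeneous estimate with two admissible pairs $(q,r),(\tilde q,\tilde r)$ follows from the same bilinear estimate, up to replacing the full time integral by the retarded one via the Christ--Kiselev lemma when $q\neq\tilde q$.

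Finally, since $\langle\nabla\rangle^s$ is a Fourier multiplier, it commutes with $U(t)$, and the bounded operator $|\nabla|^{-3(1-\sigma)(\frac12-\frac1r)}$ does not appear here (we are in the local case $\sigma=1$). Consequently, applying $\langle\nabla\rangle^s$ to $\vphi_0$ and to $F(\tau)$ in the $s=0$ estimates above and then using the equivalence $\|\langle\nabla\rangle^s u\|_{L^r_x}\asymp \|u\|_{W^{s,r}_x}$ yields the stated inequalities for general $s\in\R$. The main obstacle is the sharp endpoint $(2,6)$ in dimension three, which cannot be reached by the $TT^*$/HLS scheme alone and requires the bilinear decomposition of Keel--Tao; for all other admissible pairs the argument above is self-contained.
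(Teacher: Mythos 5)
The paper offers no proof of this lemma: the sentence immediately preceding it states that for the full Laplacian these Strichartz estimates are classical and simply refers to Tao's book \cite{Tao}. Your sketch is, in outline, exactly the standard textbook argument that \cite{Tao} records: fixed-time dispersive bound from the explicit Fresnel kernel, Riesz--Thorin to $L^{r'}\to L^r$, the $TT^*$ reduction plus one-dimensional Hardy--Littlewood--Sobolev for non-endpoint pairs, Keel--Tao for $(q,r)=(2,6)$, Christ--Kiselev for the retarded integral, and commutation of $\langle\nabla\rangle^s$ with the free propagator to upgrade $s=0$ to general $s$. That is correct and is precisely the material the paper is delegating to the reference. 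Two small points worth tightening if you were to write this out in full. First, the passage from the non-retarded to the retarded inhomogeneous estimate via Christ--Kiselev requires $\tilde q'<q$, not merely $q\neq\tilde q$; this fails only at the double endpoint $(q,r)=(\tilde q,\tilde r)=(2,6)$, where the retarded estimate instead comes directly out of the bilinear form of the Keel--Tao argument. Second, for genuinely mixed pairs $(q,r)\neq(\tilde q,\tilde r)$ the cleanest route to the non-retarded estimate is to compose the homogeneous bound $T:L^2_x\to L^q_tL^r_x$ with its dual $T^*:L^{\tilde q'}_tL^{\tilde r'}_x\to L^2_x$, rather than to feed the dispersive kernel bound through HLS, since the latter only matches exponents on the diagonal $(q,r)=(\tilde q,\tilde r)$. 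Neither point affects the conclusion.
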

	\begin{proposition}\label{nrsupconv}
		Assume $\gamma\in(0,3/2)$, and let $s:=(\gamma-1)_+$ and $\vphi_0\in H^{s}(\R^3)$. Suppose $N\in\N$ and $k\in\N^{\leq N}$. Let $\Psi_{N,t}$ be a solution of \eqref{schroedinger} and $\vphi_t$ be a solution of \eqref{hartree}. Then there is a constant $C=C_{\|\vphi_0\|_{H^s}}$ such that for all $t\geq0$ we have
		\begin{equation}
		\begin{split}
		\tr{\left|\gamma_{N,t}^{(k)}-P^{(k)}_t\right|}\;&\lesssim\;\sqrt{k}\frac{\e^{C(\sqrt{t}+t^6)}}{\sqrt{N}},\\
		a_{N,t}\;&\leq\;\frac{\e^{C(\sqrt{t}+t^6)}}{N}.
		\end{split}			
		\end{equation}
	\end{proposition}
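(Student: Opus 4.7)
The plan is to follow the two-step scheme of Propositions \ref{srsupconv} and \ref{fssupconv}. By Proposition \ref{firstprevious}, the weak Young inequality noted above \eqref{potentialstrichartz}, and the Hilbert--Schmidt bound \eqref{picklschmidt}, the entire statement reduces to establishing
$$I(t):=\int_0^t \|\vphi_\tau\|_{L^p_x}\,\mathrm{d}\tau\lesssim \sqrt{t}+t^6,\qquad p:=\frac{6}{3-2\gamma},$$
with a constant depending only on $\|\vphi_0\|_{H^s}$: Proposition \ref{firstprevious} then gives $a_{N,t}\leq \e^{cI(t)}/N$, and \eqref{picklschmidt} transfers this to the trace norm.

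I will split on the size of $\gamma$. In the regime $\gamma\leq 1$, one has $p\in[2,6]$ and the Schr\"odinger-admissible pair $(2/\gamma,p)$ is available (as $2\gamma/2 + 3(3-2\gamma)/6 = 3/2$); H\"older in time gives $I(t)\leq t^{1-\gamma/2}\|\vphi\|_{L^{2/\gamma}_\tau L^p_x}$. In the regime $\gamma>1$ one has $p>6$, so I embed through Sobolev $W^{\gamma-1,6}(\R^3)\hookrightarrow L^p(\R^3)$ (with $\gamma-1=3/6-3/p$) to get $I(t)\leq t^{1/2}\|\vphi\|_{L^2_\tau W^{\gamma-1,6}_x}$, where now the admissible pair is $(2,6)$. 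In both cases $I(t)$ is controlled by a single Strichartz-type norm of $\vphi$ on $[0,t]$ together with a polynomial $t$-prefactor, and the minimal regularity needed to apply the homogeneous Strichartz estimate of Lemma \ref{nrstrichartz} is precisely $\vphi_0\in H^s$ with $s=(\gamma-1)_+$.

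The remaining task is to bound the relevant Strichartz norm polynomially in $t$. To this end I will run a standard Cazenave-type fixed-point argument for the Hartree equation in the $L^2$-subcritical regime $\gamma<2$. Using the inhomogeneous piece of Lemma \ref{nrstrichartz}, the nonlinearity $F_\tau=(|\cdot|^{-\gamma}\ast|\vphi_\tau|^2)\vphi_\tau$ is controlled in a dual Strichartz space by three factors of $\vphi$ via Hardy--Littlewood--Sobolev (using $|x|^{-\gamma}\in L^{3/\gamma,\infty}$) together with H\"older in space, and when needed Lemma \ref{fraccontract} absorbs the $\gamma-1$ derivatives for the case $\gamma>1$. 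Subcriticality gives a strict gain of a positive power of the interval length $T_0$ in the fixed-point estimate, and mass conservation (Lemma \ref{fracconservation}) together with the polynomial persistence of the $H^s$-norm from Proposition \ref{frlowregwellposed} allow the short-time bound
$$\|\vphi\|_{L^{q_0}_\tau([0,T_0];L^{r_0}_x)}+\|\vphi\|_{L^\infty_\tau([0,T_0];H^s_x)}\lesssim (1+T_0)\|\vphi_0\|_{H^s}$$
to be glued along $[0,t]$ on a uniform interval length $T_0$ depending only on $\|\vphi_0\|_{H^s}$.

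The main obstacle is the explicit polynomial tracking of constants through this iteration: standard LWP arguments produce only a local-in-time bound, so one must check that the subcriticality gain $T_0^{\beta}$ for some $\beta>0$ together with the conserved/persistent norms is strong enough to refresh the fixed-point step on an interval whose length is independent of the iterate. Once this is done, summing $\lesssim t/T_0$ iterates contributes a factor $t^{1/q_0}$, which combined with the H\"older prefactor $t^{1-\gamma/2}$ or $t^{1/2}$ and the $(1+\tau)$-growth of the $H^s$-norm from Proposition \ref{frlowregwellposed} produces the advertised $\sqrt{t}+t^6$ power in the exponent; feeding this into Proposition \ref{firstprevious} and \eqref{picklschmidt} closes the proof.
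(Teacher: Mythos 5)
Your reduction of the statement to the bound $\int_0^t\|\vphi_\tau\|_{L^{6/(3-2\gamma)}_x}\,\mathrm{d}\tau\lesssim\sqrt{t}+t^6$ (via Proposition \ref{firstprevious}, the weak Young inequality, and \eqref{picklschmidt}), the case split at $\gamma=1$, and the initial manipulations — H\"older in time, the Sobolev embedding $W^{\gamma-1,6}\hookrightarrow L^{6/(3-2\gamma)}$ when $\gamma>1$, and the choice of admissible pair $(2/\gamma,6/(3-2\gamma))$ or $(2,6)$ — coincide exactly with the paper's proof. Where you diverge is in closing the Strichartz-norm bound: you propose to run a Cazenave-type local contraction and then glue short intervals, using subcriticality to gain a power of $T_0$ and then iterating.

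That detour is not needed, and it is what the paper avoids by exploiting the precise form of Lemma \ref{fraccontract}. The key observation you are missing is that the third estimate in Lemma \ref{fraccontract} bounds the nonlinearity in the dual Strichartz space \emph{not} by another Strichartz norm of $\vphi$ (which would make the Duhamel estimate self-referential and force a smallness/contraction step), but directly by a cube of a Sobolev norm of $\vphi_\tau$ of the correct strength: for $\gamma\leq1$ one gets $\|(\frac1{|\cdot|^\gamma}*|\vphi_\tau|^2)\vphi_\tau\|_{L^{6/(3+2\gamma)}_x}\lesssim\|\vphi_\tau\|_2^3=1$ by mass conservation, and for $\gamma>1$, taking $s=\gamma-1$ and $p=6/(7-2\gamma)$ in that lemma gives $\|(\frac1{|\cdot|^\gamma}*|\vphi_\tau|^2)\vphi_\tau\|_{W^{\gamma-1,6/(7-2\gamma)}_x}\lesssim\|\vphi_\tau\|_{H^{\gamma-1}}^3$, which is at most $(1+\tau)^3\|\vphi_0\|_{H^{\gamma-1}}^3$ by Proposition \ref{frlowregwellposed}. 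Consequently, a single application of the inhomogeneous Strichartz estimate of Lemma \ref{nrstrichartz} over the whole interval $[0,t]$ — with the pointwise-in-time bound above, H\"older in time on the dual exponent, and the polynomial growth of the $H^s$-norm from Proposition \ref{frlowregwellposed} — already yields
$$\|\vphi\|_{L^{2/\gamma}_\tau([0,t];L^{6/(3-2\gamma)}_x)}\ \text{or}\ \|\vphi\|_{L^2_\tau([0,t];W^{\gamma-1,6}_x)}\ \lesssim\ \|\vphi_0\|_{H^s}+t^{1/\tilde q'}(1+t)^3\|\vphi_0\|_{H^s}^3,$$
which combined with the H\"older prefactor produces $\sqrt{t}+t^6$ in one shot. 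Your contraction-plus-gluing argument would likely also produce a polynomial bound, but it duplicates work already packaged in Lemma \ref{fraccontract} and Proposition \ref{frlowregwellposed}, and the constant-tracking you flag as "the main obstacle" is exactly the extra complication the paper sidesteps: once you notice that the dual Strichartz norm of the nonlinearity decouples onto conserved or persistently-controlled quantities, there is no self-referential estimate and hence nothing to iterate.
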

	\begin{proof}
		As explained above, it suffices to bound the quantity $\|\vphi_\tau\|_{L^1_\tau([0,t],L^{6/(3-2\gamma)}_x)}$.
		\paragraph{Case $\gamma>1$.} Using H\"older's inequality, followed by the Sobolev inequality and the Strichartz estimate in this case, we have
		\begin{align*}
			\|\vphi_\tau\|_{L^1_\tau([0,t],L^{6/(3-2\gamma)}_x)}&\leq \sqrt{t}\|\vphi_\tau\|_{L^2_\tau([0,t],L^{6/(3-2\gamma)}_x)}\\
			&\lesssim \sqrt{t}\|\vphi_\tau\|_{L^2_\tau([0,t],W^{\gamma-1,6}_x)}\\
			&\lesssim\sqrt{t}(\|\vphi_0\|_{H^{\gamma-1}}+\|(\frac{1}{|\cdot|^\gamma}*|\vphi_{\tau}|^2)\vphi_{\tau}\|_{L^{2/\gamma}_\tau W^{\gamma-1,6/(7-2\gamma)}_x})\\
			&\lesssim\sqrt{t}(\|\vphi_0\|_{H^{\gamma-1}}+\|\vphi_{\tau}\|_{L^{2/\gamma}_\tau H^{\gamma-1}_x}^3)\\
			&\leq(\sqrt{t}+t^6)\|\vphi_0\|_{H^{\gamma-1}}^3,
		\end{align*}		
		where in the next-to-last step we applied Lemma \ref{fraccontract}.
		\paragraph{Case $\gamma\leq1$.} Then we apply H\"older's inequality followed by the Strichartz estimate \ref{nrstrichartz} to obtain
		\begin{align*}
		\|\vphi_\tau\|_{L^1_\tau([0,t],L^{6/(3-2\gamma)}_x)}&\leq t^{\frac{2-\gamma}2}(\|\vphi_0\|_2+\|(\frac{1}{|\cdot|^\gamma}*|\vphi_{\tau}|^2)\vphi_{\tau}\|_{L^{2/(2-\gamma)}_\tau L^{6/(3-2\gamma)}_x})\\
		&\lesssim \sqrt{t}+t^6,
		\end{align*}
		where in the last step we used mass conservation together with Lemma \ref{fraccontract}.
	\end{proof}
	With the needed Sobolev exponent $s$ being below $1$, we can state stronger result for convergence in higher Sobolev trace norms than the ones in the previous cases. The following result shows how, with this method, reducing imposed regularity yields slower convergence rates.
	\begin{proposition}\label{nrhighersobolev}
		Assume $\gamma\in(0,3/2)$ and $s\in[(\gamma-1)_+,1]$, and let $\vphi_0\in H^{s}(\R^3)$. Suppose $N\in\N$ and $k\in\N^{\leq N}$. Let $\Psi_{N,t}$ be a solution of \eqref{schroedinger} and $\vphi_t$ be a solution of \eqref{hartree}. Then there is a constant $C=C_{\|\vphi_0\|_{H^s}}$ such that for any $\theta\in[0,s)$ and any $t\geq0$ we have
		$$\tr\left|S_{k,s\theta}^{\frac12} (\gamma_{N,t}^{(k)}-P^{(k)}_t) S_{k,s\theta}^{\frac12}\right|\leq Ck^{\frac{3-\theta}2}\frac{\e^{C(\sqrt{t}+t^6)}}{N^{\frac{s-\theta}{2}}}.$$
	\end{proposition}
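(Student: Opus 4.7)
The strategy mirrors the proofs of Propositions \ref{srsubcrit} and \ref{fssubcrit}: combine Proposition \ref{oldmainthm} (for interpolation up the Sobolev scale) with the $L^2$-level convergence from Proposition \ref{nrsupconv}. Applying Proposition \ref{oldmainthm} with its parameters chosen to match the given $s$ and $\theta$ yields
\[
\tr\left|S_{k,s\theta}^{\frac12}(\gamma_{N,t}^{(k)}-P^{(k)}_t)S_{k,s\theta}^{\frac12}\right| \;\lesssim\; k\,C_{\Psi_{N,t},\vphi_t,\theta,s}\,\Bigl(a_{N,t}^{\min\{1/2;1-\theta\}}+\|\gamma_{N,t}^{(k)}-P^{(k)}_t\|_{HS}^{1-\theta}\Bigr),
\]
and Proposition \ref{nrsupconv} together with \eqref{picklschmidt} controls the right-hand side by the appropriate negative power of $N$, with the $k$-accounting generating the prefactor $k^{(3-\theta)/2}$ exactly as in the proof of Proposition \ref{srsubcrit} and with the temporal growth $e^{C(\sqrt{t}+t^6)}$ inherited from Proposition \ref{nrsupconv}.

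It remains to bound the constant $C_{\Psi_{N,t},\vphi_t,\theta,s}=(\|S_{1,s}^{1/2}\Psi_{N,t}\|_2+\|S^{s/2}\vphi_t\|_2)^{\max\{1,2\theta\}}$. Proposition \ref{frlowregwellposed} gives $\|\vphi_t\|_{H^s}\lesssim(1+t)\|\vphi_0\|_{H^s}$, so only the one-particle $H^s$-norm $\|S_{1,s}^{1/2}\Psi_{N,t}\|_2$ of the bosonic state remains in question. This is the main obstacle: with $\vphi_0\in H^s$ and $s<1$, the standard route of bounding this term via energy conservation, as carried out in Propositions \ref{srsubcrit} and \ref{fssubcrit}, is unavailable.

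I would propagate the required one-particle $H^s$-regularity directly via Grönwall. Setting $F(t):=\scp{\Psi_{N,t}}{(1+S_1)^s\Psi_{N,t}}$ and differentiating in $t$, the kinetic part $T=\sum_{i=1}^N S_i$ commutes with $(1+S_1)^s$ and so, by bosonic symmetry of $\Psi_{N,t}$, only a single commutator term survives:
\[
F'(t) \;=\; -\Im\scp{\Psi_{N,t}}{[(1+S_1)^s,V_{12}]\Psi_{N,t}},\qquad V_{12}=|x_1-x_2|^{-\gamma}.
\]
A fractional Leibniz (Kato--Ponce) estimate applied to this commutator, together with the Hardy--Littlewood--Sobolev inequality (or the weak Young bound used in \eqref{potentialstrichartz}) for the singular kernel, should yield a closed differential inequality $|F'(t)|\lesssim F(t)+C_{\|\vphi_0\|_{H^s}}$ exactly when $s\geq(\gamma-1)_+$, matching the hypothesis of the statement. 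Grönwall then delivers a bound on $F(t)$ that is uniform in $N$ and grows at most exponentially in $t$.

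With the constant $C_{\Psi_{N,t},\vphi_t,\theta,s}$ under control, the estimates assemble just as in the proofs of Propositions \ref{srsubcrit} and \ref{fssubcrit}: the Hilbert--Schmidt contribution dominates and provides the claimed convergence rate, while the $t$-growth is absorbed into the factor $e^{C(\sqrt{t}+t^6)}$. The most delicate step is the fractional commutator estimate; the cutoff $s\geq(\gamma-1)_+$ arises as the precise regularity level at which Kato--Ponce applied to the singular potential $|x_1-x_2|^{-\gamma}$ closes the Grönwall inequality for $F$.
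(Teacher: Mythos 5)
Your reduction steps — invoking Proposition \ref{oldmainthm}, Proposition \ref{nrsupconv} together with \eqref{picklschmidt}, and Proposition \ref{frlowregwellposed} for $\|S^{s/2}\vphi_t\|_2$ — coincide with the paper's. The gap lies entirely in the control of $\|S_{1,s}^{1/2}\Psi_{N,t}\|_2$. The paper's argument here is static and essentially algebraic: from $(1+S_i)^s\leq 1+S_i^s$, the elementary inequality $\sum_i a_i^s\leq N^{1-s}(\sum_i a_i)^s\leq N^{1-s}\sum_i a_i^s$ lifted to commuting operators, the L\"owner--Heinz inequality applied to \eqref{hamsupercrit}, and conservation of $\scp{\Psi_{N,t}}{(H_N+N)^s\Psi_{N,t}}$ under the Schr\"odinger flow, one deduces $\|S_{1,s}^{1/2}\Psi_{N,t}\|_2^2\lesssim N^{1-s}\|\vphi_0\|_{H^s}^2$. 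This bound is \emph{not} uniform in $N$, and the $N^{(1-s)/2}$ growth is exactly what degrades the rate from $N^{-(1-\theta)/2}$ to the claimed $N^{-(s-\theta)/2}$.

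Your Gr\"onwall proposal would, if it closed, produce a uniform-in-$N$ bound for $F(t)=\scp{\Psi_{N,t}}{(1+S_1)^s\Psi_{N,t}}$, and therefore the \emph{stronger} rate $N^{-(1-\theta)/2}$ after plugging into Proposition \ref{oldmainthm}; that mismatch with the stated rate is a strong signal that the differential inequality does not actually close at this regularity. You have not established it: bounding $\Im\scp{V_{12}\Psi_{N,t}}{(1+S_1)^s\Psi_{N,t}}$ by $F(t)+C$ requires commuting $(1+S_1)^{s/2}$ past $|x_1-x_2|^{-\gamma}$, and a Kato--Ponce decomposition produces terms of the form $|x_1-x_2|^{-\gamma-a}D_1^{s-a}\Psi_{N,t}$; controlling these needs mixed Hardy estimates in the relative variable $x_1-x_2$ that go beyond what $F(t)$ and bosonic symmetry supply, and for $\gamma$ near $3/2$ the exponent $\gamma+a$ can exceed the Hardy threshold. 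You assert that Kato--Ponce closes the estimate precisely at $s\geq(\gamma-1)_+$, but this is an unverified claim and, in fact, a misattribution: that cutoff comes from the Strichartz/Sobolev bookkeeping on the Hartree side in Proposition \ref{nrsupconv} (via Lemmas \ref{nrstrichartz} and \ref{fraccontract}), not from any many-body commutator estimate.
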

	\begin{proof}
		In view of Proposition \ref{oldmainthm}, \ref{nrsupconv} and \eqref{picklschmidt}, it is sufficient to provide uniform bounds on $\|S_{1,s}^{1/2}\Psi_{N,t}\|_2 +\|S^{s/2}\vphi_t\|_2$. By Lemma \ref{frlowregwellposed}, we have
		$$\|S^{s/2}\vphi_t\|_2\leq(t+1)\|\vphi_0\|_{H^s}.$$
		Moreover, using that $(1+S_i)^s\leq 1+S_i^s$ in the sense of quadratic forms, we find
		\begin{equation}\label{nrpsibound}
		\begin{split}
		\|S_{1,s}^{1/2}\Psi_{N,t}\|_2^2-1
		&\leq\frac1N\scp{\Psi_{N,t}}{\sum_{i=1}^NS_i^s\Psi_{N,t}}
		\end{split}
		\end{equation}
		To continue, let us recall the following facts. Let $a_1,\ldots,a_N\geq0$. Then we have
		\begin{equation}\label{algfact}
		\sum_{i=1}^Na_i^{s}\,\leq\,N^{1-s}\left(\sum_{i=1}^Na_i\right)^{s}\,\leq\,N^{1-s}\sum_{i=1}^Na_i^{s}.
		\end{equation}
		The second inequality follows, e.g., from the embedding $\ell^1(\{1;\ldots;N\})\hookrightarrow\ell^2(\{1;\ldots;N\})$ together with interpolation. By Plancherel's theorem, we can replace the $a_i$ in this inequality by operators $A_i$ with non-negative symbols $a_i$. Moreover, the L\"owner-Heinz inequality together with \eqref{hamsupercrit} implies
		\begin{equation}\label{hamsupercrit2s}
			\left(\sum_{i=1}^NS_i+N\right)^s\lesssim (H_N+N)^s\lesssim\left(\sum_{i=1}^NS_i+N\right)^s.
		\end{equation}
		Using these two facts on \eqref{nrpsibound} together with symmetry of $\Psi_{N,t}$ w.r.t. to particle permutations, we obtain using first \eqref{algfact}, then \eqref{hamsupercrit2s}, then energy conservation for the Schr\"odinger equation, then \eqref{hamsupercrit2s}, and then \eqref{algfact} again
		\begin{align*}
			\|S_{1,s}^{1/2}\Psi_{N,t}\|_2^2&=\frac1N\scp{\Psi_{N,t}}{\sum_{i=1}^N(S_i+1)^s\Psi_{N,t}}\\
			&\leq\frac1{N^s}\scp{\Psi_{N,t}}{\left(\sum_{i=1}^NS_i+N\right)^s\Psi_{N,t}}\\
			&\lesssim\frac1{N^s}\scp{\Psi_{N,t}}{(H_N+N)^s\Psi_{N,t}}\\
			&=\frac1{N^s}\scp{\Psi_{N,0}}{(H_N+N)^s\Psi_{N,0}}\\
			&\lesssim\frac1{N^s}\scp{\Psi_{N,0}}{\left(\sum_{i=1}^NS_i+N\right)^s\Psi_{N,0}}\\
			&\leq \frac1{N^s}\scp{\Psi_{N,0}}{\sum_{i=1}^N(S_i+1)^s\Psi_{N,0}}\\
			&=N^{1-s}\|\vphi_0\|_{H^s}^2,
		\end{align*}
		where in the last step we also used $\Psi_{N,0}=\vphi_0^{\otimes N}$. Together with the initial comments, this finishes the proof.
	\end{proof}
We can improve the rates in the last theorem recalling the result
$$\|\gamma_{N,t}^{(1)}-P^{(1)}_t\|_{HS}\leq\frac{C\e^{Ct^{3/2}}}{N}$$
given in \cite{CLL}, where $C=C_{\|\vphi_0\|_{H^1}}$ and $\vphi_0\in H^1(\R^3)$. Combining this result with Proposition \ref{oldmainthm}, Proposition \ref{nrsupconv}, Lemma \ref{fracpersreg}, and the proof of Proposition \ref{nrhighersobolev} in the case $s=1$, we get the following the result.
\begin{corollary}
		Assume $\vphi_0\in H^{1}(\R^3)$ and $N\in\N$. Let $\Psi_{N,t}$ be a solution of \eqref{schroedinger} and $\vphi_t$ be a solution of \eqref{hartree}. Then there is a constant $C=C_{\|\vphi_0\|_{H^1}}$ such that for any $\theta\in[0,1)$ and any $t\geq0$ we have
		$$\tr\left|S_{1,\theta}^{\frac12} (\gamma_{N,t}^{(1)}-P^{(1)}_t) S_{1,\theta}^{\frac12}\right|\leq C\frac{\e^{C(\sqrt{t}+t^6)}}{N^{\min\{1/2;1-\theta\}}}.$$
\end{corollary}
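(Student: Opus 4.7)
The plan is to assemble the statement from four ingredients already available in the paper: the interpolation bound of Proposition~\ref{oldmainthm}, the $a_{N,t}$-estimate of Proposition~\ref{nrsupconv}, the sharp $1/N$ Hilbert--Schmidt estimate from \cite{CLL}, and the $H^1$-propagation argument used in the proof of Proposition~\ref{nrhighersobolev} with $s=1$. The motto, as suggested by the paragraph preceding the corollary, is that in the $k=1$ case one can replace the generic Seiringer--Griesemer HS-rate $\sqrt{k a_{N,t}}$ in \eqref{picklschmidt} by the sharper $1/N$ rate of \cite{CLL}, and then rerun the interpolation of Proposition~\ref{oldmainthm}.

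First I would apply Proposition~\ref{oldmainthm} with $k=1$, $s=1$ and $\theta\in[0,1)$, giving
\begin{equation*}
\tr\left|S_{1,\theta}^{\frac12}\bigl(\gamma_{N,t}^{(1)}-P_t^{(1)}\bigr)S_{1,\theta}^{\frac12}\right|
\;\leq\; C_{\Psi_{N,t},\vphi_t,\theta,1}\Bigl(a_{N,t}^{\min\{1/2;\,1-\theta\}}+\|\gamma_{N,t}^{(1)}-P_t^{(1)}\|_{HS}^{1-\theta}\Bigr).
\end{equation*}
Next I bound the prefactor $C_{\Psi_{N,t},\vphi_t,\theta,1}=2(\|S_{1,1}^{1/2}\Psi_{N,t}\|_2+\|S^{1/2}\vphi_t\|_2)^{\max\{1;2\theta\}}$ uniformly in $(N,t)$. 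For $\vphi_t$ this follows directly from Lemma~\ref{fracpersreg} applied with $s=1$. For $\Psi_{N,t}$ I would repeat the $s=1$ case of the computation in the proof of Proposition~\ref{nrhighersobolev}: by bosonic symmetry, the operator inequality \eqref{hamsupercrit}, and energy conservation for \eqref{schroedinger},
\begin{equation*}
\|S_{1,1}^{\frac12}\Psi_{N,t}\|_2^2\;=\;1+\tfrac1N\scp{\Psi_{N,t}}{\textstyle\sum_i S_i\,\Psi_{N,t}}\;\lesssim\;\tfrac1N\scp{\Psi_{N,0}}{(H_N+N)\Psi_{N,0}}\;\lesssim\;\|\vphi_0\|_{H^1}^2,
\end{equation*}
where the final inequality uses $\Psi_{N,0}=\vphi_0^{\otimes N}$ and a symmetric product estimate for the two-body potential.

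For the first term in parentheses I invoke Proposition~\ref{nrsupconv}: since $\vphi_0\in H^1\subseteq H^{(\gamma-1)_+}$ for every $\gamma\in(0,3/2)$, it gives $a_{N,t}\leq \e^{C(\sqrt t+t^6)}/N$ with $C=C_{\|\vphi_0\|_{H^1}}$, and hence a contribution of the asserted size $C\e^{C(\sqrt t+t^6)}/N^{\min\{1/2;\,1-\theta\}}$. For the second term I use the cited estimate of \cite{CLL}, which in our setting reads $\|\gamma_{N,t}^{(1)}-P_t^{(1)}\|_{HS}\leq C\e^{Ct^{3/2}}/N$ with $C=C_{\|\vphi_0\|_{H^1}}$; raising to the $(1-\theta)$-power produces a contribution bounded by $C\e^{C(\sqrt t+t^6)}/N^{1-\theta}$. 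Since $1-\theta\geq \min\{1/2;\,1-\theta\}$ and $t^{3/2}\lesssim \sqrt t+t^6$, this second contribution is always dominated by the first, so the sum is bounded by $C\e^{C(\sqrt t+t^6)}/N^{\min\{1/2;\,1-\theta\}}$, which is the claim.

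No step here is a genuine obstacle: Proposition~\ref{oldmainthm} does the interpolation, Proposition~\ref{nrsupconv} supplies the Pickl rate, \cite{CLL} supplies the improved HS rate, and the $N$-uniform $H^1$-bound on $\Psi_{N,t}$ is a direct consequence of $H_N\gtrsim \sum_i S_i$ combined with Schr\"odinger energy conservation. The only point requiring a little care is checking that the two exponents and the two time-growths combine to $(\min\{1/2;\,1-\theta\},\;\sqrt t+t^6)$ as in the statement, which is the elementary comparison carried out above.
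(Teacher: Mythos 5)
Your argument reproduces exactly the paper's intended proof: apply Proposition~\ref{oldmainthm} with $k=1$, $s=1$, bound $a_{N,t}$ by Proposition~\ref{nrsupconv}, replace the Seiringer--Griesemer HS rate with the $1/N$ rate of \cite{CLL}, and control the prefactor via Lemma~\ref{fracpersreg} and the $s=1$ kinetic-energy computation from the proof of Proposition~\ref{nrhighersobolev}. The bookkeeping of exponents and time growths ($1-\theta\geq\min\{1/2;1-\theta\}$ and $t^{3/2}\lesssim\sqrt t+t^6$) is carried out correctly, so the proposal matches the paper's proof.
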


\appendix

\section{Well-posedness results\label{fracpersproof}}
\subsection{Energy conservation and self-adjointness of $H_N$}
Let us first prove Lemma \ref{fracconservation}. Define
\begin{align*}
J_\gamma(u)\;&:=\;\frac{\mu\lambda}{|\cdot|^\gamma}*|u|^2,\\
J_\gamma^{(\alpha)}(u)\;&:=\;\frac{\mu\lambda}{|\cdot|^\gamma+\alpha}*|u|^2.
\end{align*}

\begin{proof}
	Conservation of mass follows from differentiating the (real) mass and noticing that the r.h.s. is pure imaginary. For the conservation of energy, we work with the interaction picture, as noticed in \cite{AHH}: Starting with the identity
	$$\partial_t(\e^{\i (-\Delta)^\sigma t}\vphi_t)=-\i\e^{\i (-\Delta)^\sigma t}J_\gamma(\vphi_t)\vphi_t,$$
	we employ the fact that $\e^{\i(-\Delta)^\sigma t}$ is an isometric embedding of any homogeneous Sobolev space to obtain
	\begin{align*}
	\partial_t T^{(\sigma)}[\vphi_t]&=\partial_tT^{(\sigma)}[\e^{\i(-\Delta)^\sigma t}\vphi_t]\\&=\mathrm{Re}\scp{(-\Delta)^{\frac\sigma2}\e^{\i(-\Delta)^\sigma t}\vphi_t}{-\i(-\Delta)^{\frac\sigma2}\e^{\i(-\Delta)^\sigma t}J_\gamma(\vphi_t)\vphi_t}\\
	&=\mathrm{Re}\scp{(-\Delta)^{\frac\sigma2}\vphi_t}{-\i(-\Delta)^{\frac\sigma2}J_\gamma(\vphi_t)\vphi_t}.
	\end{align*}
	On the other hand, we have
	\begin{align*}
	\partial_t V^{(\gamma)}[\vphi_t]&=\mathrm{Re}\scp{\partial_t \vphi_t}{J_\gamma(\vphi_t)\vphi_t}_{(H^{-\frac{\sigma}2};H^{\frac{\sigma}2})}\\&=\mathrm{Re}\scp{-\i(-\Delta)^{\sigma}\vphi_t-\i J_\gamma(\vphi_t)\vphi_t}{J_\gamma(\vphi_t)\vphi_t}_{(H^{-\frac{\sigma}2};H^{\frac{\sigma}2})}\\
	&=\mathrm{Re}\scp{(-\Delta)^{\frac{\sigma}2}\vphi_t}{\i(-\Delta)^{\frac{\sigma}2}J_\gamma(\vphi_t)\vphi_t}.
	\end{align*}
	Adding the derivative of both the kinetic and the potential energy, we have proved $\partial_t\E=0$. The same arguments hold if we replace $J_\gamma$ by $J_\gamma^{(\alpha)}$.
\end{proof}

\begin{proof}[Proof of Lemma \ref{hnsa}]
	Let us recall some facts from interpolation. If we define the weight $\omega:=|\cdot|^{-2}$, we find the interpolation space
	$$[L^2,L^2_\omega]_{\theta}=L^2_{\omega^\theta},\quad0<\theta<1,$$
	where $L^2_\omega=L^2_\omega(\R^3)$ denotes the weighted $L^2$-space with weight $\omega$. Next, we have
	$$[L^2,\dot{H}^1]_\theta=\dot{H}^\theta,\quad0<\theta<1,$$
	where $\dot{H}^\theta=\dot{H}^\theta(\R^3)$ denotes the homogeneous Sobolev space. Hardy's inequality yields an embedding $\dot{H}^1\hookrightarrow L^2_\omega$, which in turn together with the trivial embedding $L^2\hookrightarrow L^2$ yields by interpolation the embedding
	$$\dot{H}^\theta\hookrightarrow L^2_{\omega^\theta}.$$
	In the case $\theta=\sigma=\gamma/2$, this corresponds to an estimate $\frac1{|\cdot|^\gamma}\lesssim(-\Delta)^{\gamma/2}$ in the sense of quadratic forms. In the mass subcritical regime $\sigma>\gamma/2$, we employ the trivial embedding $H^{\gamma/2}\hookrightarrow \dot{H}^{\gamma/2}$ together with the previous embedding and the interpolation inequality 
	$$\|u\|_{H^{\gamma/2}}\;\lesssim\;\|u\|_{H^{\sigma}}^{\frac{\gamma}{2\sigma}}\|u\|_2^{1-\frac{\gamma}{2\sigma}}$$
	to find
	$$\||\cdot|^{-\gamma/2}u\|_2\lesssim \|u\|_{H^{\sigma}}^{\frac{\gamma}{2\sigma}}\lesssim \varepsilon\|u\|_{H^\sigma}+C_\varepsilon$$
	if $\|u\|_2=1$ is assumed and $\varepsilon>0$ is chosen arbitrarily small. 
\end{proof}

\subsection{High Sobolev regularity}
Since the well-posedness for $\sigma=1$ is well-known, we will, as above, closely follow the analysis in \cite{Le}. For that, recall the following generalized Leibniz rule, see \cite{Le}, which itself is an easy consequence of the Mihilin multiplier theorem. For some $\mu\geq0$, we denote $\mathcal{D}:=\sqrt{-\Delta}$.
\begin{lemma}[Leibniz-rule]\label{leibniz}
	Suppose that $1<p<\infty$, $s\geq0$, $a,b\geq0$, and for $i=1,2$ let $1/p_i+1/q_i=1$ with $1<p_i,q_i\leq\infty$. Then
	$$\|\mathcal{D}^s(fg)\|_p\lesssim \|\mathcal{D}^{s+a}f\|_{p_1}\|\mathcal{D}^{-a}g\|_{q_1}+\|\mathcal{D}^{-b}f\|_{p_2}\|\mathcal{D}^{s+b}g\|_{q_2},$$
	where the involved constant depends on all parameters but not on $f$ and $g$.
\end{lemma}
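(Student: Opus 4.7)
The plan is to perform a Bony paraproduct decomposition and use Mihlin's multiplier theorem to redistribute the fractional derivative $\mathcal{D}^s$ between the two factors. Let $\Delta_k$, $k\in\Z$, denote standard Littlewood--Paley annular projectors and set $S_k=\sum_{j<k}\Delta_j$. I would write
\[
fg \;=\; \Pi_{\mathrm{HL}}(f,g)+\Pi_{\mathrm{LH}}(f,g)+\Pi_{\mathrm{HH}}(f,g),
\]
with $\Pi_{\mathrm{HL}}:=\sum_k \Delta_k f\cdot S_{k-3}g$, $\Pi_{\mathrm{LH}}:=\sum_k S_{k-3}f\cdot\Delta_k g$, and $\Pi_{\mathrm{HH}}:=\sum_{|j-k|\le 2}\Delta_j f\cdot\Delta_k g$. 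The governing heuristic is that each summand of $\Pi_{\mathrm{LH}}$ and $\Pi_{\mathrm{HL}}$ has Fourier support in the dyadic annulus $|\xi|\sim 2^k$, so on such a summand $\mathcal{D}^s$ is equivalent to multiplication by $2^{ks}$ up to a uniformly $L^p$-bounded multiplier by Mihlin's theorem.

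For $\Pi_{\mathrm{LH}}$, I would split $2^{ks}=2^{-ka}\cdot 2^{k(s+a)}$, interpreting $2^{k(s+a)}\Delta_k g$ as the $k$-th Littlewood--Paley piece of $\mathcal{D}^{s+a}g$ and $2^{-ka}S_{k-3}f$ as a truncated dyadic piece of $\mathcal{D}^{-a}f$. H\"older's inequality with the pair $(p_1,q_1)$ together with the Littlewood--Paley square-function characterization of $L^p$-Sobolev norms and the Fefferman--Stein vector-valued maximal inequality would then produce the second term on the right-hand side. By the symmetric argument $\Pi_{\mathrm{HL}}$ yields the first term with the pair $(p_2,q_2)$ and power $b$. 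The diagonal piece $\Pi_{\mathrm{HH}}$ has spectrum only in a ball of radius $\lesssim 2^k$ rather than in an annulus, so I would insert an outer projector $\Delta_\ell$ with $\ell\le k+O(1)$, apply the same argument at each scale $\ell$, and sum the resulting geometric series in $k-\ell$, distributing via whichever of $a$ or $b$ is strictly positive; the corner case $a=b=0$ reduces to a direct consequence of H\"older's inequality and Calder\'on--Zygmund theory.

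The main obstacle I anticipate is making the absorption step rigorous, since $\mathcal{D}^{-a}$ is singular at the origin while $S_{k-3}f$ still carries all low-frequency content of $f$. My plan is to use that $\mathcal{D}^{-a}\Delta_j$ is a Mihlin multiplier of operator norm $\lesssim 2^{-ja}$, combined with the pointwise bound
\[
2^{-ka}|S_{k-3}f| \;\lesssim\; M\Big(\sum_{j\le k-3}2^{2(k-j)a}|\Delta_j\mathcal{D}^{-a}f|^2\Big)^{1/2},
\]
where $M$ denotes the Hardy--Littlewood maximal function, so that the vector-valued Fefferman--Stein inequality reduces the estimate for $\Pi_{\mathrm{LH}}$ to the bound on $\|\mathcal{D}^{-a}f\|_{q_1}$. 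Once this technical lemma is in place, the remaining computations amount to routine bookkeeping of dyadic sums. By density I then extend from Schwartz functions to the setting in which the right-hand side is a priori finite.
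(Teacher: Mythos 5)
The paper does not actually prove this lemma: it is imported verbatim from Lenzmann [Le] with a one-line remark that it ``is an easy consequence of the Mihlin multiplier theorem,'' so there is no in-paper proof to compare against. Your Bony paraproduct argument is exactly the standard route one would take (and is essentially what underlies the reference), so on the level of strategy you are on the right track.

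A few items would need attention before the write-up is correct. First, the inner indices in your sketch are swapped relative to the statement: $\Pi_{\mathrm{LH}}$ (low on $f$, high on $g$) should target the term $\|\mathcal{D}^{-b}f\|_{p_2}\|\mathcal{D}^{s+b}g\|_{q_2}$, and $\Pi_{\mathrm{HL}}$ the term $\|\mathcal{D}^{s+a}f\|_{p_1}\|\mathcal{D}^{-a}g\|_{q_1}$; this is a cosmetic relabeling, but as written the paraproduct pieces are matched to the wrong Hölder pairs. Second, and more substantively, the displayed pointwise bound
\[
2^{-ka}|S_{k-3}f| \;\lesssim\; M\Bigl(\sum_{j\le k-3}2^{2(k-j)a}|\Delta_j\mathcal{D}^{-a}f|^2\Bigr)^{1/2}
\]
does not hold as stated: since $|\Delta_j \mathcal{D}^{-a} f|\sim 2^{-ja}|\Delta_j f|$, the factor $2^{2(k-j)a}$ makes the right-hand side grow like $2^{ka}$, the wrong direction. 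The clean and correct statement one actually needs here is the uniform bound $\sup_k 2^{-ka}|S_{k-3}f(x)| \lesssim M(\mathcal{D}^{-a}f)(x)$ for $a\ge 0$, which follows from $|\Delta_j f|\lesssim 2^{ja}M(\Delta_j'\mathcal{D}^{-a}f)$ and the convergent geometric sum $\sum_{j<k-3}2^{-(k-j)a}\lesssim 1$ (or simply $\lesssim k$ when $a=0$, which is harmless after taking the sup in a different way). With that replacement, the reduction to $\|\mathcal{D}^{-a}f\|$ goes through as you intend. Third, the lemma allows $q_i=\infty$; the square-function characterization of Sobolev norms you invoke does not hold in $L^\infty$, so the $q_i=\infty$ endpoint needs to be handled separately by uniform dyadic bounds rather than Littlewood--Paley theory.

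Finally, note that the Hölder condition $1/p_i+1/q_i=1$ in the stated lemma is a typo for $1/p_i+1/q_i=1/p$ (this is how the lemma is actually applied, e.g.\ in the proof of Lemma \ref{fraccontract}, where $1/p+1/q=1/2$ is used with outer exponent $2$). Your proposal implicitly and correctly works with $1/p_1+1/q_1=1/p$, but it is worth flagging the discrepancy.
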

The following lemma ensures that both $u\mapsto J_\gamma(u)u$ and $u\mapsto J_\gamma^{(\alpha)}(u)u$ locally Lipschitz.
\begin{lemma}\label{fraccontract}
	Let $s\geq0$. If $u,v\in H^{\max\{\gamma/2;s\}}(\R^3)$, we have 
	\begin{equation*}
	\begin{split}
	\|J_\gamma(u)u-J_\gamma(v)v\|_{H^s}&\lesssim(\|u\|_{H^s}^2+\|v\|_{H^s}^2)\|u-v\|_{H^s},\\
	\|J_\gamma(u)u\|_{H^s}&\lesssim \|u\|_{6/(3-\gamma)}^2\|u\|_{H^s}\lesssim \|u\|_{H^{\gamma/2}}^2\|u\|_{H^s}.
	\end{split}
	\end{equation*}
	Moreover, if $s\leq\gamma/2$, $p\in[\frac6{2\gamma-4s+3},2]$, and $w\in H^{\frac{\gamma+s}3+\frac12-\frac1p}(\R^3)$, we have 
	$$\|J_\gamma(w)w\|_{W^{s,p}}\lesssim \|w\|_{H^{\frac{\gamma+s}3+\frac12-\frac1p}}^3.$$
	The same results hold true if we replace $J_\gamma$ by $J_\gamma^{(\alpha)}$.
\end{lemma}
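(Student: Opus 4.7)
\medskip

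\noindent\textbf{Proof proposal.} My plan is to reduce all three estimates to an application of the generalized Leibniz rule of Lemma \ref{leibniz}, combined with the Hardy-Littlewood-Sobolev (HLS) inequality (or its weak-type variant, using $|\cdot|^{-\gamma}\in L^{3/\gamma}_w(\R^3)$) and standard Sobolev embeddings on $\R^3$. The regularization $|\cdot|^{-\gamma}+\alpha$ satisfies the same pointwise and weak-$L^{3/\gamma}$ bounds uniformly in $\alpha$, so the statements for $J_\gamma^{(\alpha)}$ follow by exactly the same argument.

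\medskip

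\noindent For the first, Lipschitz-type estimate, I first write
\begin{equation*}
J_\gamma(u)u-J_\gamma(v)v \;=\; J_\gamma(u)(u-v) \;+\; \tfrac{\mu\lambda}{|\cdot|^\gamma}*\!\bigl((u-v)\overline{u}+v\overline{(u-v)}\bigr)\,v,
\end{equation*}
so that every term is trilinear of the form $T(f,g,h):=\bigl(|\cdot|^{-\gamma}*(f\bar g)\bigr)h$ with $f,g,h\in\{u,v,u-v\}$. It therefore suffices to prove that $\|T(f,g,h)\|_{H^s}\lesssim \|f\|_{H^{\max\{\gamma/2;s\}}}\|g\|_{H^{\max\{\gamma/2;s\}}}\|h\|_{H^{\max\{\gamma/2;s\}}}$, which by symmetry is the same as the second estimate with three possibly different factors. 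Writing $F:=|\cdot|^{-\gamma}*(f\bar g)$, Lemma \ref{leibniz} with $a=b=0$ yields
\begin{equation*}
\|\mathcal D^s(Fh)\|_2 \;\lesssim\; \|\mathcal D^s F\|_{p_1}\|h\|_{q_1} + \|F\|_{p_2}\|\mathcal D^s h\|_{q_2}.
\end{equation*}
For the second term I take $p_2=\infty$, $q_2=2$ and use weak-HLS / weak Young to bound $\|F\|_\infty\lesssim \|fg\|_{3/(3-\gamma)}\lesssim\|f\|_{6/(3-\gamma)}\|g\|_{6/(3-\gamma)}$, while the Sobolev embedding $H^{\gamma/2}(\R^3)\hookrightarrow L^{6/(3-\gamma)}(\R^3)$ converts this to the $H^{\gamma/2}$-norms. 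For the first term I commute $\mathcal D^s$ through the convolution, apply Leibniz once more to $\mathcal D^s(f\bar g)$, then HLS, and match exponents so that every factor can be absorbed into either $H^s$ or $H^{\gamma/2}$ via Sobolev embedding.

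\medskip

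\noindent For the $W^{s,p}$-bound in the range $s\leq\gamma/2$ and $p\in[6/(2\gamma-4s+3),2]$, I follow the same skeleton but substitute $L^2$ by $L^p$ on the outside. The admissible range of $p$ is precisely the range in which one can choose the Leibniz/HLS/Sobolev exponents so that each of the three factors sits in $L^{6/(3-2(\frac{\gamma+s}{3}+\frac12-\frac1p))}$, which by Sobolev embedding equals $H^{(\gamma+s)/3+1/2-1/p}(\R^3)$; the strict condition $s\leq\gamma/2$ is what guarantees that the intermediate Sobolev exponent is nonnegative and below $\gamma/2$, so that no derivatives land on the singular kernel $|\cdot|^{-\gamma}$.

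\medskip

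\noindent The main obstacle is purely bookkeeping: several natural choices of exponents force endpoint cases ($p_2=\infty$ or the boundary of HLS), which strictly speaking are not covered by the scalar HLS inequality. The clean way around this, which I would use throughout, is to exploit $|\cdot|^{-\gamma}\in L^{3/\gamma,\infty}$ together with real-interpolation / O'Neil's weak-type convolution inequality so that one of the Lebesgue factors is replaced by a Lorentz factor, and then redistribute the $\varepsilon$-worth of integrability with Hölder in Lorentz spaces; since $\gamma<3/2$ this is always possible within the range of Sobolev-admissible exponents. Once this bookkeeping is in place, all three estimates follow at once, and the extension to $J_\gamma^{(\alpha)}$ is immediate from the pointwise bound $(|\cdot|^\gamma+\alpha)^{-1}\leq|\cdot|^{-\gamma}$.
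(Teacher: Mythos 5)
Your proposal follows essentially the same route as the paper's own argument: pass $\mathcal D^s$ through the product using Lemma \ref{leibniz}, convert the singular kernel into a Riesz-type fractional derivative/weak-$L^{3/\gamma}$ factor, and close with Sobolev embeddings. Three remarks are worth recording.

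First, the trilinear reduction you use for the Lipschitz estimate --- writing $J_\gamma(u)u-J_\gamma(v)v$ as a sum of terms of the shape $T(f,g,h)=(|\cdot|^{-\gamma}*(f\bar g))h$ and then proving the symmetric bound once --- is more explicit than the paper, which merely remarks that the Lipschitz bound ``can be shown using similar estimates'' and only writes out the second and third inequalities. Your decomposition is correct and a clean way to present that step.

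Second, the bookkeeping you hand-wave (``match exponents so that every factor can be absorbed into $H^s$ or $H^{\gamma/2}$'') is not quite automatic: in the paper the exponents in the inner Leibniz application are chosen differently in the two regimes $s<3/2$ and $s\geq 3/2$, because in the first regime one cannot place the whole derivative $\mathcal D^s$ on a single factor in $L^2$ while keeping the other in $L^\infty$ via Sobolev. If you want to carry out the ``absorb into $H^s$ or $H^{\gamma/2}$'' program you will in fact need that same case split, so this is a real (if mild) omission in your outline rather than pure notation.

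Third, your flag about the endpoint --- $\|J_\gamma(u)\|_\infty$ via weak Young requires $|u|^2$ to sit in the Lorentz space $L^{3/(3-\gamma),1}$ rather than merely in $L^{3/(3-\gamma)}$, and this is supplied by the Lorentz refinement $H^{\gamma/2}\hookrightarrow L^{6/(3-\gamma),2}$ plus H\"older in Lorentz spaces --- is a genuine subtlety the paper glosses over. Your O'Neil/Lorentz fix is exactly the right way to make the endpoint case rigorous, so on this point your proposal is actually more careful than the published proof.

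Finally, your treatment of $J_\gamma^{(\alpha)}$ via the pointwise comparison $(|\cdot|^\gamma+\alpha)^{-1}\leq|\cdot|^{-\gamma}$ is the same observation the paper makes after commuting $\mathcal D^s$ through the convolution, so that part matches.
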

\begin{remark}
	The second statement equally holds if we replace the Sobolev spaces by homogeneous Sobolev spaces.
\end{remark}
\begin{proof}
	We will only prove the second and the third inequality in the statement. As presented above for the magnetic case, one can show using similar estimates that $u\mapsto J(u)u$ indeed is a locally Lipschitz map from $H^s$ into itself. In addition, we will show the estimates on the homogeneous part of the respective Sobolev space, since the $L^2$-part follows by formally setting $s=0$. By Lemma \ref{leibniz}, we find
	\begin{align*}
	\|\D^s(J_\gamma(u)u)\|_2&\lesssim \|\D^sJ_\gamma(u)\|_p\|u\|_q+\|\frac{\mu\lambda}{|\cdot|^\gamma}*|u|^2\|_\infty\|\D^su\|_2\\
	&\lesssim \|\D^sJ_\gamma(u)\|_p\|u\|_q+\|u\|_{6/(3-\gamma)}^2\|\D^su\|_2\\
	&\lesssim \|\D^sJ_\gamma(u)\|_p\|u\|_q+\|u\|_{H^{\gamma/2}}^2\|\D^su\|_2,
	\end{align*}
	where also applied Young's inequality followed by the Sobolev embedding. $1<p,q\leq\infty$ satisfy $1/p+1/q=1/2$. Subsequently, we use the identity
	$$\D^{r-3}f\;=\;\frac{c_r}{|\cdot|^r}*f$$
	for some $c_r\in\R$. Also note, that $|\cdot|^{-\alpha}\in L^{3/(3-\alpha)}_w(\R^3)$.
	\paragraph{Case $s<3/2$:} Choose $p=3/s$ and $q=6/(3-2s)$.  Using the Leibniz-rule \ref{leibniz} followed by the Sobolev embedding and the weak Young's inequality, we find
	\begin{align*}
	\|\D^sJ_\gamma(u)\|_{3/s}&\lesssim\|\D^{s+\gamma-3}|u|^2\|_{3/s}\\
	&\lesssim\|\D^{s+\gamma/2-3/2}u\|_{3/s}\|\D^{\gamma/2-3/2}u\|_\infty\\
	&\lesssim\|\D^{\gamma/2}u\|_2\|u\|_{6/(3-\gamma)}\\
	&\lesssim \|u\|_{H^{\gamma/2}}^2,
	\end{align*}
	where in the last step we applied Sobolev's inequality again. Using the Sobolev embedding $\|u\|_{6/(3-2s)}\lesssim \|u\|_{H^s}$ finishes this case.
	\paragraph{Case $s\geq3/2$:} Choose $q=6/(3-\gamma)$ and $p=6/\gamma$. Using the Sobolev embedding followed by Lemma \ref{leibniz} and the weak Young's inequality, we have
	\begin{equation*}
	\begin{split}
	\|\D^sJ_\gamma(u)\|_{6/\gamma}&\lesssim\|\D^{s+\gamma/2-3/2}|u|^2\|_2\\
	&\lesssim \|\D^su\|_2\|\D^{\gamma/2-3/2}u\|_\infty\\
	&\lesssim \|u\|_{H^s}\|u\|_{6/(3-\gamma)}\\
	&\lesssim \|u\|_{H^s}\|u\|_{H^{\gamma/2}},
	\end{split}			
	\end{equation*}
	where in the last step we used Sobolev's inequality again. One further application of Sobolev's inequality yields
	$$\|\D^s(J_\gamma(u)u)\|_2\lesssim \|u\|_{H^{\gamma/2}}^2\|u\|_{H^s}.$$
	\paragraph{Case $s\leq\gamma/2$, $p\in[\frac6{2\gamma-4s+3},2]$.} Note that we have 
	$$\frac{\gamma+s}{3}+\frac12-\frac1p\geq0.$$
	Rather than the above estimate, we apply the Leibniz rule followed by the weak Young's inequality and Sobolev's inequality to obtain
	\begin{align*}
	\|\D^s(J_\gamma(w)w)\|_p&\lesssim\|\D^{s+\gamma-3}|w|^2\|_{\frac{9p}{6+(\gamma+s)p-3p}}\|w\|_{\frac{9p}{3-(\gamma+s)p+3p}}+\|\D^{\frac{2(\gamma+s)}3-\frac72+\frac1p}|w|^2\|_{\frac{2p}{2-p}}\|\D^{\frac{\gamma+s}3+\frac12-\frac1p}w\|_2\\
	&\lesssim\|w\|_{\frac{9p}{3-(\gamma+s)p+3p}}^2\|\D^{\frac{\gamma+s}3+\frac12-\frac1p}w\|_2\\
	&\lesssim\|\D^{\frac{\gamma+s}3+\frac12-\frac1p}w\|_2^3,
	\end{align*}
	where in the last step we applied the Sobolev embedding again.
	\par To obtain the results for $J_\gamma^{(\alpha)}$, observe that $$\|\cD^s J_\gamma^{(\alpha)}(u)\|_p\lesssim \|\frac1{|\cdot|^\gamma}|\cD^s|u|^2\|_p$$
	for any $s\geq0$, $p\geq1$. With that, we can repeat all above arguments.
\end{proof}
With this result, let us prove the well-posedness of \eqref{hartree} first as follows: The previous lemma shows that the non-linearity $u\mapsto J(u)u$ is locally Lipschitz from $H^s$ into itself and we can run a fixed point argument. \par Next, let us prove Lemma \ref{fracpersreg}. For that, fix a time $T>0$ such that
$$\nu:=\sup_{t\in[0,T]}\|\vphi_t\|_{H^{\gamma/2}}<\infty.$$
Since $(-\Delta)^\sigma$ is self-adjoint, it induces a $C_0$-group $(\e^{-\i (-\Delta)^\sigma t})_{t\in\R}$ of isometries. Thus, using Duhamel's formula
$$\vphi_t\;=\;\e^{-\i (-\Delta)^\sigma t}\vphi_0-\i\int_0^t \e^{-\i (-\Delta)^\sigma(t-\tau)}(\frac{\mu\lambda}{|\cdot|^\gamma}*|\vphi_\tau|^2)\vphi_\tau \mathrm{d}\tau$$ 
together with the last lemma, we find
$$\|\vphi_t\|_{H^s}\lesssim \|\vphi_0\|_{H^s}+\nu^2\int_0^t\|\vphi_\tau\|_{H^s}\mathrm{d}\tau.$$
Gronwall's inequality then yields the first part of Lemma \ref{fracpersreg}. If $\mu=1$ or $\lambda<\lambda_{H,c}$, and $s=\sigma$, we have due to Lemmata \ref{fracconservation} and \ref{energypos} that
$$\|\vphi_t\|_{H^\sigma}\lesssim\|\vphi_0\|_{H^\sigma}.$$
The remaining cases $s\leq\sigma$ follow from interpolation and this last estimate and mass conservation. This finishes the proof of Lemma \ref{fracpersreg}. 
\par In order to show global well-posedness in the case $\mu=1$ or $\lambda<\lambda_{H,c}$, we employ Lemma \ref{fracpersreg} to obtain
\begin{equation*} 
\|\vphi_t\|_{H^s}\leq \e^{c\|\vphi_0\|_{H^\sigma}^{\frac{\gamma}{\sigma}}t} \|\vphi_0\|_{H^s}
\end{equation*}
for some universal constant $c$ if $s\geq\sigma$. Thus, by Picard iteration, we obtain that \eqref{hartree} is globally well-posed. For the finite-time blow-up when $\mu=-1$ and $\lambda>\lambda_{H,c}$, see, e.g., \cite{Zh}
	
	\subsection{Low Sobolev regularity}
	
	Since the arguments run similarly to the ones in the previous subsection, we will only mention the main steps here. We start with Duhamel's formula which reads
	$$\vphi_t\;=\;\e^{-\i(-\Delta)t}\vphi_0-\i\int_0^t\e^{-\i(-\Delta)(t-\tau)}(\frac{\mu\lambda}{|\cdot|^\gamma}*|\vphi_\tau|^2)\vphi_\tau\mathrm{d}\tau.$$
	Fix some time $T>0$. Similarly to the above, we start by employing that $\e^{-\i(-\Delta)t}$ is an $L^2$-isometry together with Lemma \ref{fraccontract} to obtain
	\begin{equation}\label{nrslg1}
	\begin{split}
	\|\vphi_t\|_{H^s}&\lesssim \|\vphi_0\|_{H^s}+\int_0^t\|\vphi_\tau\|_{6/(3-\gamma)}^2\|\vphi_\tau\|_{H^s}\mathrm{d} \tau\\
	&\lesssim\|\vphi_0\|_{H^s}+\|\vphi_{\tau}\|_{L^\infty_\tau([0,t];H^{s}_x)}\int_0^t\|\vphi_\tau\|_{6/(3-\gamma)}^2\mathrm{d}\tau.
	\end{split}
	\end{equation}
	Next we can employ Strichartz estimates following H\"older's inequality to obtain
	\begin{equation}\label{nrslg2}
	\begin{split}
	\|\vphi_{\tau}\|_{L^2_\tau([0,t];L^{6/(3-\gamma)}_x)}&\leq t^{\frac4{2-\gamma}} \|\vphi_{\tau}\|_{L^{4/\gamma}_\tau([0,t];L^{6/(3-\gamma)}_x)}\\
	&\lesssim t^{\frac4{2-\gamma}}\left(1+\|(\frac{\mu\lambda}{|\cdot|^\gamma}*|\vphi_\tau|^2)\vphi_\tau\|_{L^{2/(2-\gamma)}_\tau([0,t]; L^{6/(2\gamma+3)}_x)}\right)\\
	&\lesssim t^{\frac4{2-\gamma}},
	\end{split}
	\end{equation}
	where in the last step we applied Lemma \ref{fraccontract} together with mass conservation. Inserting \eqref{nrslg2} into \eqref{nrslg2} implies
	\begin{equation}\label{nrglobalslg}
	\|\vphi_t\|_{L^\infty_t([0,T];H^s_x)}\lesssim \|\vphi_0\|_{H^s}+T^{\frac4{2-\gamma}}	\|\vphi_t\|_{L^\infty_t([0,T];H^s_x)}.
	\end{equation} 
	A small caveat at this point is that we cannot directly apply Lemma \ref{fraccontract} for $\vphi_0$ is not necessarily in $H^{\gamma/2}(\R^3)$. Instead we have to use an improved version of Lemma \ref{fraccontract} involving Strichartz estimates as shown above. The details are left to the reader. 
	\par With similar estimates, one can show, that for $T>0$ small enough $u\mapsto J_\gamma(u)u$ is a locally Lipschitz map of $C([0,T];H^s(\R^3))$ into itself. Then one can run a standard fixed point argument to show local well-posedness. For global well-posedness, notice that in \eqref{nrglobalslg} we can choose $T>0$ independently of $\vphi_0$ so small that 
	$$\|\vphi_t\|_{L^\infty_t([0,T];H^s_x)}\lesssim \|\vphi_0\|_{H^s}.$$
	Iterating this inequality yields
	\begin{equation*}
	\|\vphi_\tau\|_{L^\infty_\tau([0,t];H^s_x)}\lesssim \lceil\frac{t}{T}\rceil \|\vphi_0\|_{H^s}\lesssim (t+1)\|\vphi_0\|_{H^s}.
	\end{equation*}
	This completes the proof of global well-posedness.

	\section{Approximating the Hartree equation by its regularized version \label{hartreeregapprox}}
	In here, we want to show a stronger statement than that of Lemma \ref{fracregapprox} in that we allow for any values $\gamma\in(0,3/2)$ and $\sigma\in[\gamma/2,1]$. Our idea is to follow the steps of \cite{MS} and modify them suitably. We start by proving the second estimate. Using Duhamel's formula for both $\vphi_t$ and $\vphi_t^{(\alpha)}$, we find
	\begin{equation}\label{gradestsplit}
	\begin{split}
		|(-\Delta)^{\frac{\gamma}{4}}(\vphi_t-\vphi_t^{(\alpha)})\|_2\lesssim\int_0^t\mathrm{d}\tau&\left\{\|(-\Delta)^{\frac{\gamma}{4}}(\frac1{|\cdot|^\gamma}*|\vphi_{\tau}|^2)(\vphi_\tau-\vphi_{\tau}^{(\alpha)})\|_2\right.\\
		&+\|(-\Delta)^{\frac{\gamma}{4}}(\frac\alpha{|\cdot|^\gamma(|\cdot|^\gamma+\alpha)}*|\vphi_{\tau}|^2)(\vphi_\tau-\vphi_{\tau}^{(\alpha)})\|_2\\
		&+\|(-\Delta)^{\frac{\gamma}{4}}(\frac\alpha{|\cdot|^\gamma(|\cdot|^\gamma+\alpha)}*|\vphi_{\tau}|^2)\vphi_\tau\|_2\\
		&+\|(-\Delta)^{\frac{\gamma}{4}}(\frac1{|\cdot|^\gamma+\alpha}*(|\vphi_{\tau}|^2-|\vphi_{\tau}^{(\alpha)}|^2))\vphi_\tau\|_2\\
		&\left.+\|(-\Delta)^{\frac{\gamma}{4}}(\frac1{|\cdot|^\gamma+\alpha}*(|\vphi_{\tau}|^2-|\vphi_{\tau}^{(\alpha)}|^2))(\vphi_\tau-\vphi_{\tau}^{(\alpha)})\|_2 \right\}.
	\end{split}
	\end{equation}
	The first term, we can estimate using the generalized Leibniz rule \ref{leibniz} together with the fact that $|\nabla|^{\alpha-3}u=|\cdot|^{-\alpha}*u$ by
	\begin{equation}\label{gradestsplit1}
		\begin{split}
			\|\frac1{|\cdot|^{3\gamma/2}}*|\vphi_{\tau}|^2\|_{6/\gamma}\|\vphi_{\tau}-\vphi_{\tau}^{(\alpha)}\|_{6/(3-\gamma)}+\|\frac1{|\cdot|^\gamma}*|\vphi_{\tau}|^2\|_\infty\|(-\Delta)^{\frac\gamma4}(\vphi_{\tau}-\vphi_{\tau}^{(\alpha)})\|_2\\
			\lesssim \nu^2\|(-\Delta)^{\frac\gamma4}(\vphi_{\tau}-\vphi_{\tau}^{(\alpha)})\|_2,
		\end{split}
	\end{equation}
	where in the second step we applied the weak Young's inequality followed by the Sobolev embedding.
	\par For the second term of \eqref{gradestsplit}, we also apply the Leibniz rule followed by the weak Young's inequality and the Sobolev inequality to find the upper bound
	\begin{equation}\label{gradestsplit2aid}
		\begin{split}
			\|\frac{1}{|\cdot|^{\gamma}}*|(-\Delta)^{\frac\gamma4}|\vphi_{\tau}|^2|\|_{\frac6\gamma}\|\vphi_{\tau}-\vphi_{\tau}^{(\alpha)}\|_{\frac6{3-\gamma}}+\|\frac{1}{|\cdot|^{\gamma}}*|\vphi_{\tau}|^2\|_{\infty}\|(-\Delta)^{\frac\gamma4}(\vphi_{\tau}-\vphi_{\tau}^{(\alpha)})\|_2\\
			\lesssim(\|(-\Delta)^{\frac\gamma4}|\vphi_{\tau}|^2\|_{\frac{6}{6-\gamma}}+\|\vphi_{\tau}\|_{\frac{6}{3-\gamma}}^2)\|(-\Delta)^{\frac\gamma4}(\vphi_{\tau}-\vphi_{\tau}^{(\alpha)})\|_2,
		\end{split}
	\end{equation}
	where we also employed the fact that the fractional Laplacian is translation invariant. Applying the Leibniz rule again followed by the Sobolev embedding implies
	$$\|(-\Delta)^{\frac\gamma4}|\vphi_{\tau}|^2\|_{\frac{6}{6-\gamma}}\lesssim\|(-\Delta)^{\frac\gamma4}\vphi_{\tau}\|_{\frac{6}{6-\gamma}}\|\vphi_{\tau}\|_{\frac{6}{3-\gamma}}\lesssim\|(-\Delta)^{\frac\gamma4}\vphi_{\tau}\|_2^2\lesssim\nu^2.$$
	Together with \eqref{gradestsplit2aid}, we find by applying the Sobolev embedding again that the second term of \eqref{gradestsplit} can be estimated by
	\begin{equation}\label{gradestsplit2}
		\nu^2\|(-\Delta)^{\frac\gamma4}(\vphi_{\tau}-\vphi_{\tau}^{(\alpha)})\|_2.
	\end{equation}
	For the third term of \eqref{gradestsplit}, we use the Leibniz rule together with the translation invariance of the Laplacian to obtain the upper bound
	\begin{equation}\label{gradestsplit3aid}
		\begin{split}
			\|\frac{\alpha^\varepsilon}{|\cdot|^{(1+\varepsilon)\gamma}}*|(-\Delta)^{\frac\gamma4}|\vphi_{\tau}|^2|\|_{\frac6\gamma}\|\vphi_{\tau}\|_{\frac6{3-\gamma}}+\|\frac{\alpha^\varepsilon}{|\cdot|^{(1+\varepsilon)\gamma}}*|\vphi_{\tau}|^2\|_{\infty}\|(-\Delta)^{\frac\gamma4}\vphi_{\tau}\|_2\\	\lesssim \alpha^\varepsilon(\|(-\Delta)^{\frac\gamma4}|\vphi_{\tau}|^2\|_{\frac{6}{6-(1+2\varepsilon)\gamma}}+\|\vphi_{\tau}\|_{\frac{6}{3-(1+\gamma)\varepsilon}}^2)\nu,
		\end{split}
	\end{equation}
	where we also employed the fact that the fractional Laplacian is translation invariant. Applying the Leibniz rule again together with the fact that $|\nabla|^{\alpha-3}u=|\cdot|^{-\alpha}*u$ on the first term, we find
	\begin{equation*}
	\begin{split}
	\|(-\Delta)^{\frac\gamma4}|\vphi_{\tau}|^2\|_{\frac{6}{6-(1+2\varepsilon)\gamma}}&\lesssim \|(-\Delta)^{(1+\varepsilon)\frac{\gamma}{4}}\vphi_{\tau}\|_2\|\frac1{|\cdot|^{3-\varepsilon\gamma/2}}*\vphi_{\tau}\|_{\frac6{3-(1+2\varepsilon)\gamma}}\\
	&\lesssim \|(-\Delta)^{(1+\varepsilon)\frac{\gamma}{4}}\vphi_{\tau}\|_2\|\vphi_\tau\|_{\frac6{3-(1+\varepsilon)\gamma}},\\
	&\lesssim\|(-\Delta)^{(1+\varepsilon)\frac{\gamma}{4}}\vphi_{\tau}\|_2^2,
	\end{split}
	\end{equation*}
	where we also applied the weak Young's inequality followed by the Sobolev embedding. By Lemma \ref{fracpersreg}, we have
	$$\|\vphi_{\tau}\|_{H^{(1+\varepsilon)\gamma/2}}\lesssim\e^{c\nu^2\tau}\|\vphi_0\|_{H^{(1+\varepsilon)\gamma/2}}.$$
	The last two inequalities together with \eqref{gradestsplit3aid}, imply after applying the Sobolev embedding again that the third term of \eqref{gradestsplit} can be estimated by
	\begin{equation}\label{gradestsplit3}
	\alpha^\varepsilon\e^{c\nu^2\tau}\|\vphi_0\|_{H^{(1+\varepsilon)\gamma/2}}^2\nu.
	\end{equation}
	Note that in the case $(1+\varepsilon)\gamma/2\leq\sigma$ and $\lambda<\lambda_{H,c}$ or $\mu=1$, Lemma \ref{fracpersreg} allows us to improve this bound to a time-independent one.
	For an upper bound of the fourth term, we again apply the generalized Leibniz rule to get
	\begin{equation}\label{gradestsplit4aid}
	\begin{split}
	\|\frac1{|\cdot|^\gamma}*\left|(-\Delta)^{\frac\gamma4}(|\vphi_{\tau}|^2-|\vphi_\tau^{(\alpha)}|^2)\right|\|_{\frac6\gamma}\|\vphi_{\tau}\|_{\frac6{3-\gamma}}+	\|\frac1{|\cdot|^\gamma}*||\vphi_{\tau}|^2-|\vphi_\tau^{(\alpha)}|^2|\|_{\infty}\|(-\Delta)^{\frac{\gamma}4}\vphi_{\tau}\|_2.\\
	\lesssim(\|(-\Delta)^{\frac\gamma4}(|\vphi_{\tau}|^2-|\vphi_\tau^{(\alpha)}|^2)\|_{\frac6{6-\gamma}}+\||\vphi_{\tau}|^2-|\vphi_\tau^{(\alpha)}|^2\|_{\frac3{3-\gamma}})\|(-\Delta)^{\frac\gamma4}\vphi_{\tau}\|_2
	\end{split}	
	\end{equation}
	Next we use the fact that we can write
	$$|\vphi_{\tau}|^2-|\vphi_{\tau}^{(\alpha)}|^2\;=\;(\vphi_{\tau}-\vphi_{\tau}^{(\alpha)})(2\bar{\vphi}_{\tau}+(\bar{\vphi}_{\tau}^{(\alpha)}-\bar{\vphi}_{\tau}))+c.c.$$
	together with the Leibniz rule and the Sobolev embedding on \eqref{gradestsplit4aid} to find as an upper bound of the fourth term in \eqref{gradestsplit}
	\begin{equation}\label{gradestsplit4}
		(\nu+\|(-\Delta)^{\frac\gamma4}(\vphi_{\tau}-\vphi_{\tau}^{(\alpha)})\|_2)\|(-\Delta)^{\frac\gamma4}(\vphi_{\tau}-\vphi_{\tau}^{(\alpha)})\|_2\nu.
	\end{equation}
	For the fifth term, we can basically repeat the same steps as for the fourth term to obtain
	\begin{equation}\label{gradestsplit5}
		(\nu+\|(-\Delta)^{\frac\gamma4}(\vphi_{\tau}-\vphi_{\tau}^{(\alpha)})\|_2)\|(-\Delta)^{\frac\gamma4}(\vphi_{\tau}-\vphi_{\tau}^{(\alpha)})\|_2^2.
	\end{equation}
	So, inserting \eqref{gradestsplit1}-\eqref{gradestsplit5} in \eqref{gradestsplit}, we obtain
	\begin{equation}
	\begin{split}
		\|(-\Delta)^{\frac{\gamma}{4}}(\vphi_t-\vphi_t^{(\alpha)})\|_2\lesssim\int_0^t\mathrm{d}\tau&\left\{\alpha^\varepsilon\e^{c\nu^2\tau}\|\vphi_0\|_{H^{(1+\varepsilon)\gamma/2}}^2\nu+\nu^2\|(-\Delta)^{\frac{\gamma}{4}}(\vphi_\tau-\vphi_\tau^{(\alpha)})\|_2\right.\\
		&\left.+\nu\|(-\Delta)^{\frac{\gamma}{4}}(\vphi_\tau-\vphi_\tau^{(\alpha)})\|_2^2+\|(-\Delta)^{\frac{\gamma}{4}}(\vphi_\tau-\vphi_\tau^{(\alpha)})\|_2^3\right\}.
	\end{split}
	\end{equation}
	Employing Gronwall's inequality yields the result.
	\par Next we estimate the $L^2$-distance of the solution $\vphi_t$ from its regularized version $\vphi_t^{(\alpha)}$. As in \cite{MS}, we start with the estimate
	\begin{equation}\label{l2estsplit}
	\begin{split}
		\frac12\partial_t\|\vphi_t-\vphi_t^{(\alpha)}\|_2^2&=-\partial_t\mathrm{Re}\scp{\vphi_t}{\vphi_t^{(\alpha)}}\\
		&=\lambda\mathrm{Im}\left\{\scp{\vphi_t}{\left(\frac{\alpha}{|\cdot|^\gamma(|\cdot|^\gamma+\alpha)}*|\vphi_t|^2\right)(\vphi_t^{(\alpha)}-\vphi_t)}\right.\\
		&\qquad+\left.\scp{\vphi_t}{\left(\frac1{|\cdot|^\gamma+\alpha}*(|\vphi_t|^2-|\vphi_t^{(\alpha)}|^2)\right)(\vphi_t^{(\alpha)}-\vphi_t)}\right\}\\
		&\lesssim\alpha\scp{|\vphi_t||\vphi_t^{(\alpha)}-\vphi_t|}{\frac{1}{|\cdot|^{2\gamma}}*|\vphi_t|^2}\\
		&\qquad+\scp{|\vphi_t||\vphi_t^{(\alpha)}-\vphi_t|}{\frac1{|\cdot|^\gamma}*||\vphi_t|^2-|\vphi_t^{(\alpha)}|^2|}.
	\end{split}
	\end{equation}
	In order to bound the first term, we use the Hardy-Littlewood-Sobolev inequality followed by H\"older's inequality again to get
	\begin{equation*}
		\begin{split}
			\alpha\scp{|\vphi_t||\vphi_t^{(\alpha)}-\vphi_t|}{\frac{1}{|\cdot|^{2\gamma}}*|\vphi_t|^2}&\lesssim\alpha\|\vphi_t(\vphi_t^{(\alpha)}-\vphi_t)\|_{\frac3{3-\gamma}}\|\vphi_t\|_{\frac6{3-\gamma}}^2\\
			&\lesssim\alpha\|\vphi_t^{(\alpha)}-\vphi_t\|_{\frac6{3-\gamma}}\|\vphi_t\|_{\frac6{3-\gamma}}^3\\
			&\lesssim\alpha \nu^3\|(-\Delta)^{\frac{\gamma}{4}}(\vphi_t^{(\alpha)}-\vphi_t)\|_2,
		\end{split}
	\end{equation*}
	where in the last step we used the Sobolev embedding. Using the above result, we obtain 
	\begin{equation}\label{l2estsplit1}
			\alpha\scp{|\vphi_t||\vphi_t^{(\alpha)}-\vphi_t|}{\frac{1}{|\cdot|^{2\gamma}}*|\vphi_t|^2}\lesssim C\alpha^{1+\varepsilon}
	\end{equation}
	for some constant $C=C_{\nu,\|\vphi_0\|_{H^{(1+\varepsilon)\gamma/2}},T}$. For the remaining term in \eqref{l2estsplit}, we apply the Hardy-Littlewood-Sobolev inequality followed by H\"older's inequality together with the triangle inequality to obtain
	\begin{equation*}
		\begin{split}
			\scp{|\vphi_t||\vphi_t^{(\alpha)}-\vphi_t|}{\frac1{|\cdot|^\gamma}*||\vphi_t|^2-|\vphi_t^{(\alpha)}|^2|}&\lesssim\|\vphi_t(\vphi_t^{(\alpha)}-\vphi_t)\|_{\frac6{6-\gamma}}\|(\vphi_t-\vphi_t^{(\alpha)})(\vphi_t+\vphi_t^{(\alpha)})\|_{\frac6{6-\gamma}}\\
			&\lesssim\|\vphi_t\|_{\frac6{3-\gamma}}(\|\vphi_t\|_{\frac6{3-\gamma}}+\|\vphi_t-\vphi_t^{(\alpha)}\|_{\frac6{3-\gamma}})\|\vphi_t-\vphi_t^{(\alpha)}\|_2^2\\
			&\lesssim\nu(\nu+\|(-\Delta)^{\frac{\gamma}{4}}(\vphi_t-\vphi_t^{(\alpha)})\|_2)\|\vphi_t-\vphi_t^{(\alpha)}\|_2^2,
		\end{split}
	\end{equation*}
	where in the last step we applied Sobolev's inequality. By the above, there is $C=C_{\nu,\|\vphi_0\|_{H^{(1+\varepsilon)\gamma/2,T}}}$ such that $\|(-\Delta)^{\frac{\gamma}{4}}(\vphi_t-\vphi_t^{(\alpha)})\|_2\leq C$. This yields
	\begin{equation}\label{l2estsplit2}
		\scp{|\vphi_t||\vphi_t^{(\alpha)}-\vphi_t|}{\frac1{|\cdot|^\gamma}*||\vphi_t|^2-|\vphi_t^{(\alpha)}|^2|}\leq C\|\vphi_t-\vphi_t^{(\alpha)}\|_2^2.
	\end{equation}
	Inserting \eqref{l2estsplit1} and \eqref{l2estsplit2} in \eqref{l2estsplit}, we arrive at
	\begin{equation*}
		\partial_t\|\vphi_t-\vphi_t^{(\alpha)}\|_2^2\lesssim C\left(\alpha^{1+\varepsilon}+\nu^2\|\vphi_t-\vphi_t^{(\alpha)}\|_2^2\right),
	\end{equation*}
	which, using Gronwall's inequality, gives the desired result.

\end{document}